\documentclass[11pt]{article}

\usepackage[T1]{fontenc}
\usepackage{lmodern}
\usepackage[margin=1in]{geometry}
\usepackage{amsmath,amssymb,amsthm}
\usepackage{array}
\usepackage{microtype}
\usepackage{placeins}
\usepackage{algorithm}
\usepackage{algpseudocode}

\usepackage{tikz}
\usetikzlibrary{arrows.meta,positioning,calc,fit,backgrounds}
\usepackage{hyperref}
\hypersetup{
  hidelinks,
  pdftitle={A Simple Algorithm Breaking the 1/3 Barrier for Approximate Nash Equilibria in Bimatrix Games},
  pdfauthor={Dongchen Li and Hanyu Li},
  pdfkeywords={approximate Nash equilibria, bimatrix games, 1/3 barrier, one-third barrier, polynomial-time algorithms, strategy mixing}
}
\newtheorem{theorem}{Theorem}
\newtheorem{lemma}[theorem]{Lemma}
\newtheorem{proposition}[theorem]{Proposition}
\theoremstyle{remark}
\newtheorem*{remark}{Remark}
\theoremstyle{plain}
\newcommand{\sm}{\mathsf{SimpleMixing}}
\newcommand{\om}{\mathsf{OptimalMixing}}
\newcommand{\spoint}{\mathsf{StationaryPoint}}
\newcommand{\br}{\mathsf{BestResponse}}
\newcommand{\um}{\mathsf{UniformMixing}}
\newcommand{\codefootref}{\hyperref[fn:code]{\textsuperscript{\ref*{fn:code}}}}
\newcommand{\conv}{\operatorname{conv}}
\DeclareMathOperator{\supp}{supp}
\DeclareMathOperator{\suppmax}{suppmax}

\title{A Simple Algorithm Breaking the $1/3$ Barrier\\
for Approximate Nash Equilibria in Bimatrix Games}
\author{
  Dongchen Li\thanks{Email: \href{mailto:dongchen.li@connect.hku.hk}{\texttt{dongchen.li@connect.hku.hk}}}\\
  {\small School of Computing and Data Science}\\
  {\small The University of Hong Kong}
  \and
  Hanyu Li\thanks{Email: \href{mailto:lhydave@pku.edu.cn}{\texttt{lhydave@pku.edu.cn}}}\\
  {\small CFCS, School of Computer Science}\\
  {\small Peking University}
}
\date{}

\begin{document}
\maketitle

\begin{abstract}
The PPAD-completeness of computing Nash equilibria in bimatrix games
has driven extensive research into polynomial-time algorithms
for $\epsilon$-approximate Nash equilibria.
Determining the strongest approximation guarantee achievable in
polynomial time is a fundamental question in the complexity of
approximate Nash equilibria.
Despite sustained efforts, breaking the $1/3$ barrier has long been
regarded as a difficult open problem.
In this paper, we give an affirmative answer by proposing a simple,
symmetric deterministic polynomial-time algorithm that has no tunable
hyperparameters and achieves an approximation guarantee of approximately
$0.30954+\delta$.
The main idea is to enrich the optimization framework of Tsaknakis
and Spirakis with building blocks drawn from our work on strategy
mixing and automated approximation analysis, yielding a new algorithmic
structure that provides a conceptual explanation for the improved
guarantee.
\end{abstract}

\section{Introduction}
\label{sec:introduction}

% Paragraph 1: the historical computational question.
\emph{Nash equilibrium} (NE), introduced by Nash~\cite{nash1951}, is a foundational
solution concept in game theory. It describes a strategy profile in which no
player can increase their payoff by changing their strategy unilaterally.
Nash's existence theorem guarantees an equilibrium in mixed strategies for
every finite normal-form game, but computing one has remained a persistent algorithmic
challenge. A major breakthrough came with the complexity results of Daskalakis, Goldberg, and
Papadimitriou~\cite{daskalakis2009} and Chen, Deng, and Teng~\cite{chen2009},
which established $\mathsf{PPAD}$-completeness even for two-player games,
providing strong evidence against polynomial-time algorithms for computing
exact Nash equilibria.

% Paragraph 2: the polynomial-time approximation question.
Approximate Nash equilibria offer a way to relax this computational problem.
For payoffs normalized to $[0,1]$, an \emph{$\epsilon$-approximate Nash equilibrium}
($\epsilon$-NE) allows each player to gain at most $\epsilon$ by a unilateral
deviation. Thus an exact NE is the special case $\epsilon=0$.
Lipton, Markakis, and Mehta~\cite{lipton2003} established
quasi-polynomial-time computation for every fixed $\epsilon>0$, with improved
small-support bounds subsequently given by Babichenko, Barman, and
Peretz~\cite{babichenko2017}. Under the exponential-time hypothesis for
$\mathsf{PPAD}$, Rubinstein~\cite{rubinstein2016} showed that
quasi-polynomial time is necessary for sufficiently small constant errors.
Thus, determining the smallest error achievable in
polynomial time remains a fundamental open question in algorithmic game
theory.

% Polynomial-time approximation history, including the optimization approach.
Most algorithmic work on this question has focused on two-player games,
or bimatrix games~\cite{survey2024}.
Early polynomial-time algorithms achieved guarantees of $0.75$ by Kontogiannis,
Panagopoulou, and Spirakis~\cite{kontogiannis2009} and $0.5$ by Daskalakis,
Mehta, and Papadimitriou~\cite{daskalakisnote2009}. The guarantee was further
improved to $0.382+\delta$ by Daskalakis, Mehta, and
Papadimitriou~\cite{daskalakisprogress2007} and $0.364$ by Bosse, Byrka, and
Markakis~\cite{bbm2010}, before Tsaknakis and Spirakis~\cite{ts2008}
achieved $0.3393+\delta$ through an optimization approach. This guarantee
remained the state of the art for fifteen years, until Deligkas, Fasoulakis,
and Markakis~\cite{dfm2023} reached $1/3+\delta$ by refining the
Tsaknakis--Spirakis optimization framework. Here, $\delta>0$ denotes an arbitrarily small
accuracy parameter.

% Paragraph 5: documented expectations concerning one third.
Despite these advances, going below $1/3$ has long been viewed as a difficult
methodological challenge~\cite{survey2024}. Tsaknakis and Spirakis suggested
that a radically different, possibly randomized, approach might be
needed~\cite{ts2008}. Moreover, the guarantees of the
Tsaknakis--Spirakis and Deligkas--Fasoulakis--Markakis algorithms are
tight~\cite{tightness2023}, so sharper analysis alone cannot improve them.
Achieving a polynomial-time approximation guarantee strictly below $1/3$
has remained a central open problem~\cite{survey2024}.

% Paragraph 6: answer the open problem before explaining the methodology.
In this paper, we resolve this open problem with a deterministic
polynomial-time algorithm that improves the approximation guarantee from
$1/3+\delta$ to approximately $0.30954+\delta$.

\begin{theorem}
\label{thm:main}
There is a deterministic algorithm that, given a bimatrix game with rational
payoffs in $[0,1]$ and a rational $\delta>0$, computes an
$(r_*+\delta)$-approximate Nash equilibrium, where $r_*\approx 0.30954$
is an absolute constant. The running time is polynomial in the input encoding
length and $1/\delta$.
\end{theorem}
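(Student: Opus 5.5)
The plan follows the Tsaknakis--Spirakis (TS) framework and adds a post-processing stage built from the building blocks named in the preamble: $\spoint$, $\br$, $\sm$, $\om$ and $\um$. Normalize payoffs to $[0,1]$, and for a profile $(x,y)$ write
\[
f_R(x,y)=\max_i (Ry)_i - x^\top R y,\qquad
f_C(x,y)=\max_j (C^\top x)_j - x^\top C y,\qquad
f=\max(f_R,f_C).
\]

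\textbf{Step 1 (stationary point).} Run the TS descent, i.e.\ $\spoint$, which in time polynomial in the encoding length and $1/\delta$ returns a $\delta$-approximate stationary point $(x^*,y^*)$ of $f$. We may assume $f_R=f_C$ there, since otherwise the stationarity analysis is easier. The TS duality argument then supplies $\rho\in[0,1]$ and dual strategies $(w,z)$, where $w$ is supported on best responses of the row player to $y^*$ and $z$ on best responses of the column player to $x^*$. These satisfy the local-optimality inequality $\rho\,(\text{terms in } w,z)+(1-\rho)(\ldots)\ge f(x^*,y^*)-\delta$. This is exactly what yields the familiar bound: the better of $(x^*,y^*)$ and the TS mixture $\bigl(\tfrac{x^*+\mu w}{1+\mu},\tfrac{y^*+\lambda z}{1+\lambda}\bigr)$ is a $(0.3393+\delta)$-NE.

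\textbf{Step 2 (enlarged candidate family).} Rather than fixing TS's specific $\mu,\lambda$, consider a richer finite family of profiles in $\conv\{x^*,w,\text{pure best responses}\}\times\conv\{y^*,z,\ldots\}$. This family includes the DFM-type candidates and the ``swap'' profiles $(w,z)$ and $(x^*,z)$. For each candidate, $\om$ computes the best mixing weights exactly. This is a one-dimensional (or two-dimensional) piecewise-bilinear optimization that can be solved in polynomial time by enumerating breakpoints. The algorithm outputs the candidate with smallest $f$, so it is symmetric and has no tunable parameters. Its running time is polynomial because the family has constant size.

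\textbf{Step 3 (reduction to a finite-dimensional inequality).} Each candidate's regret can be bounded above by a function of a constant number of scalar quantities. These include $f^*=f(x^*,y^*)$, $\rho$, $w^\top Ry^*$, $w^\top Rz$, $x^{*\top}Rz$, and their $C$-analogues, all confined to $[0,1]$ and linked by the linear constraints from Step 1. The guarantee then reduces to showing
\[
\max_{\text{params}}\ \min_{\text{candidates}}\ \text{bound}\le r_*.
\]
I would attack this with the automated approximation analysis. Partition the parameter box into cells, and on each cell bound the min of bilinear/rational expressions by interval arithmetic or LP relaxations, refining adaptively. $r_*\approx0.30954$ is defined as the value of this max-min program, and its tightness is certified by exhibiting the worst-case parameter point.

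\textbf{Main obstacle.} The hardest step is Step 3. The worst-case program is nonconvex, and the TS analysis is tight at $0.3393$ while DFM is tight at $1/3$. The new candidates must therefore strictly help precisely at those tight configurations. I would first identify the tight instances of \cite{tightness2023}, check by hand which added candidate (most likely the $\sm$/$\um$ mixture involving $z$ and $w$ against $x^*,y^*$) strictly improves there, and only then run the computer-assisted certification globally. The additive $\delta$ is absorbed by Lipschitz continuity of all the bounds.
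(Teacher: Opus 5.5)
\paragraph{The gap: no algorithm and no mechanism below $1/3$.} Your plan never specifies an algorithm, and it is missing the mechanism that actually gets the guarantee below $1/3$.

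The ``richer finite family'' of Step~2 is built from the TS/DFM toolbox: stationary and dual strategies, best responses, and their mixtures, with the ``$\ldots$'' left open. A best response only gives comparisons with one quantified strategy against one fixed opponent strategy. The tightness results of \cite{tightness2023} show that constructions of this kind admit instances at or approaching $1/3$. So ``check by hand which added candidate strictly improves'' is not a step of the proof; it is its entire content.

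You name $\sm$ but never use what it provides. $\sm$ fixes one player's strategy and minimizes $f$ over the other's by an LP. Its optimal dual gives a strategy $y^1$ and weight $\omega$ with $f(x^1,x^2)\le\omega[u_1(y^1,x^{2\prime})-u_1(x^1,x^{2\prime})]+(1-\omega)f_2(x^1,x^{2\prime})$ for \emph{every} $x^{2\prime}$. This single inequality couples both players' payoffs.

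The paper's algorithm runs in two $\sm$ rounds. It applies $\sm$ to $y_s^1,y_s^2$, then to the returned dual strategies $y_1^1,y_1^2$. It outputs the best of the four primal candidates and the $\om$ output on the stationary rectangle.

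The analysis runs entirely on these duals:
\begin{itemize}
\item A failed run forces all five weights into $(r,1-r)$, which turns the duals into best responses.
\item Instantiating a first-round and a second-round dual inequality at a common strategy, and eliminating the shared payoff, bounds $B_1$ at the mixture $y_E^2$. Instantiating the other second-round dual there gives the cross-round condition $E_1\ge0$.
\item Stationarity combined with the first-round dual gives lower bounds on the Jensen gaps. These make $\om$ effective in the interior of the rectangle.
\end{itemize}
Without a construction of this kind, Step~3 has no concrete program to certify. Nothing in your argument supports $0.30954$, or any constant below $1/3$.

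\paragraph{The certification step.} If $r_*$ is defined as the exact max--min value, adaptive interval subdivision cannot prove the bound $r_*$. The margin vanishes at the extremal configuration, so subdivision only yields $r_*+\epsilon$. The paper instead defines $r_*$ algebraically, as the Krawczyk-isolated root of $N_1=N_2=H_{\mathrm{root}}=0$. It then excludes every $r>r_*$ by exact reductions:
\begin{itemize}
\item eliminating $g$, then $\alpha$ and $\gamma$;
\item moving points to the switch boundary $v=v_0$;
\item finishing with a resultant argument in which $R_{\mathrm{resultant}}(r_*,\rho)$ carries the double factor $(\rho-\rho_*)^2$.
\end{itemize}

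\paragraph{The tightness claim.} A worst-case point of the relaxed scalar system says nothing about tightness of the algorithm unless some game realizes it. The paper claims only an upper bound.
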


% Paragraph 7: LegoNE as the methodological starting point.
Our methodology follows LegoNE~\cite{legone2026,computeraided2026}, a framework
for systematically understanding the construction and analysis of approximate
NE algorithms. LegoNE organizes algorithmic and proof insights into composable
building blocks and combines domain knowledge, large language
models (LLMs),
and computer-aided analysis of worst-case guarantees. It improved the
three-player guarantee from $0.6+\delta$ to $0.5+\delta$~\cite{legone2026,computeraided2026},
using a construction beyond the previously established recursive extension
paradigm~\cite{bbm2010,hemon2008}.

% Codex's role in discovery and analysis, and the connection to earlier mixing work.
Following this perspective, we used OpenAI Codex to explore new algorithms for
bimatrix games. Codex proposed and refined candidate algorithms,
formulated their building-block properties, and developed the proof of the
approximation guarantee. In doing so, Codex used $\sm$~\cite{automating2025} and $\om$~\cite{mixing2025} as building blocks for algorithm design. These
operations were originally formalized to understand mixing and automate
approximation analysis. Combining them with the framework of Tsaknakis and
Spirakis led to a fixed, symmetric construction. The
analysis extends the LegoNE methodology to incorporate the properties of
these optimization operations.

\paragraph{Historical note.}
This research was initiated after an email from Davit Gondauri and subsequent
correspondence concerning his preprint~\cite{gondauri2026}, which reports an
approximation guarantee of \mbox{$1/3-7.5\times10^{-45}$}. At the level of
algorithmic construction, it combines mixing ideas from Bosse, Byrka, and
Markakis~\cite{bbm2010} with the optimization framework of Deligkas,
Fasoulakis, and Markakis~\cite{dfm2023}. The correspondence and our
examination of the manuscript with Codex motivated us to seek a conceptually
new construction yielding a substantial improvement, beyond the very small
reported separation from $1/3$. Our algorithm and approximation analysis were
developed independently.

\section{Preliminaries}
\label{sec:preliminaries}

We first specify the game and the notation used throughout the paper.
The objective is the maximum regret of the two players: minimizing this
quantity gives an approximate Nash equilibrium. We also record the
best-response and convexity properties used in the construction and
its analysis.

For a positive integer $n$, write $[n]=\{1,\ldots,n\}$.

A bimatrix game is specified by two \emph{payoff matrices}
$R,C\in[0,1]^{n_1\times n_2}$, with $n_1$ rows and $n_2$ columns.
Player~1 is the \emph{row player} and
player~2 is the \emph{column player}. A \emph{pure strategy} chooses one
row or column; if the players choose $k\in[n_1]$ and $\ell\in[n_2]$, their payoffs are
$R_{k\ell}$ and $C_{k\ell}$. Thus player~$i\in[2]$ has $n_i$ pure
strategies. A mixed strategy is a probability distribution
over pure strategies. We write $(x^i)_k$ for the probability assigned
to pure strategy $k$ by a mixed strategy $x^i$.
For any positive integer $n$, the \emph{probability simplex} is
\[
 \Delta_n=\left\{z\in\mathbb R_{\ge0}^{n}:
                 \sum_{k\in[n]}z_k=1\right\}.
\]
Thus the strategy set of player~$i$ is $\Delta_{n_i}$.
The \emph{support} of a mixed strategy $z\in\Delta_n$ is
\[
 \supp(z)=\{k\in[n]:z_k>0\}.
\]
The $k$th \emph{standard basis vector} $e_k^i\in\mathbb R^{n_i}$
represents the pure strategy that chooses $k$ with probability one:
$(e_k^i)_k=1$ and $(e_k^i)_\ell=0$ for $\ell\ne k$.
A strategy profile is
$\boldsymbol{x}=(x^1,x^2)$, where $x^i\in\Delta_{n_i}$. Its
\emph{expected payoffs} are
\[
 u_1(\boldsymbol{x})=(x^1)^{\mathsf T}Rx^2,
 \qquad
 u_2(\boldsymbol{x})=(x^1)^{\mathsf T}Cx^2.
\]
We also write $u_i(x^1,x^2)$ for $u_i(\boldsymbol{x})$.
Superscripts on strategies identify the player. Subscripts on named
strategies distinguish outputs of different operations or rounds;
coordinates are written with the subscript outside parentheses, as in
$(x^i)_k$.

For a vector $v\in\mathbb R^n$, $\suppmax(v)$ denotes the index set
of its maximum entries~\cite{tightness2023}. For $\delta\ge0$, we
also write $\suppmax_\delta(v)$ for the indices whose entries are
within $\delta$ of the maximum:
\[
 \begin{aligned}
  \suppmax(v)&=\{k\in[n]:v_k=\max_{\ell\in[n]}v_\ell\},\\
  \suppmax_\delta(v)&=\{k\in[n]:v_k\ge\max_{\ell\in[n]}v_\ell-\delta\}.
 \end{aligned}
\]
Thus $\suppmax_0(v)=\suppmax(v)$.

We write $x^{i\prime}\in\Delta_{n_i}$ for an arbitrary comparison
strategy of player~$i$, and
$\boldsymbol{x}'=(x^{1\prime},x^{2\prime})$ for a comparison profile.
The prime marks a comparison strategy, while the superscript identifies
the player.

Following the notation of LegoNE~\cite{legone2026}, the \emph{regret} of
each player and the \emph{maximum regret} are
\begin{align}
 f_1(\boldsymbol{x})
 &=\max_{x^{1\prime}\in\Delta_{n_1}}u_1(x^{1\prime},x^2)-u_1(\boldsymbol{x}),\notag\\
 f_2(\boldsymbol{x})
 &=\max_{x^{2\prime}\in\Delta_{n_2}}u_2(x^1,x^{2\prime})-u_2(\boldsymbol{x}),\notag\\
 f(\boldsymbol{x})&=\max\{f_1(\boldsymbol{x}),f_2(\boldsymbol{x})\}.
 \label{eq:regret}
\end{align}
For $\epsilon\ge0$, the definition of an $\epsilon$-approximate Nash
equilibrium ($\epsilon$-NE) can therefore be written as
\[
 \boldsymbol{x}\text{ is an }\epsilon\text{-NE}
 \quad\Longleftrightarrow\quad f(\boldsymbol{x})\le\epsilon.
\]
In particular, $\boldsymbol{x}$ is an exact NE if and only if
$f_1(\boldsymbol{x})=f_2(\boldsymbol{x})=0$.
A \emph{best response} attains the corresponding maximum
in~\eqref{eq:regret}; it can be found by comparing pure-strategy payoffs.
The vectors $Rx^2$ and $C^{\mathsf T}x^1$ give the pure-strategy
payoffs of players~1 and~2, respectively. Hence $x^{1\prime}\in\Delta_{n_1}$
and $x^{2\prime}\in\Delta_{n_2}$ are best responses to $x^2$ and $x^1$,
respectively, if and only if
\[
 \supp(x^{1\prime})\subseteq\suppmax(Rx^2),
 \qquad
 \supp(x^{2\prime})\subseteq\suppmax(C^{\mathsf T}x^1).
\]
For a fixed strategy of one player, both regrets, and hence $f$, are
convex functions of the other player's strategy. They need not be jointly convex in the two strategies.

A \emph{convex combination} of strategies is a weighted sum with
nonnegative coefficients summing to one. We write $\conv S$ for the
\emph{convex hull} of a finite set $S=\{s_j:j\in[m]\}$, that is,
\[
 \conv S=\left\{\sum_{j\in[m]}p_j s_j:p\in\Delta_m\right\}.
\]
For two strategies of the same player, these are the mixtures
$(1-p)x^i+p x^{i\prime}$ with $p\in[0,1]$.

\section{The LegoNE framework}
\label{sec:legone-review}

LegoNE provides a symbolic language for composing approximate Nash
equilibrium algorithms and an analyzer for deriving their approximation
guarantees~\cite{legone2026,computeraided2026}. We review these two
components, following~\cite{legone2026}. The
Daskalakis--Mehta--Papadimitriou (DMP) $1/2$-NE
algorithm~\cite{daskalakisnote2009} illustrates both the construction
of a program and its analysis. This viewpoint will also be used to
present and analyze our algorithm in Sections~\ref{sec:algorithm}
and~\ref{sec:analysis}.

\subsection{Language for algorithm design}
\label{sec:legone-blocks}

The LegoNE language specifies algorithms for games with a fixed number
of players by composing high-level building blocks. These blocks
encapsulate operations from the literature, including best responses,
mixing, linear programming, and stationary-point computation. An
algorithm is a sequence of assignments: each call uses previously
available variables as inputs and introduces new variables for its
outputs. The program ends by returning one strategy for each player.
This gives a structured design space in which candidate algorithms
are formed by choosing blocks, their order, and their inputs.

We fix the game throughout a construction. The blocks used in this
paper have strategy inputs and outputs. Their accuracy parameters
are fixed throughout a construction and specified when the blocks
are introduced. Their computational guarantees
are stated when the blocks are introduced. A block also has a
mathematical specification of the properties guaranteed by its
outputs. The analyzer uses these specifications to evaluate a
composition of blocks.

For the DMP example, $\mathsf{Random1}$ supplies an initial row
strategy, $\br$ returns a best response to its input strategy
for the other player, and
$\um$ returns the equal mixture of its input strategies.
In particular,
$\um(x^i,x^{i\prime})=(x^i+x^{i\prime})/2$ for
$x^i,x^{i\prime}\in\Delta_{n_i}$.
Algorithm~\ref{alg:dmp} first constructs a column best response,
then a row best response, and finally mixes the two row strategies.
All the calls in this example take polynomial time.

\begin{algorithm}[htbp]
\caption{The DMP $1/2$-NE algorithm}
\label{alg:dmp}
\begin{algorithmic}[1]
 \Require A bimatrix game.
 \State $x^1=\mathsf{Random1}()$
 \State $y^2=\br(x^1)$
 \State $y^1=\br(y^2)$
 \State $z^1=\um(x^1,y^1)$
 \State \Return $(z^1,y^2)$
\end{algorithmic}
\end{algorithm}

Write $\boldsymbol z=(z^1,y^2)$ for its output. The algorithm uses
only the stated operations; its approximation guarantee is derived
from their properties below. More generally, this modular language
allows new compositions to be proposed and evaluated using the same
library of building blocks.

\subsection{Automated approximation analysis}
\label{sec:legone-instantiation}

The analysis must establish a bound on the output regret for every
game, with payoff matrices of arbitrary size and a continuum of mixed
strategies. LegoNE first translates the program into logical
properties. It then applies two abstraction principles,
\emph{instantiation} and \emph{forgetting}, to turn this proof
obligation into a fixed-size constrained optimization problem.
We describe the translation before explaining these two principles.

\paragraph{Logic encoding.}
Under Floyd--Hoare semantics, each line is represented by the
properties it guarantees. For example, the row best-response call
in Algorithm~\ref{alg:dmp} has the \emph{logic encoding}
\begin{equation}
 \phi[y^1=\br(y^2)]:\quad
 \forall x^{1\prime}\in\Delta_{n_1},\,
 u_1(x^{1\prime},y^2)\le u_1(y^1,y^2).
 \label{eq:dmp-properties}
\end{equation}
The column best-response call has the analogous encoding.
The $\um$ call is encoded by its behavior on payoff and regret
terms: for every $x^{2\prime}\in\Delta_{n_2}$, it guarantees
\[
 \begin{aligned}
  U(z^1,x^{2\prime})
   &=\tfrac12 U(x^1,x^{2\prime})+\tfrac12 U(y^1,x^{2\prime}),\\
  f_2(z^1,x^{2\prime})
   &\le\tfrac12 f_2(x^1,x^{2\prime})+\tfrac12 f_2(y^1,x^{2\prime}).
 \end{aligned}
\]
Here $U$ ranges over the payoff functions and their linear combinations,
as in the LegoNE encoding; the second line is the convexity property
of the other player's regret. Thus the mixing step supplies the
properties used in its analysis, in addition to defining a strategy.
The initialization imposes no further property on $x^1$.

The encoding of the whole program $\Gamma$ combines these line
encodings with \emph{inherent formulas}, denoted by $\phi_0$:
\[
 \phi[\Gamma]
 =\phi_0\wedge\phi[\text{line }1]\wedge\cdots
          \wedge\phi[\text{line }k],
\]
where $k$ is the number of building-block calls. The inherent formulas
include payoff and regret bounds in $[0,1]$, the definition of regret,
and the bounds relating a best-response payoff to the payoff of any
strategy and to~$1$. The proof goal is
\[
 \phi[\Gamma]\quad\Longrightarrow\quad
 f(\boldsymbol{x}_{\mathrm{out}})\le b
\]
for all payoff functions and all values of the strategy variables in
the program, where $\boldsymbol{x}_{\mathrm{out}}$ is its returned
profile and $b$ is the proposed approximation bound. At this stage the formulas still quantify over strategies
and payoff functions, so they are not a finite real-variable
optimization problem.

\paragraph{Instantiation.}
The analyzer substitutes each universally quantified strategy
variable with all strategy variables of the same player appearing
in the program. For the row best-response property
in~\eqref{eq:dmp-properties}, these are $x^1,y^1,z^1$. Besides the
trivial substitution $x^{1\prime}=y^1$, this gives
\[
 u_1(x^1,y^2)\le u_1(y^1,y^2),\qquad
 u_1(z^1,y^2)\le u_1(y^1,y^2).
\]
Universally quantified payoff variables are instantiated with the
payoff functions appearing in the program.

There is also an instantiation using the maximum-payoff operator.
When a quantified strategy occurs only once, in a payoff term, that
term can be replaced by its maximum over the strategy. For the same
best-response property, this gives
\[
 \max_{x^{1\prime}\in\Delta_{n_1}}u_1(x^{1\prime},y^2)
 \le u_1(y^1,y^2).
\]
The inherent maximum-payoff bound supplies the reverse inequality,
so the two terms are equal. Applying the corresponding procedures
to the other line encodings and to $\phi_0$ produces finitely many
constraints involving payoff, regret, and maximum-payoff terms.
If a block property includes an auxiliary real variable, one value
of that variable is shared across all instantiations of the property.

\paragraph{Forgetting.}
The instantiated constraints still involve unknown payoff functions.
Forgetting renames every distinct payoff, regret, and maximum-payoff
term as a symbolic real variable, consistently using the same name
for every occurrence of the same term. For example,
\[
 u_1(x^1,y^2)\mapsto v_{1,x^1,y^2},\qquad
 u_1(y^1,y^2)\mapsto v_{1,y^1,y^2},
\]
so their comparison becomes
$v_{1,x^1,y^2}\le v_{1,y^1,y^2}$.
Each instantiated constraint is rewritten using these new symbols.
Quantification over payoff functions and strategies is replaced by
quantification over these real variables. The functional origin of
the terms is forgotten; their encoded arithmetic relations remain.

Auxiliary existential real variables in the encodings appear
as variables of the constrained program, with their ranges and
relations unchanged. Let $\boldsymbol v$ collect these variables and
the renamed terms, and write $\phi'[\Gamma]$ for the resulting real
constraints. The same renaming turns the output maximum regret into
an arithmetic expression $g_\Gamma(\boldsymbol v)$.
The smallest bound valid for all assignments satisfying the constraints
is the optimal value of
\[
 \begin{aligned}
  \operatorname*{maximize}_{\boldsymbol v}\quad
       &g_\Gamma(\boldsymbol v)\\
  \text{subject to}\quad&\phi'[\Gamma].
 \end{aligned}
\]
The analyzer compiles the constraints and objective into this problem
for an external optimizer. Its size is fixed once the program,
encodings, and instantiation tactic are fixed; it does not grow with
the number of pure strategies.
Every execution satisfies the encoded and instantiated properties,
and therefore gives a feasible assignment of the real variables.
An upper bound on this optimization problem consequently certifies
an approximation guarantee. It is the best bound obtainable from
this constraint system, which may also admit assignments that
are not realized by a game.

For DMP, the instantiated best-response properties and regret
definitions give $f_1(y^1,y^2)=f_2(x^1,y^2)=0$.
The mixing encoding, instantiated at $x^{2\prime}=y^2$, and the same
regret definitions give
\[
 \begin{aligned}
  f_1(\boldsymbol z)&=\tfrac12 f_1(x^1,y^2),\\
  f_2(\boldsymbol z)&\le\tfrac12 f_2(y^1,y^2).
 \end{aligned}
\]
These are relations among terms in the instantiated system, so they
remain arithmetic constraints after forgetting. The inherent regret
bounds then imply $g_\Gamma(\boldsymbol v)\le1/2$. This illustrates how the
analysis derives the guarantee from a program's encodings. For our
construction, Section~\ref{sec:analysis} gives the necessary properties
and their arithmetic consequences; the complete encoding, exact proof
checkers, and replay instructions are available in the
accompanying repository\footnote{\url{https://github.com/lhydave/new-approx-ne-algo}.\label{fn:code}}.

\section{The algorithm}
\label{sec:algorithm}

We now use the building-block viewpoint of
Section~\ref{sec:legone-review} to describe our algorithm. We first
recall the framework of Tsaknakis and Spirakis, the improvement of
Deligkas, Fasoulakis, and Markakis, and the motivation for a further
strategy construction. We then define the mixing operations, state
their properties and computational guarantees, and present the
complete algorithm.

\subsection{The Tsaknakis--Spirakis framework}
\label{sec:ts-framework}
\label{sec:motivation}

The Tsaknakis--Spirakis algorithm (TS) formulates approximate Nash
equilibrium computation as the minimization of $f$ over
$\Delta_{n_1}\times\Delta_{n_2}$. Its idea is to use a descent method
to find a stationary point and then construct a profile with lower
regret~\cite{ts2008,tightness2023}. Accordingly, the algorithm has a
descent phase and a strategy-construction phase.

The descent phase computes a profile
$\boldsymbol{x}_s=(x_s^1,x_s^2)$ and two dual strategies
$y_s^1,y_s^2$. The strategy-construction phase uses these strategies
to choose the output profile.

We first describe the descent outputs and their properties, following
the minimax formulation in~\cite{tightness2023}. This is a classical
technique for approximate Nash equilibrium computation, so we state
the formulation and its output properties without repeating the
derivation; see~\cite{ts2008,tightness2023} for details.
We then explain how
TS and the algorithm of Deligkas, Fasoulakis, and Markakis (DFM) use
these outputs, and why a further improvement requires a different
strategy construction.

Let $\boldsymbol{x}$ be the current profile and $\boldsymbol{x}'$ a
comparison profile, both in $\Delta_{n_1}\times\Delta_{n_2}$.
The \emph{Dini directional derivative} of $f$ at
$\boldsymbol{x}$ in the direction $\boldsymbol{x}'-\boldsymbol{x}$ is
\[
 Df(\boldsymbol{x},\boldsymbol{x}')
 =\lim_{\theta\downarrow0}
   \frac{f((1-\theta)\boldsymbol{x}+\theta\boldsymbol{x}')
                 -f(\boldsymbol{x})}{\theta}.
\]
A profile $\boldsymbol{x}$ is a \emph{stationary point} if
$Df(\boldsymbol{x},\boldsymbol{x}')\ge0$ for every
$\boldsymbol{x}'\in\Delta_{n_1}\times\Delta_{n_2}$.
The descent procedure equalizes the two players' regrets without
increasing $f$. In the following description, we therefore assume
$f_1(\boldsymbol{x})=f_2(\boldsymbol{x})$.

We use $\boldsymbol{y}=(y^1,y^2)$ for the best-response strategies
in the minimax problem. Let
$\boldsymbol{y}\in\Delta_{n_1}\times\Delta_{n_2}$ satisfy
\[
 \supp(y^1)\subseteq\suppmax(Rx^2),
 \qquad
 \supp(y^2)\subseteq\suppmax(C^{\mathsf T}x^1).
\]
Thus $y^1,y^2$ are best responses to $x^2,x^1$, respectively.
For $\rho\in[0,1]$, define
\begin{equation}
\begin{aligned}
 T(\boldsymbol{x},\boldsymbol{x}',\rho,\boldsymbol{y})={}&
 \rho\bigl[u_1(y^1,x^{2\prime})-u_1(x^1,x^{2\prime})
       -u_1(x^{1\prime},x^2)+u_1(\boldsymbol{x})\bigr]\\
 &+(1-\rho)\bigl[u_2(x^{1\prime},y^2)-u_2(x^{1\prime},x^2)
       -u_2(x^1,x^{2\prime})+u_2(\boldsymbol{x})\bigr].
\end{aligned}
 \label{eq:stationary-T}
\end{equation}
The corresponding \emph{minimax value} is
\[
 V(\boldsymbol{x})
   =\min_{\boldsymbol{x}'}\max_{\rho,\boldsymbol{y}}
       T(\boldsymbol{x},\boldsymbol{x}',\rho,\boldsymbol{y}),
\]
where $\boldsymbol{x}'\in\Delta_{n_1}\times\Delta_{n_2}$,
$\rho\in[0,1]$, and
$\boldsymbol{y}\in\Delta_{n_1}\times\Delta_{n_2}$ satisfies the
support conditions above.
The minimum of $Df(\boldsymbol{x},\boldsymbol{x}')$ over
$\boldsymbol{x}'$ equals $V(\boldsymbol{x})-f(\boldsymbol{x})$.
Consequently, $\boldsymbol{x}$ is stationary if and only if
$V(\boldsymbol{x})=f(\boldsymbol{x})$.

The polynomial-time descent procedure uses a rational accuracy
parameter $\delta_f>0$. Its minimax problem replaces the exact
best-response sets by
\[
 \supp(y^1)\subseteq\suppmax_{\delta_f}(Rx^2),
 \qquad
 \supp(y^2)\subseteq\suppmax_{\delta_f}(C^{\mathsf T}x^1).
\]
Let $V_{\delta_f}(\boldsymbol{x})$ denote the value of this enlarged
problem. The procedure terminates at a profile
$\boldsymbol{x}_s$ satisfying
\begin{equation}
 f_1(\boldsymbol{x}_s)=f_2(\boldsymbol{x}_s)=f(\boldsymbol{x}_s),
 \qquad V_{\delta_f}(\boldsymbol{x}_s)\ge f(\boldsymbol{x}_s)-\delta_f.
 \label{eq:stationary-termination}
\end{equation}
This minimax value can be computed by a linear program (LP).
An optimal dual solution at $\boldsymbol{x}_s$ determines
\[
 (\rho,y_s^1,y_s^2)\in[0,1]\times\Delta_{n_1}\times\Delta_{n_2}.
\]
Write $\boldsymbol{y}_s=(y_s^1,y_s^2)$ and $\sigma=1-\rho$.
The weights $\rho,\sigma$ correspond to the row and column players,
respectively. In the enlarged LP, a positive row weight normalizes
dual weights on pure strategies whose payoffs are within $\delta_f$
of the best-response payoff. Averaging these payoffs therefore gives,
when $\rho>0$,
\[
 \begin{aligned}
  u_1(y_s^1,x_s^2)
   &=\sum_{k\in[n_1]}(y_s^1)_k\,u_1(e_k^1,x_s^2)\\
   &\ge\max_{x^{1\prime}\in\Delta_{n_1}}u_1(x^{1\prime},x_s^2)-\delta_f.
 \end{aligned}
\]
The column-player argument is the same. The output properties used
in our analysis can thus be stated directly as payoff comparisons:
\begin{equation}
 \begin{aligned}
  \forall x^{1\prime}\in\Delta_{n_1},\quad
  \rho>0&\ \Longrightarrow\quad
  u_1(x^{1\prime},x_s^2)\le u_1(y_s^1,x_s^2)+\delta_f,\\
  \forall x^{2\prime}\in\Delta_{n_2},\quad
  \sigma>0&\ \Longrightarrow\quad
  u_2(x_s^1,x^{2\prime})\le u_2(x_s^1,y_s^2)+\delta_f.
 \end{aligned}
 \label{eq:ts-best-response}
\end{equation}
When a weight is zero, its corresponding strategy can be chosen
arbitrarily. Moreover, the same dual solution satisfies
\begin{equation}
 f(\boldsymbol{x}_s)-\delta_f\le
 T(\boldsymbol{x}_s,\boldsymbol{x}',\rho,\boldsymbol{y}_s)
 \quad\forall\boldsymbol{x}'\in\Delta_{n_1}\times\Delta_{n_2}.
 \label{eq:stationary}
\end{equation}
We regard this procedure as the $\spoint$ building block of
LegoNE~\cite{legone2026}, with accuracy input $\delta_f$ and four
strategy outputs:
\[
 (x_s^1,x_s^2,y_s^1,y_s^2)=\spoint(\delta_f).
\]
For a fixed bimatrix game, this block has no strategy input;
$\delta_f$ is its only explicit input. It computes the four
strategies in time polynomial in the input encoding
length and $1/\delta_f$~\cite{ts2008,dfm2023}.

We call $\boldsymbol x_s$ the stationary profile below, meaning the
approximate output characterized by~\eqref{eq:stationary-termination}
and~\eqref{eq:stationary}. It need not have small regret. Even when
$y_s^1,y_s^2$ are best responses to $x_s^2,x_s^1$, respectively,
the profile $\boldsymbol{y}_s$ can have large regret. In its
strategy-construction phase, TS compares $\boldsymbol{x}_s$ with
$\boldsymbol{y}_s$ and profiles in which one player mixes her stationary
and dual strategies while the other uses her dual strategy. The mixing
coefficients are chosen using payoff differences between the stationary
and dual strategies. These comparisons give the guarantee
$0.3393+\delta$~\cite{ts2008}, but in the bottleneck cases they do not
guarantee $1/3+\delta$~\cite{dfm2023}.

DFM uses the same descent procedure and replaces the strategy construction
in these bottleneck cases. Its main addition is a best response to an
equal mixture of the opponent's stationary and dual strategies, such as
$(x_s^2+y_s^2)/2$ for the row player. DFM uses this additional candidate
in further convex combinations and uses the resulting
payoff relations to bound both players' regrets. The column-player
construction is symmetric. These modifications improve the guarantee
to $1/3+\delta$~\cite{dfm2023}.

The central role of $\spoint$ can be understood through its logical
encoding. Following the viewpoint of~\cite{computeraided2026}, consider
the strategy variables that can be instantiated in a block property.
The inequality~\eqref{eq:stationary} has \emph{two} universally
quantified strategy variables, one for each player:
\[
 \forall x^{1\prime}\in\Delta_{n_1},
 \forall x^{2\prime}\in\Delta_{n_2},\quad
 f(\boldsymbol{x}_s)-\delta_f\le
 T(\boldsymbol{x}_s,(x^{1\prime},x^{2\prime}),\rho,\boldsymbol{y}_s).
\]
Both players' payoff terms occur in this single inequality, and the
same dual weight $\rho$ is shared by all its instantiations. If the
construction provides $t_1$ row strategies and $t_2$ column strategies,
substituting every pair gives $t_1t_2$ inequalities. Thus, the stationary
and dual strategies can be related to strategies from different calls
through a family of joint payoff constraints.

For a fixed $x^2\in\Delta_{n_2}$, the output
$y^1=\br(x^2)$ instead satisfies
\[
 \forall x^{1\prime}\in\Delta_{n_1},\quad
 u_1(x^{1\prime},x^2)\le u_1(y^1,x^2).
\]
This property has \emph{one} universally quantified strategy variable.
Substituting the $t_1$ row strategies gives $t_1$ comparisons of row
payoffs, all against the same opponent strategy $x^2$. DFM's additional
best response supplies comparisons of this form, with $x^2$ equal to a
prescribed mixture. They improve the bound, but provide limited
additional relations between the two players' strategies. The
tightness results in~\cite{tightness2023} show that the resulting
construction still admits instances attaining $1/3$, or approaching
$1/3$ even in the cases using the additional best response.

This comparison suggests seeking further polynomial-time building
blocks whose properties relate both players' payoffs and regrets, and
composing them so that outputs of one call can be substituted into the
properties of another. Additional strategies also provide further
instantiations of the two strategy variables in~\eqref{eq:stationary}.
The number of quantified variables indicates the available comparisons;
their content and interaction determine the approximation bound. This
motivates our construction while using the same descent procedure.
The following subsections introduce the operations and then explain
their composition.

\subsection{\texorpdfstring{$\om$}{OptimalMixing}}
\label{sec:optimalmixing}

Given $x_s^1,y_s^1,x_s^2,y_s^2$, the objective of $\om$ is to minimize
the maximum regret over mixtures of each player's stationary and dual
strategies~\cite{mixing2025,legone2026}. The two players may use
different mixing coefficients. Formally, write
\[
 x^1(p)=(1-p)x_s^1+p y_s^1,\qquad
 x^2(q)=(1-q)x_s^2+q y_s^2,
\]
and define the corresponding profile by
\[
 \boldsymbol{x}(p,q)
 =\bigl(x^1(p),x^2(q)\bigr),
 \qquad (p,q)\in[0,1]^2.
\]
The optimization problem is
\[
 \min_{(p,q)\in[0,1]^2}f(\boldsymbol{x}(p,q)).
\]

For a fixed accuracy parameter $\tau>0$, $\om$ returns a profile whose
maximum regret is at most $\tau$ above this minimum. We separate the
four strategy inputs from the accuracy input by a semicolon:
\[
 (x_o^1,x_o^2)=\om(x_s^1,y_s^1,x_s^2,y_s^2;\tau).
\]
Formally, the profile $\boldsymbol{x}_o=(x_o^1,x_o^2)$ satisfies
\begin{equation}
 \begin{aligned}
  \boldsymbol{x}_o&\in
      \{\boldsymbol{x}(p,q):(p,q)\in[0,1]^2\},\\
  f(\boldsymbol{x}_o)&\le
       \min_{(p,q)\in[0,1]^2}f(\boldsymbol{x}(p,q))+\tau.
 \end{aligned}
 \label{eq:om-definition}
\end{equation}
We also require the output to have regret no greater than that of any
of the four profiles formed by choosing one input strategy for each
player; namely,
\[
 f(\boldsymbol{x}_o)\le
       \min_{(p,q)\in\{0,1\}^2}f(\boldsymbol{x}(p,q)).
\]
For rational input
and rational $\tau>0$, $\om$ can be computed in time polynomial
in the input encoding length and $1/\tau$~\cite{mixing2025}.
When the accuracies are fixed and clear from context, we omit them
from the notation and write $\spoint()$ and
$\om(x_s^1,y_s^1,x_s^2,y_s^2)$. We display them explicitly in
Algorithm~\ref{alg:main}, where their values are inputs.

To analyze this operation in LegoNE, its optimality comparisons must
be combined with relations between the payoffs and regrets of mixed
strategies~\cite{legone2026}. First, fixing one player's strategy
makes each payoff linear in the other player's strategy. Thus the
following identities hold against every comparison strategy:
\begin{equation}
 \begin{aligned}
  \forall i\in[2],\forall p\in[0,1],
  \forall x^{2\prime}\in\Delta_{n_2},\quad
  u_i(x^1(p),x^{2\prime})
   &=(1-p)u_i(x_s^1,x^{2\prime})+p u_i(y_s^1,x^{2\prime}),\\[2pt]
  \forall i\in[2],\forall q\in[0,1],
  \forall x^{1\prime}\in\Delta_{n_1},\quad
  u_i(x^{1\prime},x^2(q))
   &=(1-q)u_i(x^{1\prime},x_s^2)+q u_i(x^{1\prime},y_s^2).
 \end{aligned}
 \label{eq:mixing-payoffs}
\end{equation}

The best-response payoff is convex in the opponent's strategy.
Write these two payoff functions as
\[
 \begin{aligned}
  \forall x^2\in\Delta_{n_2},\quad
  B_1(x^2)&=\max_{x^{1\prime}\in\Delta_{n_1}}u_1(x^{1\prime},x^2),\\
  \forall x^1\in\Delta_{n_1},\quad
  B_2(x^1)&=\max_{x^{2\prime}\in\Delta_{n_2}}u_2(x^1,x^{2\prime}).
 \end{aligned}
\]
Jensen's inequality bounds a best-response payoff at a mixture by
the mixture of the endpoint best-response payoffs. The nonnegative
differences are the \emph{Jensen gaps}:
\begin{equation}
 \begin{aligned}
  \forall q\in[0,1],\quad
  J_1(q)&=(1-q)B_1(x_s^2)+q B_1(y_s^2)-B_1(x^2(q))\ge0,\\
  \forall p\in[0,1],\quad
  J_2(p)&=(1-p)B_2(x_s^1)+p B_2(y_s^1)-B_2(x^1(p))\ge0.
 \end{aligned}
 \label{eq:mixing-gaps}
\end{equation}
By payoff linearity, each regret is exactly the bilinear interpolation
of its four endpoint values minus its Jensen gap. Formally,
\begin{equation}
 \begin{aligned}
  \forall p,q\in[0,1],\quad
  f_1(\boldsymbol{x}(p,q))={}&
    (1-p)(1-q)f_1(\boldsymbol{x}_s)
    +p(1-q)f_1(y_s^1,x_s^2)\\
   &+(1-p)q f_1(x_s^1,y_s^2)
    +pq f_1(\boldsymbol{y}_s)-J_1(q),\\
  \forall p,q\in[0,1],\quad
  f_2(\boldsymbol{x}(p,q))={}&
    (1-p)(1-q)f_2(\boldsymbol{x}_s)
    +p(1-q)f_2(y_s^1,x_s^2)\\
   &+(1-p)q f_2(x_s^1,y_s^2)
    +pq f_2(\boldsymbol{y}_s)-J_2(p).
 \end{aligned}
 \label{eq:mixing-regrets}
\end{equation}

Each bracket below is the regret of one comparison strategy against
an endpoint opponent strategy. Minimizing the weighted sum of these
two regrets gives the corresponding Jensen gap:
\begin{equation}
 \begin{aligned}
  \forall q\in[0,1],\quad
  J_1(q)=\min_{x^{1\prime}\in\Delta_{n_1}}\bigl\{
   &(1-q)[B_1(x_s^2)-u_1(x^{1\prime},x_s^2)]\\
   &+q[B_1(y_s^2)-u_1(x^{1\prime},y_s^2)]\bigr\},\\
  \forall p\in[0,1],\quad
  J_2(p)=\min_{x^{2\prime}\in\Delta_{n_2}}\bigl\{
   &(1-p)[B_2(x_s^1)-u_2(x_s^1,x^{2\prime})]\\
   &+p[B_2(y_s^1)-u_2(y_s^1,x^{2\prime})]\bigr\}.
 \end{aligned}
 \label{eq:mixing-gap-losses}
\end{equation}
The minimum in each formula is attained by a best response to the
corresponding mixture. A lower bound valid for every comparison
strategy therefore bounds the Jensen gap from below. Joint dual properties
can supply such bounds, giving stronger regret comparisons than
using only $J_1,J_2\ge0$.

\begin{remark}
For two strategies per player, the encoding used in
LegoNE~\cite{legone2026} bounds the regret by considering only the
boundary of $[0,1]^2$. Our analysis of $\om$ is more detailed:
it bounds the Jensen gaps using shared dual constraints and also
applies the $\om$ comparison to interior points
$(p,q)\in(0,1)^2$.
\end{remark}

\subsection{\texorpdfstring{$\sm$}{SimpleMixing} and its dual strategies}
\label{sec:simplemixing}

For a fixed opponent's strategy, a best response maximizes one
player's payoff. Minimizing $f$ with one player's strategy fixed
provides another way to construct a strategy, taking both players'
regrets into account. This is the $\sm$ operation~\cite{automating2025}.
Given a row strategy $x^1\in\Delta_{n_1}$, the operation fixes $x^1$
and chooses a column strategy $x^2$ minimizing the maximum regret
of the resulting profile. Formally,
\begin{equation}
 x^2\in\operatorname*{arg\,min}_{x^{2\prime}\in\Delta_{n_2}}f(x^1,x^{2\prime}).
 \label{eq:sm-definition}
\end{equation}
This problem can be solved by linear programming~\cite{automating2025}.
To write it as an LP, introduce a real variable $h$ that
bounds both players' regrets. The row regret is the largest gain
from a pure row deviation, so we impose one constraint for each
$k\in[n_1]$. The column player's best-response payoff depends only
on the fixed input $x^1$. Thus the LP is
\begin{equation}
 \begin{aligned}
  \operatorname*{minimize}_{x^{2\prime},\,h}\quad &h\\
  \text{subject to}\quad
   &h\ge u_1(e_k^1,x^{2\prime})-u_1(x^1,x^{2\prime}),
      \quad k\in[n_1],\\
   &h\ge\max_{\ell\in[n_2]}u_2(x^1,e_\ell^2)
             -u_2(x^1,x^{2\prime}),\\
   &x^{2\prime}\in\Delta_{n_2},\quad h\in\mathbb R.
 \end{aligned}
 \label{eq:sm-lp}
\end{equation}
The maximum in the second constraint is a constant, so all constraints
are linear in $x^{2\prime}$ and $h$.
An optimal column strategy $x^2$ is the primal solution,
and the optimal value is $h=f(x^1,x^2)$.

An optimal dual solution certifies that no column strategy can achieve
a smaller maximum regret. Let $\lambda_k\ge0$ be the dual variable
for the $k$th row-regret constraint in~\eqref{eq:sm-lp}, and let
$1-\omega$ be the dual variable for the column-regret constraint,
where $\omega$ is defined by the sum below.
These variables satisfy
\[
 \omega=\sum_{k\in[n_1]}\lambda_k\in[0,1],\qquad
 \sum_{k\in[n_1]}\lambda_k+(1-\omega)=1.
\]
They therefore assign weights to the pure row-deviation gains and
the column regret. The total row weight is $\omega$, and the column weight is $1-\omega$.
When $\omega>0$, normalizing the row weights gives the row dual strategy
\[
 y^1=\sum_{k\in[n_1]}\frac{\lambda_k}{\omega}e_k^1
       \in\Delta_{n_1}.
\]
When $\omega=0$, there is no row weight to normalize, and $y^1$ can be
chosen arbitrarily.

The operation therefore has one strategy input, the fixed row
strategy $x^1$, and two strategy outputs: the minimizing column
strategy $x^2$ and the row dual strategy $y^1$. We write
\[
 (x^2,y^1)=\sm(x^1).
\]
Both strategy outputs can be computed exactly in polynomial time for
rational input~\cite{automating2025}.
The dual weight $\omega$ is an auxiliary scalar associated with these
outputs. It appears in their properties below.

The dual strategy $y^1$ combines the row-deviation gains into the
payoff difference $u_1(y^1,x^{2\prime})-u_1(x^1,x^{2\prime})$.
The dual solution combines this difference with the column regret,
using weights $\omega$ and $1-\omega$, respectively.
LP duality guarantees that the resulting expression is at least the
optimal maximum regret $f(x^1,x^2)$ for every comparison column
strategy $x^{2\prime}\in\Delta_{n_2}$. Formally,
\begin{equation}
 \forall x^{2\prime}\in\Delta_{n_2},\quad
 \omega\bigl[u_1(y^1,x^{2\prime})-u_1(x^1,x^{2\prime})\bigr]
 +(1-\omega)f_2(x^1,x^{2\prime})\ge f(x^1,x^2).
 \label{eq:simple-dual}
\end{equation}

Substituting $x^{2\prime}=x^2$ into~\eqref{eq:simple-dual} gives
\[
 \begin{aligned}
 f(x^1,x^2)
  &\le\omega\bigl[u_1(y^1,x^2)-u_1(x^1,x^2)\bigr]
          +(1-\omega)f_2(x^1,x^2)\\
  &\le\omega f_1(x^1,x^2)+(1-\omega)f_2(x^1,x^2)
   \le f(x^1,x^2).
 \end{aligned}
\]
The second inequality follows from the definition of $f_1$, and the
last from $f=\max\{f_1,f_2\}$. All inequalities are therefore
equalities. If $\omega>0$, equality in the second inequality implies
\[
 u_1(y^1,x^2)=\max_{x^{1\prime}\in\Delta_{n_1}}u_1(x^{1\prime},x^2).
\]
Thus $y^1$ is a best response to $x^2$, or equivalently,
\begin{equation}
 \forall x^{1\prime}\in\Delta_{n_1},\quad
 \omega>0\quad\Longrightarrow\quad
 u_1(x^{1\prime},x^2)\le u_1(y^1,x^2).
 \label{eq:alignment}
\end{equation}

Both~\eqref{eq:simple-dual} and~\eqref{eq:alignment} refer to this
same optimal dual solution. In particular, $y^1$ and $\omega$ do not
depend on the strategy $x^{2\prime}$ used in~\eqref{eq:simple-dual}.
The LP formulation and proofs of these properties are given
in~\cite{automating2025}.

These output properties motivate using $\sm$ in strategy construction.
In DFM, an additional strategy is obtained by taking a best response to a
prescribed mixture of the opponent's stationary and dual
strategies~\cite{dfm2023}. This step maximizes one player's payoff
against that mixture. The $\sm$ operation instead minimizes the
maximum regret with one player's strategy fixed, and also supplies a
dual strategy. The additional information is the dual
inequality~\eqref{eq:simple-dual}, which relates the candidate regret
to both players' payoff and regret terms. This suggests forming a
candidate profile from the primal output and using the dual output to
construct further strategies. The associated dual relations may help
such a construction improve on DFM's approximation guarantee.

The operation $\sm(x^2)$ is defined symmetrically: it fixes a
column strategy $x^2\in\Delta_{n_2}$ and chooses a row strategy
$x^1\in\Delta_{n_1}$ minimizing the maximum regret of the resulting
profile. It also returns a dual column strategy $y^2\in\Delta_{n_2}$.
Namely,
\[
 \begin{aligned}
  (x^1,y^2)&=\sm(x^2),\\
  x^1&\in\operatorname*{arg\,min}_{x^{1\prime}\in\Delta_{n_1}}
                   f(x^{1\prime},x^2).
 \end{aligned}
\]
It has the same polynomial-time guarantee. In this symmetric call,
$\omega$ is the total dual weight on the pure column-deviation
constraints; $1-\omega$ is the dual multiplier on the single
row-regret constraint. Here $\omega$ belongs to $\sm(x^2)$,
independently of the weight in $\sm(x^1)$. In particular,
\[
 \forall x^{2\prime}\in\Delta_{n_2},\quad
 \omega>0\quad\Longrightarrow\quad
 u_2(x^1,x^{2\prime})\le u_2(x^1,y^2).
\]

\subsection{Complete algorithm}
\label{sec:complete-algorithm}

With these building blocks, we obtain the algorithm for
Theorem~\ref{thm:main}, presented in Algorithm~\ref{alg:main}.
Starting from the output of $\spoint$, it introduces further
strategies through $\sm$ and uses $\om$ to form another candidate
profile. The same operations are applied to both players, giving a
simple, symmetric construction. Figure~\ref{fig:dag} gives a graphical
illustration.

The $\sm$ calls are organized into two rounds. The first round
uses the stationary dual strategies $y_s^1,y_s^2$ as inputs; the
second uses the dual strategies $y_1^1,y_1^2$ returned by the first.
In $x_j^i,y_j^i$, the subscript $j\in[2]$ indicates the round and
the superscript $i\in[2]$ identifies the player who uses the strategy.

Algorithm~\ref{alg:main} makes one $\spoint$ call, four $\sm$ calls,
and one $\om$ call. The algorithm displays the two accuracy inputs
at the calls that use them. Its running time is therefore polynomial
in the input encoding length, $1/\delta_f$, and $1/\tau$.

\begin{algorithm}[htbp]
\caption{Our algorithm for bimatrix games}
\label{alg:main}
\begin{algorithmic}[1]
 \Require A bimatrix game with rational payoffs and rational $\delta_f,\tau>0$.
 \State $(x_s^1,x_s^2,y_s^1,y_s^2)=\spoint(\delta_f)$
 \State $(x_1^2,y_1^1)=\sm(y_s^1)$
 \State $(x_1^1,y_1^2)=\sm(y_s^2)$
 \State $(x_2^2,y_2^1)=\sm(y_1^1)$
 \State $(x_2^1,y_2^2)=\sm(y_1^2)$
 \State $(x_o^1,x_o^2)=\om(x_s^1,y_s^1,x_s^2,y_s^2;\tau)$
 \State $\mathcal S=\{(y_s^1,x_1^2),\ (x_1^1,y_s^2),\
                   (y_1^1,x_2^2),\ (x_2^1,y_1^2),\ (x_o^1,x_o^2)\}$
 \State \Return $\boldsymbol{x}_{\mathrm{out}}\in
        \operatorname*{arg\,min}_{\boldsymbol{x}\in\mathcal S}
                        f(\boldsymbol{x})$
\end{algorithmic}
\end{algorithm}

\begin{figure}[!htb]
\centering
\begin{tikzpicture}[
 x=1.5cm,y=1cm,
 >=Stealth,line cap=round,line join=round,
 every node/.style={font=\small},
 row/.style={circle,draw=dagrow,fill=dagrow!5,
   line width=.95pt,minimum size=7mm,inner sep=1pt},
 col/.style={circle,draw=dagcol,fill=dagcol!5,
   line width=.95pt,minimum size=7mm,inner sep=1pt},
 profile/.style={draw=dagprofile,fill=dagprofile!2,
   rounded corners=5pt,dash pattern=on 3.5pt off 2.3pt,
   line width=.85pt,inner sep=4pt},
 edge/.style={->,line width=1pt},
 stem/.style={line width=1pt},
 op/.style={fill=white,inner sep=1.5pt,font=\footnotesize}]
 \definecolor{dagrow}{RGB}{24,83,146}
 \definecolor{dagcol}{RGB}{177,76,25}
 \definecolor{dagstation}{RGB}{79,86,96}
 \definecolor{dagmix}{RGB}{15,117,109}
 \definecolor{dagprofile}{RGB}{126,70,134}
 \definecolor{dagink}{RGB}{45,49,54}
 % Each candidate is a fixed strategy paired with its primal response.
 \node[row] (ys1) at (3.55,1.8) {$y_s^1$};
 \node[col] (x12) at (3.55,2.8) {$x_1^2$};
 \node[row] (y11) at (6.1,1.8) {$y_1^1$};
 \node[col] (x22) at (6.1,2.8) {$x_2^2$};
 \node[col] (ys2) at (3.55,-1.8) {$y_s^2$};
 \node[row] (x11) at (3.55,-2.8) {$x_1^1$};
 \node[col] (y12) at (6.1,-1.8) {$y_1^2$};
 \node[row] (x21) at (6.1,-2.8) {$x_2^1$};
 % The initial profile supplies strategy inputs to the mixing calls.
 \node[row] (xs1) at (3,.48) {$x_s^1$};
 \node[col] (xs2) at (3,-.48) {$x_s^2$};
 % Optimal mixing ends in the middle; only the two longer LP chains
 % continue to the right, ending at their last dual strategies.
 \node[row] (xo1) at (7.0,.45) {$x_o^1$};
 \node[col] (xo2) at (7.0,-.45) {$x_o^2$};
 \node[row] (y21) at (8.7,1.8) {$y_2^1$};
 \node[col] (y22) at (8.7,-1.8) {$y_2^2$};
 % Leave small ports in each frame wherever a dependency enters or exits.
 % This keeps the dashed grouping outlines separate from the arrows.
 \node[fit=(ys1)(x12),inner sep=4pt] (p1) {};
 \node[fit=(y11)(x22),inner sep=4pt] (p2) {};
 \node[fit=(ys2)(x11),inner sep=4pt] (p3) {};
 \node[fit=(y12)(x21),inner sep=4pt] (p4) {};
 \node[fit=(xo1)(xo2),inner sep=4pt] (po) {};
 \begin{scope}[on background layer]
  \begin{scope}[overlay,even odd rule]
   \clip (-1,-5) rectangle (13,5)
    (p1.west |- ys1.center) circle (3pt)
    (p1.east |- ys1.center) circle (3pt)
    (p1.south -| ys1.center) circle (3pt)
    (p1.east |- x12.center) circle (3pt);
   \draw[profile] (p1.north west) rectangle (p1.south east);
  \end{scope}
  \begin{scope}[overlay,even odd rule]
   \clip (-1,-5) rectangle (13,5)
    (p2.west |- y11.center) circle (3pt)
    (p2.east |- y11.center) circle (3pt)
    (p2.east |- x22.center) circle (3pt);
   \draw[profile] (p2.north west) rectangle (p2.south east);
  \end{scope}
  \begin{scope}[overlay,even odd rule]
   \clip (-1,-5) rectangle (13,5)
    (p3.west |- ys2.center) circle (3pt)
    (p3.east |- ys2.center) circle (3pt)
    (p3.north -| ys2.center) circle (3pt)
    (p3.east |- x11.center) circle (3pt);
   \draw[profile] (p3.north west) rectangle (p3.south east);
  \end{scope}
  \begin{scope}[overlay,even odd rule]
   \clip (-1,-5) rectangle (13,5)
    (p4.west |- y12.center) circle (3pt)
    (p4.east |- y12.center) circle (3pt)
    (p4.east |- x21.center) circle (3pt);
   \draw[profile] (p4.north west) rectangle (p4.south east);
  \end{scope}
  \begin{scope}[overlay,even odd rule]
   \clip (-1,-5) rectangle (13,5)
    (po.west |- xo1.center) circle (3pt)
    (po.west |- xo2.center) circle (3pt);
   \draw[profile] (po.north west) rectangle (po.south east);
  \end{scope}
 \end{scope}
 % StationaryPoint produces the four original strategies.
 \draw[stem,draw=dagstation] (.35,0) --
   node[op,above=5pt,pos=.35,text=dagstation] {$\spoint$} (2.05,0);
 \draw[edge,draw=dagstation] (2.05,0)
   .. controls (2.4,0) and (2.7,1.8) .. (ys1.west);
 \draw[edge,draw=dagstation] (2.05,0)
   .. controls (2.4,0) and (2.42,.48) .. (xs1.west);
 \draw[edge,draw=dagstation] (2.05,0)
   .. controls (2.4,0) and (2.42,-.48) .. (xs2.west);
 \draw[edge,draw=dagstation] (2.05,0)
   .. controls (2.4,0) and (2.7,-1.8) .. (ys2.west);
 % A labelled common stem precedes each pair of outputs.
 % The primal output loops back into its vertical candidate frame;
 % the dual output continues along the horizontal strategy chain.
 \draw[stem,draw=dagrow] (ys1.east) --
   node[op,below=1pt,text=dagrow] {$\sm$} (5.55,1.8);
 \draw[edge,draw=dagrow] (5.55,1.8)
   .. controls (5.79,1.8) and (5.79,2.8) .. (x12.east);
 \draw[edge,draw=dagrow] (5.55,1.8) -- (y11.west);
 \draw[stem,draw=dagrow] (y11.east) --
   node[op,below=1pt,text=dagrow] {$\sm$} (8.15,1.8);
 \draw[edge,draw=dagrow] (8.15,1.8)
   .. controls (8.39,1.8) and (8.39,2.8) .. (x22.east);
 \draw[edge,draw=dagrow] (8.15,1.8) -- (y21.west);
 \draw[stem,draw=dagcol] (ys2.east) --
   node[op,below=1pt,text=dagcol] {$\sm$} (5.55,-1.8);
 \draw[edge,draw=dagcol] (5.55,-1.8)
   .. controls (5.79,-1.8) and (5.79,-2.8) .. (x11.east);
 \draw[edge,draw=dagcol] (5.55,-1.8) -- (y12.west);
 \draw[stem,draw=dagcol] (y12.east) --
   node[op,below=1pt,text=dagcol] {$\sm$} (8.15,-1.8);
 \draw[edge,draw=dagcol] (8.15,-1.8)
   .. controls (8.39,-1.8) and (8.39,-2.8) .. (x21.east);
 \draw[edge,draw=dagcol] (8.15,-1.8) -- (y22.west);
 % Solid input curves distinguish dependencies from dashed profile frames.
 \draw[draw=dagmix,line width=.7pt] (ys1.south)
   .. controls (3.55,.45) and (4.25,0) .. (4.7,0);
 \draw[draw=dagmix,line width=.7pt] (xs1.east)
   .. controls (3.8,.48) and (4.15,0) .. (4.7,0);
 \draw[draw=dagmix,line width=.7pt] (xs2.east)
   .. controls (3.8,-.48) and (4.15,0) .. (4.7,0);
 \draw[draw=dagmix,line width=.7pt] (ys2.north)
   .. controls (3.55,-.45) and (4.25,0) .. (4.7,0);
 \draw[stem,draw=dagmix] (4.7,0) --
   node[op,above=4pt,text=dagmix] {$\om$} (6.2,0);
 \draw[edge,draw=dagmix] (6.2,0)
   .. controls (6.5,0) and (6.5,.45) .. (xo1.west);
 \draw[edge,draw=dagmix] (6.2,0)
   .. controls (6.5,0) and (6.5,-.45) .. (xo2.west);
 % The profile key contains both player colours.
 \draw[dagrow,fill=dagrow!5,line width=.95pt] (.15,2.75) circle (1.3mm);
 \node[anchor=west,font=\footnotesize,text=dagink] at (.42,2.75) {Player 1};
 \draw[dagcol,fill=dagcol!5,line width=.95pt] (.15,2.18) circle (1.3mm);
 \node[anchor=west,font=\footnotesize,text=dagink] at (.42,2.18) {Player 2};
 \draw[profile,rounded corners=3pt] (-.06,1.36) rectangle (.53,1.79);
 \draw[dagrow,fill=dagrow!5,line width=.8pt] (.1,1.575) circle (1mm);
 \draw[dagcol,fill=dagcol!5,line width=.8pt] (.37,1.575) circle (1mm);
 \node[anchor=west,font=\footnotesize,text=dagink] at (.66,1.575)
   {Candidate profile};
 % Final selection is centred below the entire construction.
 \node[text=dagink,font=\normalsize] at (4.25,-4.0)
  {\textbf{Output:}\quad
   \tikz[baseline=2pt]{
      \draw[draw=dagprofile,fill=dagprofile!2,rounded corners=2pt,
        dash pattern=on 2.8pt off 1.8pt,line width=.85pt]
        (0,0) rectangle (.72,.36);
      \draw[dagrow,fill=dagrow!5,line width=.85pt] (.20,.18) circle (.085);
      \draw[dagcol,fill=dagcol!5,line width=.85pt] (.52,.18) circle (.085);
     }\quad with minimal $f$};
\end{tikzpicture}
\caption{Construction of the candidate profiles. Vertices represent
strategies, and labelled edges indicate building-block calls;
the labels show operation names, while Algorithm~\ref{alg:main}
shows their accuracy inputs.
$\spoint$ has no strategy input and produces four strategies.
Each purple dashed frame pairs the two players' strategies in one
candidate profile. The output minimizes
$f=\max\{f_1,f_2\}$ over the five framed profiles.
The strategies $y_2^1,y_2^2$ are used only in the analysis.}
\label{fig:dag}
\end{figure}

\FloatBarrier
\section{Approximation analysis}
\label{sec:analysis}

We first consider the formal zero-error versions of the two
operations' output properties, written $\delta_f=\tau=0$, and prove
their guarantee by contradiction. The implemented calls use positive
accuracy inputs. Suppose the selected output has
regret above a proposed threshold $r$. Because it selects the best
of five candidates, all five must then exceed $r$. We will show
that this cannot happen when $r$ is above the claimed bound.

The argument proceeds from the game to arithmetic. The properties
of the algorithm's calls first impose restrictions on any such bad
output. We combine and simplify those restrictions until they are
a finite list of inequalities in a few numbers, rather than in the
game's strategies and payoffs. Numerical exploration suggests the
threshold where these inequalities become critical; an exact
calculation defines that threshold, and a global argument shows
that the inequalities cannot hold above it. Finally, we account
for the errors $\delta_f$ and $\tau$ in the actual calls. The main
text explains these transitions; Appendices~\ref{app:instantiation}--\ref{app:errors}
give the detailed calculations and exact-checker paths.

\subsection{Properties needed for the contradiction}
\label{sec:properties}

This subsection gathers the comparisons used later in the proof.
$\spoint$ relates the two players' payoffs at its output; each
$\sm$ call chooses an opponent strategy and compares the result
with other choices; and $\om$ compares its output with profiles formed by
mixing the stationary and best-response strategies. We state the
precise inequalities below and explain what each will contribute.
They follow from the stationary-point results~\cite{ts2008,tightness2023}
and the mixing operations~\cite{automating2025,mixing2025}.

\paragraph{\texorpdfstring{$\spoint$}{StationaryPoint}: one comparison for both players.}
The operation returns a stationary profile $\boldsymbol{x}_s$ and
one auxiliary strategy for each player; write the pair as
$\boldsymbol{y}_s=(y_s^1,y_s^2)$. It makes the two players' regrets at
$\boldsymbol{x}_s$ equal; call their common value $g$:
\begin{equation}
 g=f_1(\boldsymbol{x}_s)=f_2(\boldsymbol{x}_s)=f(\boldsymbol{x}_s).
 \label{eq:properties-g}
\end{equation}
The operation also returns a number $\rho\in[0,1]$ that weighs the
row player's part of a payoff comparison; $1-\rho$ weighs the
column player's part. The same $\rho$ and $\boldsymbol y_s$ work
against every pair of comparison strategies:
\begin{equation}
 \exists\rho\in[0,1],
 \forall x^{1\prime}\in\Delta_{n_1},
 \forall x^{2\prime}\in\Delta_{n_2},\quad
 g-\delta_f\le
 T(\boldsymbol{x}_s,(x^{1\prime},x^{2\prime}),\rho,\boldsymbol{y}_s).
 \label{eq:properties-stationary}
\end{equation}
Here $T$ is the payoff expression in~\eqref{eq:stationary-T}.
The returned strategies also nearly maximize each player's payoff
against the other player's stationary strategy. For the row player,
\begin{equation}
 \forall x^{1\prime}\in\Delta_{n_1},\quad
 \rho>0\quad\Longrightarrow\quad
 u_1(x^{1\prime},x_s^2)\le u_1(y_s^1,x_s^2)+\delta_f.
 \label{eq:properties-stationary-br}
\end{equation}
The column-player property is symmetric. Later we will choose
comparison strategies produced by both $\sm$ rounds. The same
$\rho$ and $\boldsymbol y_s$ must satisfy every resulting inequality.

\paragraph{\texorpdfstring{$\sm$}{SimpleMixing}: comparing a fixed-side choice with alternatives.}
For a fixed row strategy $x^1$, the operation chooses the column
strategy $x^2$ that minimizes the two players' maximum regret. Its
linear program also returns a row strategy $y^1$ derived from the
constraints that certify this choice. Write the outputs as
\[
 (x^2,y^1)=\sm(x^1).
\]
The chosen $x^2$ minimizes maximum regret with $x^1$ fixed, so changing the column strategy alone
cannot reduce the candidate's maximum regret:
\begin{equation}
 \forall x^{2\prime}\in\Delta_{n_2},\quad
 f(x^1,x^2)\le f(x^1,x^{2\prime}).
 \label{eq:properties-primal}
\end{equation}
The same LP assigns a number $\omega\in[0,1]$ to the row
player's possible payoff gain and $1-\omega$ to the column player's
regret, as explained in Section~\ref{sec:simplemixing}. Unlike
$\rho$, this $\omega$ belongs to the particular $\sm$ call. It
gives a second comparison that holds for every column strategy:
\begin{equation}
 \exists\omega\in[0,1],\forall x^{2\prime}\in\Delta_{n_2},\quad
 f(x^1,x^2)\le
 \omega\bigl[u_1(y^1,x^{2\prime})-u_1(x^1,x^{2\prime})\bigr]
 +(1-\omega)f_2(x^1,x^{2\prime}).
 \label{eq:properties-dual}
\end{equation}
The same $y^1$ and $\omega$ work for every $x^{2\prime}$ in this
inequality. This is what lets us compare the output of one call
with strategies from another. At the chosen $x^2$, the inequality
is an equality:
\begin{equation}
 f(x^1,x^2)=
 \omega\bigl[u_1(y^1,x^2)-u_1(x^1,x^2)\bigr]
 +(1-\omega)f_2(x^1,x^2).
 \label{eq:properties-strong-duality}
\end{equation}
If a player's regret has positive total weight, that player's regret
equals the candidate's maximum regret. The LP's complementary-slackness
conditions state this as
\begin{equation}
 \begin{aligned}
  \omega\bigl[f(x^1,x^2)-f_1(x^1,x^2)\bigr]&=0,\\
  (1-\omega)\bigl[f(x^1,x^2)-f_2(x^1,x^2)\bigr]&=0.
 \end{aligned}
 \label{eq:properties-cs}
\end{equation}
It also makes every positively weighted row-deviation constraint
tight. Consequently, when $\omega>0$, the returned $y^1$ is a best
response to $x^2$:
\begin{equation}
 \forall x^{1\prime}\in\Delta_{n_1},\quad
 \omega>0\quad\Longrightarrow\quad
 u_1(x^{1\prime},x^2)\le u_1(y^1,x^2).
 \label{eq:properties-sm-br}
\end{equation}
These are properties of the same LP solution~\cite{automating2025,boyd2004};
the properties of $\sm(x^2)$ are symmetric. We apply them to all
four calls in Algorithm~\ref{alg:main}. The first round's $y$
strategies become the second round's inputs, so the comparisons
from the two rounds constrain some of the same payoffs.

\paragraph{\texorpdfstring{$\om$}{OptimalMixing}: comparing all pairs of mixtures.}
As defined in Section~\ref{sec:optimalmixing},
$\boldsymbol{x}(p,q)$ uses probability $p$ to mix the row player's
two given strategies and probability $q$ for the column player's
two. The two probabilities vary independently over $[0,1]$. The
output is one such profile:
\[
 \exists p_o,q_o\in[0,1],\quad
 \boldsymbol{x}_o=\boldsymbol{x}(p_o,q_o).
\]
The operation minimizes the maximum regret over these mixtures up
to $\tau$, so its output regret is at most $\tau$ above that of any
mixed profile:
\begin{equation}
 \forall p,q\in[0,1],\quad
 f(\boldsymbol{x}_o)\le f(\boldsymbol{x}(p,q))+\tau.
 \label{eq:om}
\end{equation}
To bound regret at these mixtures, we use two identities from
Section~\ref{sec:optimalmixing}. Equation~\eqref{eq:mixing-regrets}
starts with the four unmixed profiles and interpolates their regrets.
It then subtracts a nonnegative correction, called the Jensen gap.
The correction arises because a best response to a mixture must
use one strategy against both endpoint opponents, whereas the
endpoints may have different best responses. Equation~\eqref{eq:mixing-gap-losses}
expresses the correction as a minimum over that one response
strategy. Later comparisons will bound it from below. The
operation also compares with the four unmixed profiles without
an additive error:
\[
 \forall p,q\in\{0,1\},\quad
 f(\boldsymbol{x}_o)\le f(\boldsymbol{x}(p,q)).
\]
In particular, choosing $p=q=0$ gives the stationary profile and
therefore yields
\begin{equation}
 f(\boldsymbol{x}_o)\le g.
 \label{eq:properties-om-g}
\end{equation}

\paragraph{Output selection.}
Finally, the algorithm chooses the candidate of smallest maximum
regret. With $\mathcal S$ as in Algorithm~\ref{alg:main}, this gives
\begin{equation}
 \boldsymbol{x}_{\mathrm{out}}\in\mathcal S,\qquad
 \forall\boldsymbol{z}\in\mathcal S,\quad
 f(\boldsymbol{x}_{\mathrm{out}})\le f(\boldsymbol{z}).
 \label{eq:properties-selection}
\end{equation}
Thus a regret bound for any one candidate suffices. We now assume
all five candidates exceed a proposed threshold and combine these
comparisons to rule out that possibility; the accompanying
materials\codefootref record the full logical description.

\subsection{From a failed output to inequalities it must satisfy}
\label{sec:reduction}

Fix $0<r\le1/3$ and any normalized bimatrix game. In the
zero-error analysis, suppose for contradiction that the selected
output has regret greater than $r$:
\begin{equation}
 f(\boldsymbol x_{\mathrm{out}})>r.
 \label{eq:failure-hypothesis}
\end{equation}
We derive inequalities that any such run must satisfy. We then
eliminate the strategies and payoff values, leaving conditions on
a few real numbers. If these conditions have no solution, the
assumed failure is impossible.

We obtain these conditions by applying one call's inequality to
strategies produced by another call, then keeping only the
consequences needed later. This is the instantiation-and-forgetting
step of LegoNE~\cite{computeraided2026,legone2026}. For each call,
its returned strategies and comparison coefficients stay fixed
while the strategies being compared vary. The resulting inequalities
may also admit numbers that do not describe any game; ruling out
this larger set still rules out every actual failure. Throughout
this reduction we use the zero-error versions of their comparison
properties:
\begin{equation}
 \delta_f=\tau=0.
 \label{eq:ideal-parameters}
\end{equation}
Section~\ref{sec:accuracy-transfer} restores the approximation errors.
For the fixed game, fix one optimal solution of each LP, including
the strategies and comparison coefficients it returns.
Denote the four $\sm$ candidates and their regrets by
\begin{equation}
 \begin{gathered}
 \begin{aligned}
 \boldsymbol z_1^1&=(y_s^1,x_1^2),&\boldsymbol z_1^2&=(x_1^1,y_s^2),\\
 \boldsymbol z_2^1&=(y_1^1,x_2^2),&\boldsymbol z_2^2&=(x_2^1,y_1^2),
 \end{aligned}\\
 \forall i,j\in[2],\quad F_j^i=f(\boldsymbol z_j^i).
 \end{gathered}
 \label{eq:candidate-profiles}
\end{equation}
Here the superscript on $\boldsymbol z_j^i,F_j^i$ identifies the
player whose input is fixed, and the subscript identifies the round.
This differs from the superscript on a single strategy $x_j^i$ or
$y_j^i$, which identifies the player using that strategy.

Each $\sm$ call supplies a coefficient $\omega$ that balances the
two players' regret terms in~\eqref{eq:properties-dual}. We give
the four calls' coefficients different names:
\begin{equation}
 \begin{aligned}
 \sm(y_s^1)&:\quad\omega=\eta,&\sm(y_s^2)&:\quad\omega=\xi,\\
 \sm(y_1^1)&:\quad\omega=\alpha,&\sm(y_1^2)&:\quad\omega=\gamma.
 \end{aligned}
 \label{eq:call-weight-definitions}
\end{equation}

The algorithm chooses the least-regret candidate. Therefore its
failure means that all four fixed-side candidates are bad; the
stationary regret $g$ also exceeds $r$, and every allowable mixture
by $\om$ must exceed $r$. Formally,
\begin{equation}
 \forall i,j\in[2],\ F_j^i>r,\qquad
 g\ge f(\boldsymbol x_o)>r,\qquad
 \forall p,q\in[0,1],\ f(\boldsymbol x(p,q))>r.
 \label{eq:failure-candidates}
\end{equation}
Here $g=f(\boldsymbol x_s)$ and $\boldsymbol x(p,q)$ mixes the
stationary strategies with the auxiliary strategies returned by
$\spoint$, as defined in Section~\ref{sec:optimalmixing}.
The last condition lets us choose the mixing probabilities $p,q$
later to test the supposed failure. Restricting $r$ to $1/3$ loses no counterexample to a
smaller bound, since any output above $1/3$ also exceeds a threshold
strictly below $1/3$.

We now simplify what a failed run would require. First we show
that the comparison coefficients lie strictly between zero and one;
this makes the returned auxiliary strategies best responses. Next
we bound the payoffs used in later comparisons, combine the two
$\sm$ rounds, and test selected mixtures. Finally we remove the
stationary regret $g$ from the conditions. The remaining five
numbers are $\rho,\eta,\xi,\alpha,\gamma$.

First we need positive coefficients; otherwise the best-response
conclusions in~\eqref{eq:properties-stationary-br}
and~\eqref{eq:properties-sm-br} need not apply. The next lemma
derives positivity from~\eqref{eq:failure-candidates} and uses
complementary slackness to identify both regrets at each $\sm$
candidate.

\begin{lemma}[Positive comparison coefficients]
\label{lem:positive-weights}
For an execution satisfying the properties of
Section~\ref{sec:properties}, suppose that there is no approximation
error as in~\eqref{eq:ideal-parameters} and that the selected output
has regret above $r$ as in~\eqref{eq:failure-hypothesis}.
Then all five comparison coefficients satisfy
\[
 r<\rho,\eta,\xi,\alpha,\gamma<1-r.
\]
The auxiliary strategies from $\spoint$ are best responses to the
opponents' stationary strategies, and every auxiliary strategy from
$\sm$ is a best response to that call's chosen strategy:
\[
 \begin{aligned}
 B_1(x_s^2)&=u_1(y_s^1,x_s^2),&B_2(x_s^1)&=u_2(x_s^1,y_s^2),\\
 \forall j\in[2],\quad B_1(x_j^2)&=u_1(y_j^1,x_j^2),&
 B_2(x_j^1)&=u_2(x_j^1,y_j^2).
 \end{aligned}
\]
Both player regrets at each candidate equal its maximum regret:
\[
 \forall i,j\in[2],\quad f_1(\boldsymbol z_j^i)=f_2(\boldsymbol z_j^i)=F_j^i.
\]
\end{lemma}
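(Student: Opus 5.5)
The plan is to get every coefficient bound from the dual-type inequalities \eqref{eq:properties-stationary} and \eqref{eq:properties-dual}, each instantiated at one well-chosen comparison strategy. The choices are made so that one side of the comparison vanishes or is nonpositive, and the other side is bounded by~$1$. The failure hypothesis \eqref{eq:failure-candidates} then gives the strict inequalities, using $g>r$ and $F_j^i>r$. The order matters: first $\rho$, then the first-round weights $\eta,\xi$, then the second-round weights $\alpha,\gamma$. Each stage uses best-response facts unlocked by the positivity obtained in the previous stage.

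\emph{Stage 1: the weight $\rho$.} Instantiate \eqref{eq:properties-stationary} (with $\delta_f=0$) at $\boldsymbol x'=(x_s^1,y_s^2)$.
\begin{itemize}
\item In the column bracket of \eqref{eq:stationary-T}, the terms $u_2(x_s^1,y_s^2)$ cancel, as do the terms $u_2(x_s^1,x_s^2)$, so the column bracket is $0$.
\item The row bracket is $u_1(y_s^1,y_s^2)-u_1(x_s^1,y_s^2)$. Since $u_1(y_s^1,y_s^2)\le B_1(y_s^2)$, this is at most $f_1(x_s^1,y_s^2)\le1$.
\end{itemize}
Hence $r<g\le\rho$. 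Symmetrically, instantiating at $\boldsymbol x'=(y_s^1,x_s^2)$ gives $r<g\le1-\rho$. Therefore $\rho>0$ and $1-\rho>0$, and \eqref{eq:properties-stationary-br} together with its column analogue give $B_1(x_s^2)=u_1(y_s^1,x_s^2)$ and $B_2(x_s^1)=u_2(x_s^1,y_s^2)$.

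\emph{Stage 2: the first-round weights $\eta,\xi$.} Consider $\sm(y_s^1)$ with weight $\eta$.
\begin{itemize}
\item \emph{Lower bound.} Instantiate \eqref{eq:properties-dual} at a column best response to $y_s^1$. The column regret term vanishes, and the row payoff difference is at most~$1$. So $r<F_1^1\le\eta$.
\item \emph{Upper bound.} Instantiate \eqref{eq:properties-dual} at $x^{2\prime}=x_s^2$. The row term is $u_1(y_1^1,x_s^2)-u_1(y_s^1,x_s^2)\le0$, because Stage~1 made $y_s^1$ a best response to $x_s^2$. So $r<F_1^1\le(1-\eta)f_2(y_s^1,x_s^2)\le1-\eta$.
\end{itemize}
The symmetric argument for $\sm(y_s^2)$ instantiates at $x^{1\prime}=x_s^1$ and uses $B_2(x_s^1)=u_2(x_s^1,y_s^2)$, giving $r<\xi<1-r$.

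\emph{Stage 3: the second-round weights $\alpha,\gamma$.} Since $\eta>0$, \eqref{eq:properties-sm-br} makes $y_1^1$ a best response to $x_1^2$. Now repeat Stage~2 for $\sm(y_1^1)$:
\begin{itemize}
\item the lower bound comes from a column best response to $y_1^1$;
\item the upper bound comes from instantiating at $x^{2\prime}=x_1^2$, where the row term $u_1(y_2^1,x_1^2)-u_1(y_1^1,x_1^2)\le0$.
\end{itemize}
This gives $r<\alpha<1-r$, and symmetrically $r<\gamma<1-r$ using $\xi>0$. Positivity of $\alpha,\gamma$ then yields the second-round best-response identities $B_1(x_2^2)=u_1(y_2^1,x_2^2)$ and $B_2(x_2^1)=u_2(x_2^1,y_2^2)$ via \eqref{eq:properties-sm-br}.

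\emph{Stage 4: equal regrets.} For each $\sm$ candidate, both $\omega$ and $1-\omega$ are now strictly positive. The complementary-slackness conditions \eqref{eq:properties-cs} then force $f_1(\boldsymbol z_j^i)=f_2(\boldsymbol z_j^i)=F_j^i$.

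The main obstacle is the upper bounds $\omega<1-r$. A naive choice of comparison strategy leaves the row payoff difference uncontrolled. The key is to compare against the strategy to which the fixed input is already known to be a best response; this is exactly why the stages have to be ordered $\rho\to(\eta,\xi)\to(\alpha,\gamma)$. For $\rho$ itself, the point to check carefully is the cancellation in $T$ at the chosen $\boldsymbol x'$.
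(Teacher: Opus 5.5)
Your proposal is correct and follows essentially the same route as the paper. You use the same instantiations: the stationary inequality at $(x_s^1,y_s^2)$ and $(y_s^1,x_s^2)$; each $\sm$ dual inequality at a best response to the fixed input and at the opponent strategy to which that input is already known to be a best response; the same order $\rho\to(\eta,\xi)\to(\alpha,\gamma)$; and complementary slackness for the equal regrets. The only cosmetic difference is that you cite \eqref{eq:properties-sm-br} directly, whereas the paper obtains the best-response conclusion together with the equal-regret conclusion from the strong-duality/complementary-slackness equality chain.
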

\begin{proof}
Recall
\[
 \sigma=1-\rho.
\]
Substitute $(y_s^1,x_s^2)$ and $(x_s^1,y_s^2)$ for the comparison
profile in~\eqref{eq:properties-stationary}, with $\delta_f=0$.
One bracket vanishes in each substitution, leaving
\[
 \begin{aligned}
 g&\le\sigma[u_2(y_s^1,y_s^2)-u_2(y_s^1,x_s^2)],\\
 g&\le\rho[u_1(y_s^1,y_s^2)-u_1(x_s^1,y_s^2)].
 \end{aligned}
\]
Each payoff difference is at most one, so $g>r$ from
\eqref{eq:failure-candidates} implies $r<\rho,\sigma$.
This justifies using the stationary auxiliary strategies as best responses.
Nonnegative payoffs also give
\begin{equation}
 u_1(y_s^1,y_s^2)\ge g/\rho,\qquad
 u_2(y_s^1,y_s^2)\ge g/\sigma.
 \label{eq:main-stationary-payoff-lower-bounds}
\end{equation}
For the first-round call $\sm(y_s^1)$, apply
\eqref{eq:properties-dual}, with $\omega=\eta$, first at a
column best response to $y_s^1$, where the column regret is zero,
and then at $x_s^2$, to which $y_s^1$ is a row best response.
Payoff differences are at most one, so these comparisons yield
\[
 F_1^1\le\eta,\qquad F_1^1\le1-\eta.
\]
Since $F_1^1>r$ in~\eqref{eq:failure-candidates}, both coefficients are
positive. Complementary slackness~\eqref{eq:properties-cs} gives
\[
 \eta(F_1^1-f_1(\boldsymbol z_1^1))=0,\qquad
 (1-\eta)(F_1^1-f_2(\boldsymbol z_1^1))=0,
\]
and $y_1^1$ is a best response to $x_1^2$.
For the second-round call $\sm(y_1^1)$, apply the same comparison
inequality, now with $\omega=\alpha$, at $x_1^2$ and at a column
best response to $y_1^1$. The best-response property just established
gives $F_2^1\le1-\alpha$ at $x_1^2$, and zero column regret at the
other comparison gives $F_2^1\le\alpha$. The corresponding
applications for $\sm(y_s^2)$ and $\sm(y_1^2)$ are symmetric.
The complete substitutions are in Appendix~\ref{app:instantiation}.
\end{proof}

\paragraph{Bounding the cross-profile payoffs.}
The best-response identities from Lemma~\ref{lem:positive-weights}
let us bound the payoffs appearing when we compare strategies from
different calls. Because all payoffs lie in $[0,1]$, each player's
regret satisfies $f_i(\boldsymbol x)\le1-u_i(\boldsymbol x)$.
We therefore bound these payoff shortfalls at three profiles made
from first-round strategies. The six bounds below follow by
applying~\eqref{eq:properties-stationary} and
\eqref{eq:properties-dual}; later comparisons use them in place
of game-specific payoffs.
For compactness, collect the threshold, stationary regret, and
first-round comparison coefficients in
\begin{equation}
 \boldsymbol\lambda=(r,g,\rho,\eta,\xi).
 \label{eq:payoff-bound-parameters}
\end{equation}
For fixed $r$, the other entries of $\boldsymbol\lambda$ can vary.
Each entry in the next table bounds the payoff shortfall $1-u_i$
for the player named by its column, and therefore also that
player's regret. The six symbols $s_1,s_2,h_1,h_2,S,T$ depend
only on $\boldsymbol\lambda$; their formulas are given in
\eqref{eq:payoff-bound-s}, \eqref{eq:payoff-bound-h}, and
\eqref{eq:payoff-bound-st}; Appendix~\ref{app:instantiation}
gives the substitutions.
\begin{equation}
 \begin{array}{c|cc}
 \text{profile}&\text{upper bound on }1-u_1&\text{upper bound on }1-u_2\\ \hline
 (y_1^1,y_s^2)&s_1(\boldsymbol\lambda)&h_2(\boldsymbol\lambda)\\
 (y_s^1,y_1^2)&h_1(\boldsymbol\lambda)&s_2(\boldsymbol\lambda)\\
 (y_1^1,y_1^2)&S(\boldsymbol\lambda)&T(\boldsymbol\lambda)
 \end{array}
 \label{eq:main-payoff-bounds}
\end{equation}
Here the numerical bound $T(\boldsymbol\lambda)$ is distinct from the stationary
expression $T(\boldsymbol x,\boldsymbol x',\rho,\boldsymbol y)$;
the number and types of their arguments distinguish them.
To obtain the last row, we abbreviate two ratios
of the first-round comparison coefficients as
\begin{equation}
 u=\frac{1-\eta}{\eta},\qquad v=\frac{1-\xi}{\xi}.
 \label{eq:main-relative-weights}
\end{equation}
Applying~\eqref{eq:properties-dual} for $\sm(y_s^1)$ at
$y_1^2$ and its column-player counterpart for $\sm(y_s^2)$ at
$y_1^1$ gives
\begin{equation}
 S(\boldsymbol\lambda)=h_1(\boldsymbol\lambda)-r+u(s_2(\boldsymbol\lambda)-r),\qquad
 T(\boldsymbol\lambda)=h_2(\boldsymbol\lambda)-r+v(s_1(\boldsymbol\lambda)-r),
 \label{eq:main-payoff-st}
\end{equation}
We have now bounded the payoff shortfalls needed below in terms of
$r,g$ and the first-round coefficients $\rho,\eta,\xi$. The second-round
coefficients enter when we compare the two rounds.

\paragraph{Using both $\sm$ rounds together.}
The auxiliary strategy returned by the first round becomes the
input of the second. Applying both calls' inequalities to the same
comparison strategy gives a lower and an upper bound on one payoff.
Eliminating that payoff will link the rounds. To do so, apply
the comparison inequalities for the first-round call $\sm(y_s^2)$ and
the second-round call $\sm(y_1^2)$ at the same arbitrary row
strategy $x^{1\prime}\in\Delta_{n_1}$. Their coefficients are
$\xi$ and $\gamma$, respectively, by~\eqref{eq:call-weight-definitions}.
The column-player version
of~\eqref{eq:properties-dual}, together with
\eqref{eq:failure-candidates} and payoff bounds in $[0,1]$, gives
\begin{equation}
 \begin{aligned}
 \forall x^{1\prime}\in\Delta_{n_1},\quad
 \xi u_2(x^{1\prime},y_1^2)
   &\ge r-(1-\xi)+(1-\xi)u_1(x^{1\prime},y_s^2),\\
 \forall x^{1\prime}\in\Delta_{n_1},\quad
 \gamma u_2(x^{1\prime},y_1^2)
   &\le1-r-(1-\gamma)u_1(x^{1\prime},y_1^2).
 \end{aligned}
 \label{eq:main-shared-dual-instantiations}
\end{equation}
Multiply the first inequality by $\gamma/\xi>0$ and substitute
its lower bound for the common payoff into the second inequality.
Using $v$ from~\eqref{eq:main-relative-weights}, this eliminates
$u_2(x^{1\prime},y_1^2)$ and gives
\begin{equation}
 \forall x^{1\prime}\in\Delta_{n_1},\quad
 (1-\gamma)u_1(x^{1\prime},y_1^2)+\gamma v u_1(x^{1\prime},y_s^2)
 \le1-r-\frac\gamma\xi[r-(1-\xi)].
 \label{eq:main-shared-comparison}
\end{equation}
This inequality holds for every row strategy, so it bounds the row
best-response payoff against a mixture of $y_1^2$ and $y_s^2$.
Define a mixed column strategy to use as a comparison in the proof:
\begin{equation}
 D_E=1-\gamma+\gamma v>0,\qquad
 y_E^2=\frac{(1-\gamma)y_1^2+\gamma v y_s^2}{D_E}\in\Delta_{n_2}.
 \label{eq:main-column-comparison-mixture}
\end{equation}
By payoff linearity and the definition of $B_1$, inequality
\eqref{eq:main-shared-comparison} implies
\begin{equation}
 B_1(y_E^2)\le
 \frac{1-r-\frac\gamma\xi[r-(1-\xi)]}{D_E}.
 \label{eq:main-column-mixture-br-bound}
\end{equation}
We can therefore apply the comparison inequality
\eqref{eq:properties-dual} for the second-round call
$\sm(y_1^1)$ at $y_E^2$, since that inequality holds throughout
the column simplex. This call has coefficient $\alpha$.
Applying its inequality at $y_s^2$ and $y_1^2$, and that
of $\sm(y_1^2)$ with coefficient $\gamma$ at $y_s^1$ and $y_1^1$,
using~\eqref{eq:main-payoff-bounds}, gives four upper bounds on
second-round candidate regret. Since all those candidates exceed
$r$ under the failure assumption, these bounds must be at least $r$:
\begin{equation}
 \begin{aligned}
 U_g&=\alpha s_1(\boldsymbol\lambda)+(1-\alpha)h_2(\boldsymbol\lambda)\ge r,&
 H_g&=\alpha S(\boldsymbol\lambda)+(1-\alpha)T(\boldsymbol\lambda)\ge r,\\
 V_g&=(1-\gamma)h_1(\boldsymbol\lambda)+\gamma s_2(\boldsymbol\lambda)\ge r,&
 K_g&=(1-\gamma)S(\boldsymbol\lambda)+\gamma T(\boldsymbol\lambda)\ge r.
 \end{aligned}
 \label{eq:main-weighted-regret-bounds}
\end{equation}
These four expressions depend on $\boldsymbol\lambda$ and the
second-round coefficients $\alpha,\gamma$.
At $y_E^2$, the row player's best-response term in the second-round
comparison is bounded by~\eqref{eq:main-column-mixture-br-bound}.
Because $y_E^2$ mixes $y_1^2$ and $y_s^2$, its other payoff terms
are weighted averages of terms already bounded in $H_g$ and $U_g$.
Collecting these bounds gives
\begin{equation}
 F_2^1\le
 \frac{(1-\gamma)H_g+\gamma vU_g
       +\alpha[\gamma(1-r-rv)-r]}{D_E}.
 \label{eq:main-mixture-dual-bound}
\end{equation}
Since $F_2^1>r$, multiplication by $D_E$ yields the further
necessary inequality
\begin{equation}
 E_{1,g}=(1-\gamma)(H_g-r)+\gamma v(U_g-r)
               +\alpha\gamma(1-r-rv)-\alpha r\ge0.
 \label{eq:main-shared-coupling}
\end{equation}
The shared inequality $E_{1,g}\ge0$ is therefore a new necessary
condition for failure: it links both players' $\sm$ calls across
the two rounds. Its remaining variables are the threshold,
stationary regret, and all five comparison coefficients.
Appendix~\ref{app:instantiation} gives
the substitution before regrouping terms and its symmetric counterpart.

\paragraph{Using $\om$ to test selected mixtures.}
The preceding inequalities connect the four $\sm$ calls. We now
use $\om$: every profile $\boldsymbol x(p,q)$ must have regret above
$r$ under~\eqref{eq:failure-candidates}. We derive upper bounds on
both players' regrets at these profiles. If both bounds are at most
$r$ for even one choice of $p,q$, the assumed failure is impossible.
The bounds have simple linear pieces, so we will test four selected
segments rather than carry $p,q$ through the rest of the proof.

By~\eqref{eq:mixing-regrets}, each mixed regret equals an average
of endpoint regrets minus the Jensen gap explained in
Section~\ref{sec:properties}. The gap reflects the fact that one
response strategy must serve against both endpoint opponents.
To bound the column player's gap from below,
consider the regrets of the same column strategy $x^{2\prime}$
against the two row strategies $x_s^1$ and $y_s^1$:
\begin{equation}
 A=f_2(x_s^1,x^{2\prime}),\qquad
 B=f_2(y_s^1,x^{2\prime}).
 \label{eq:main-column-payoff-losses}
\end{equation}
Both numbers are nonnegative, and the same strategy is used at both
endpoints. Apply stationarity~\eqref{eq:properties-stationary}
at $(x_s^1,x^{2\prime})$, then substitute the resulting payoff
lower bound into the comparison inequality~\eqref{eq:properties-dual}
for the first-round call $\sm(y_s^1)$ at $x^{2\prime}$, with
$\omega=\eta$. This gives
\begin{equation}
 \forall x^{2\prime}\in\Delta_{n_2},\quad
 \frac{\eta\sigma}{\rho}A+(1-\eta)B\ge\ell_c(\boldsymbol\lambda),
 \qquad \ell_c(\boldsymbol\lambda)=r-\eta(1-g/\rho).
 \label{eq:main-column-loss-comparison}
\end{equation}
The function $\ell_c$ is the lower bound obtained by these two
comparisons; the index identifies the column player.
If it is positive, every column strategy has positive regret against
at least one of the two row strategies. The Jensen-gap identity
\eqref{eq:mixing-gap-losses} takes the weighted minimum of these same
regrets:
\[
 J_2(p)=\min_{x^{2\prime}\in\Delta_{n_2}}\{(1-p)A+pB\}.
\]
The following lemma therefore converts the joint regret comparison
into the correction needed in the mixed profile's regret bound.

\begin{lemma}[A lower bound on the gain from mixing]
\label{lem:main-jensen}
Consider an execution satisfying the building-block properties of
Section~\ref{sec:properties} with zero errors~\eqref{eq:ideal-parameters},
and suppose that its selected output has regret above $r$ as in
\eqref{eq:failure-hypothesis}. Then, for every $p\in[0,1]$,
the column Jensen gap satisfies
\[
 J_2(p)\ge\ell_c(\boldsymbol\lambda)
       \min\left\{\frac{(1-p)\rho}{\eta\sigma},\frac p{1-\eta}\right\}.
\]
There is a symmetric bound for the row Jensen gap $J_1(q)$.
\end{lemma}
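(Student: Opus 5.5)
The lemma should follow from the Jensen-gap identity \eqref{eq:mixing-gap-losses} together with the joint comparison \eqref{eq:main-column-loss-comparison}. By the displayed formula just before the lemma, $J_2(p)=\min_{x^{2\prime}}\{(1-p)A+pB\}$, with $A,B\ge0$ depending on $x^{2\prime}$ as in \eqref{eq:main-column-payoff-losses}. So it suffices to fix an arbitrary $x^{2\prime}\in\Delta_{n_2}$ and show
\[
 (1-p)A+pB\ \ge\ \ell_c(\boldsymbol\lambda)\min\left\{\frac{(1-p)\rho}{\eta\sigma},\frac{p}{1-\eta}\right\},
\]
since a bound valid for every comparison strategy passes to the minimum.

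First I would record that the coefficients in \eqref{eq:main-column-loss-comparison} are positive. Lemma~\ref{lem:positive-weights} gives $r<\rho,\eta<1-r$, so $\sigma=1-\rho>0$, $\eta>0$ and $1-\eta>0$. Hence $a=\eta\sigma/\rho>0$ and $b=1-\eta>0$, and we have $aA+bB\ge\ell_c$ for all $x^{2\prime}$.

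Next I would rescale the target expression into this weighted form. Set $c=\min\{(1-p)/a,\ p/b\}\ge0$, which is exactly the factor in the lemma because $(1-p)/a=(1-p)\rho/(\eta\sigma)$ and $p/b=p/(1-\eta)$. Then $ca\le 1-p$ and $cb\le p$. Using $A,B\ge0$,
\[
 (1-p)A+pB\ \ge\ c\,aA+c\,bB\ =\ c(aA+bB)\ \ge\ c\,\ell_c(\boldsymbol\lambda).
\]
The nonnegativity of $A,B$ is what lets the coefficients be lowered termwise; it holds because regrets are nonnegative.

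The only delicate point is the sign of $\ell_c$. The chain above is valid whatever the sign of $\ell_c$, since $c\ge0$ multiplies a true inequality. If $\ell_c<0$, the conclusion is weaker than $J_2\ge0$ but still correct, so no case split is needed. The endpoint cases $p=0$ and $p=1$ give $c=0$ and a trivial bound, which is consistent.

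The symmetric row statement follows in the same way. Apply stationarity at $(x^{1\prime},x_s^2)$, then the column-player dual inequality for $\sm(y_s^2)$ with coefficient $\xi$. This gives $\frac{\xi\rho}{\sigma}A'+(1-\xi)B'\ge\ell_r$, with $\ell_r=r-\xi(1-g/\sigma)$. Repeating the rescaling argument yields the bound on $J_1(q)$. The main obstacle is not this lemma itself but the validity of \eqref{eq:main-column-loss-comparison}, which the statement takes as already derived. I would take it as given, after checking that it was derived uniformly in $x^{2\prime}$ with the fixed $\rho,\eta$.
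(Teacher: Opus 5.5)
Your proposal is correct and follows the paper's proof essentially step for step. The paper likewise sets $c_1=\eta\sigma/\rho$, $c_2=1-\eta$, and $M=\min\{(1-p)/c_1,p/c_2\}$, and uses $A,B\ge0$ to get $(1-p)A+pB\ge M(c_1A+c_2B)\ge M\ell_c(\boldsymbol\lambda)$. It then minimizes over $x^{2\prime}$ via the Jensen-gap identity and notes that the bound remains valid when $\ell_c(\boldsymbol\lambda)\le0$.
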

\begin{proof}
Set $c_1=\eta\sigma/\rho$, $c_2=1-\eta$, and
$M=\min\{(1-p)/c_1,p/c_2\}$. Both $c_1,c_2$ are positive.
For each comparison strategy, nonnegativity of $A,B$ gives
\[
 (1-p)A+pB
 \ge M(c_1A+c_2B)\ge M\ell_c(\boldsymbol\lambda).
\]
Minimizing the left side over $x^{2\prime}$ gives $J_2(p)$ by its
regret identity in Section~\ref{sec:optimalmixing}.
Thus a joint restriction on endpoint best responses becomes a
reduction in the regret of the mixed profile. The inequality remains
valid if $\ell_c(\boldsymbol\lambda)$ is nonpositive.
\end{proof}

Subtracting these Jensen-gap bounds from the endpoint interpolation gives
regret upper bounds $R_g(p,q)$ and $C_g(p,q)$ for the row and column
players. They depend on $(\boldsymbol\lambda,p,q)$.
For the row player, define the symmetric Jensen-gap lower bound by
\[
 \ell_r(\boldsymbol\lambda)=r-\xi(1-g/\sigma),\qquad
 j_r(\boldsymbol\lambda,q)=\ell_r(\boldsymbol\lambda)\min\left\{\frac{(1-q)\sigma}{\xi\rho},\frac q{1-\xi}\right\}.
\]
Using $J_1(q)\ge j_r(\boldsymbol\lambda,q)$ gives
\[
 R_g(p,q)=(1-p)(1-q)g+q(1-pg/\rho)-j_r(\boldsymbol\lambda,q).
\]
The first two terms interpolate the four row-regret bounds
$g,0,1,1-g/\rho$; the last term is the improvement from the Jensen gap.
For the column player, the corresponding formulas are
\[
 \begin{aligned}
 j_c(\boldsymbol\lambda,p)
   &=\ell_c(\boldsymbol\lambda)
       \min\left\{\frac{(1-p)\rho}{\eta\sigma},\frac p{1-\eta}\right\},\\
 C_g(p,q)
   &=(1-p)(1-q)g+p(1-qg/\sigma)-j_c(\boldsymbol\lambda,p).
 \end{aligned}
\]
Both formulas and the two Jensen-gap bounds are also stated in
\eqref{eq:rectangle-upper-bounds} and~\eqref{eq:gap-lower-bounds};
Appendix~\ref{app:mixing} gives the derivation.
Since both actual regrets are bounded above by these functions,
failure requires
\[
 \forall p,q\in[0,1],\quad \max\{R_g(p,q),C_g(p,q)\}>r.
\]
Each Jensen-gap bound uses the minimum of two affine expressions.
We choose comparison coefficients where these expressions agree:
\[
 \frac{(1-p)\rho}{\eta\sigma}=\frac p{1-\eta},\qquad
 \frac{(1-q)\sigma}{\xi\rho}=\frac q{1-\xi}.
\]
Solving these equations gives
\[
 p_\dagger=\frac{\rho(1-\eta)}{\rho(1-\eta)+\sigma\eta},\qquad
 q_\dagger=\frac{\sigma(1-\xi)}{\sigma(1-\xi)+\rho\xi}.
\]
It is enough for our contradiction to keep four necessary
\emph{segment conditions} along these cuts:
\begin{equation}
 \begin{aligned}
 \min_{0\le q\le q_\dagger}\max\{R_g(p_\dagger,q),C_g(p_\dagger,q)\}&\ge r,\\
 \min_{q_\dagger\le q\le1}\max\{R_g(p_\dagger,q),C_g(p_\dagger,q)\}&\ge r,\\
 \min_{0\le p\le p_\dagger}\max\{R_g(p,q_\dagger),C_g(p,q_\dagger)\}&\ge r,\\
 \min_{p_\dagger\le p\le1}\max\{R_g(p,q_\dagger),C_g(p,q_\dagger)\}&\ge r.
 \end{aligned}
 \label{eq:main-segment-conditions}
\end{equation}
Both regret bounds are affine on each of these segments, so each
minimum occurs at an endpoint or an intersection of the two bounds.
Symbolic calculation with SymPy~\cite{sympy2017} constructs the
endpoint and intersection inequalities, and an exact machine check
reconstructs their identities. This turns
\eqref{eq:main-segment-conditions} into finite inequalities and
eliminates the mixing parameters.
Appendix~\ref{app:mixing} expands these inequalities.

\paragraph{Removing the stationary regret.}
The payoff bounds have removed the individual payoff variables, and
the segment tests have removed the mixing parameters $p,q$. For
fixed $r$, only $g$ and the five comparison coefficients
$\rho,\eta,\xi,\alpha,\gamma$ remain. We now remove $g$ as well.
Every assumed failure has $g>r$. The next lemma shows that its
necessary inequalities still hold if we substitute $g=r$ while
keeping those coefficients fixed. This is an algebraic substitution
in the inequalities; it does not require another game or algorithm
execution.

\begin{lemma}[Eliminating the stationary regret]
\label{lem:main-g-projection}
Suppose an execution satisfies the properties of
Section~\ref{sec:properties}, with zero errors~\eqref{eq:ideal-parameters}
and output regret above $r$ as in~\eqref{eq:failure-hypothesis},
where $3/10\le r\le1/3$.
Hold $r,\rho,\eta,\xi,\alpha,\gamma$ fixed and replace $g$ by $r$
in the numerical bounds. The six payoff upper bounds remain
nonnegative; the second-round inequalities
\eqref{eq:main-weighted-regret-bounds}, the shared inequality
\eqref{eq:main-shared-coupling}, and its symmetric counterpart
remain valid, as do the four segment conditions
\eqref{eq:main-segment-conditions}.
\end{lemma}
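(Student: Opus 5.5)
The natural route is a monotonicity argument in $g$ over the interval $[r,g_0]$, where $g_0>r$ is the actual stationary regret of the failed execution. Every necessary inequality we have derived has the form $\Phi(\boldsymbol\lambda)\ge r$ or $\Phi(\boldsymbol\lambda)\ge0$. Here $\Phi$ is built from $s_1,s_2,h_1,h_2,S,T$, from $\ell_r,\ell_c$, and from the rectangle bounds $R_g,C_g$, with the coefficients $\rho,\eta,\xi,\alpha,\gamma$ held fixed. I would show that each such $\Phi$ is nonincreasing in $g$ on $[r,g_0]$. Then the inequality, which holds at $g=g_0$ by the preceding derivations, persists at $g=r$.

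The dependence on $g$ enters in only two ways. First, it enters through the payoff lower bounds \eqref{eq:main-stationary-payoff-lower-bounds}, namely $u_1(y_s^1,y_s^2)\ge g/\rho$ and $u_2(y_s^1,y_s^2)\ge g/\sigma$. These make the shortfall bounds $s_i,h_i$ and the endpoint regret bounds $1-g/\rho$ and $1-g/\sigma$ affine and decreasing in $g$. Second, it enters through the Jensen lower bounds $\ell_c=r-\eta(1-g/\rho)$ and $\ell_r$, which increase with $g$; since they are subtracted, they also make $R_g,C_g$ decrease in $g$. Two terms need separate care. The term $(1-p)(1-q)g$ in $R_g,C_g$ increases with $g$, so it must be combined with the $g$-dependent terms $-qpg/\rho$, $-j_r$ and so on. I would check directly that the total $g$-coefficient of $R_g$ is
\[
(1-p)(1-q)-\frac{pq}{\rho}+\frac{\xi}{\sigma}\min\Bigl\{\frac{(1-q)\sigma}{\xi\rho},\frac{q}{1-\xi}\Bigr\}\ \text{(with the appropriate sign)},
\]
and find its sign on each segment of \eqref{eq:main-segment-conditions}.

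If the coefficient is not uniformly nonpositive, I would take a different route, because on the segments the failure condition only says that a maximum of two affine functions is at least $r$. For those segments, I would redo the bound at $g=r$ by reusing $g\ge f(\boldsymbol x_o)>r$ together with the identity $f_1(\boldsymbol x_s)=g$, and interpolate using a weaker, $g$-free endpoint value. Equivalently, I would note that replacing the row-regret endpoint value $g$ by $r$ only lowers the interpolation, while the segment inequality must then be established from the Jensen-gap side.

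Concretely, the steps in order are these. (i) Write each of $s_1,s_2,h_1,h_2$ explicitly, following Appendix~\ref{app:instantiation}, as an affine function of $g$ with coefficient $-c/\rho$ or $-c/\sigma$, where $c\ge0$. Then $S,T$ are decreasing in $g$ by \eqref{eq:main-payoff-st}, since $u,v>0$. (ii) Since $U_g,H_g,V_g,K_g$ are convex combinations of these, and $E_{1,g}$ is a nonnegative combination (using $1-\gamma,\gamma v>0$), all of these inequalities carry over to $g=r$. (iii) For nonnegativity of the six bounds at $g=r$, observe that each bound at $g=r$ dominates its value at $g_0$, which was a valid upper bound on a nonnegative shortfall. (iv) For the segments, handle the $R_g,C_g$ terms as described above, using $r\ge3/10$ and $r<\rho,\eta,\xi<1-r$ from Lemma~\ref{lem:positive-weights} to sign the coefficients.

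I expect step (iv) to be the main obstacle. The monotonicity of $\max\{R_g,C_g\}$ in $g$ is not evident termwise, and the hypothesis $3/10\le r$ presumably exists exactly to make the combined $g$-coefficient nonpositive on the relevant segment pieces. I would first try to verify that sign by a direct case analysis on the two branches of each minimum. If that fails on some piece, I would fall back on the exact symbolic checks in the accompanying repository.
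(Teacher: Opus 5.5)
\textbf{Steps (i)--(iii) are fine; step (iv) has a gap on two of the four segments.} Your steps (i)--(iii) match the paper. Every $s_i,h_i$ is nonincreasing in $g$, so $S,T$, the weighted bounds $U_g,V_g,H_g,K_g$, and the nonnegative combinations $E_{1,g},E_{2,g}$ can only improve when $g$ is lowered to $r$. Termwise monotonicity also handles the two segments from $(p_\dagger,q_\dagger)$ to $(p_\dagger,1)$ and to $(1,q_\dagger)$: there both bounds are affine along the segment and have negative $g$-derivative at both endpoints.

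The gap is on the two segments ending at $(p_\dagger,0)$ and $(0,q_\dagger)$. At $(p_\dagger,0)$ the row Jensen-gap bound vanishes, $j_r(\boldsymbol\lambda,0)=0$, so $R_g(p_\dagger,0)=(1-p_\dagger)g$. Its $g$-coefficient is $1-p_\dagger>0$ for every $r$; symmetrically, $C_g(0,q_\dagger)=(1-q_\dagger)g$. The combined coefficient you want to sign is therefore not nonpositive on these pieces, and no lower bound on $r$ changes this.

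Your fallback does not repair it, for two reasons.
\begin{itemize}
\item The lemma is a statement about the arithmetic functions $R_r,C_r$, and these are not upper bounds on the execution's actual regrets. At $(0,0)$, $R_r=r<g=f_1(\boldsymbol x_s)$. So the segment condition at $g=r$ cannot be re-derived from the game.
\item Replacing the endpoint value $g$ by the smaller $r$ lowers one of the two bounds. That is the wrong direction for preserving a condition of the form $\min\max\{R,C\}\ge r$.
\end{itemize}

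\textbf{What is missing is an argument about the maximum rather than about each bound separately.} On the segment from $(p_\dagger,0)$ to $(p_\dagger,q_\dagger)$, the paper proceeds as follows.
\begin{itemize}
\item First use the first-round comparisons \eqref{eq:main-first-round-g-comparisons}, i.e.\ the upper bound $g\le\rho\sigma(1-r)/D_c$, which you never invoke. This gives $C_g(p_\dagger,0)-R_g(p_\dagger,0)=(\rho(1-r)-\eta g)/D_c\ge0$. Since $C_g(p_\dagger,0)$ decreases in $g$, the outer endpoint condition survives the move. The central endpoint has both $g$-derivatives negative.
\item Along the segment the row bound increases in $q$ and the column bound decreases. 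The only remaining case is an interior crossing, whose condition is $\mathcal D(g)\ge0$ with $\mathcal D$ from \eqref{eq:main-g-intersection-condition}.
\item One then proves $\partial_g\mathcal D<0$ on the whole admissible interval $r\le g\le\min\{\rho\sigma(1-r)/D_c,\rho\sigma(1-r)/D_r\}$. The relevant numerator $\mathcal Q(g)$ is affine in $g$ with slope $-2\eta Z_Q<0$. It therefore suffices to certify $\mathcal Q>0$ at the largest admissible $g$. This is a genuine polynomial inequality, verified by Bernstein coefficients on $3/10\le r\le1/3$; that is where the hypothesis $r\ge3/10$ enters, not in signing a termwise coefficient.
\item A crossing can enter or leave the segment only through an endpoint, whose condition is already preserved.
\end{itemize}
The segment ending at $(0,q_\dagger)$ follows by exchanging the players. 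Without this endpoint-dominance and determinant-monotonicity step, your monotonicity scheme does not establish the segment conditions.
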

\begin{proof}[Proof idea]
We present only the main proof ideas here;
Lemma~\ref{lem:app-g-projection} in Appendix~\ref{app:scalar}
provides the complete proof of Lemma~\ref{lem:main-g-projection}.
Applying the comparison inequalities for $\sm(y_s^1)$ at $y_s^2$
and for $\sm(y_s^2)$ at $y_s^1$, and using the stationary payoff
bounds~\eqref{eq:main-stationary-payoff-lower-bounds}, gives
\begin{equation}
 \begin{aligned}
 r&\le\eta(1-g/\rho)+(1-\eta)(1-g/\sigma),\\
 r&\le\xi(1-g/\sigma)+(1-\xi)(1-g/\rho).
 \end{aligned}
 \label{eq:main-first-round-g-comparisons}
\end{equation}
The right sides decrease with $g$, so these inequalities bound
the admissible interval for $g$ from above. Together with $g\ge r$,
they specify the interval on which we prove preservation.

Along this interval, each of the six payoff upper bounds increases
or stays unchanged as $g$ decreases. For example,
\[
 \frac{\partial s_1(\boldsymbol\lambda)}{\partial g}=-\frac1\rho-\frac u\sigma<0.
\]
The second-round and shared inequalities combine these bounds with
nonnegative coefficients, so they continue to hold.

For the four mixing conditions, we use their finite endpoint and
intersection cases. On the two segments ending at a mixing
coefficient equal to one, both regret upper bounds increase as
$g$ decreases.

For the segment $p=p_\dagger$, $0\le q\le q_\dagger$, the
first-round comparisons~\eqref{eq:main-first-round-g-comparisons}
give
\[
 C_g(p_\dagger,0)\ge R_g(p_\dagger,0).
\]
The column bound at this outer endpoint and both bounds at the
central endpoint increase as $g$ decreases, so the endpoint
conditions are preserved. Along the segment, the row bound increases
with $q$ and the column bound decreases. If they intersect inside
the segment, clearing the positive difference of their slopes
shows that the common value is at least $r$ exactly when
$\mathcal D(g)\ge0$, where
\begin{equation}
 \begin{aligned}
 \mathcal D(g)=\frac1{q_\dagger}\bigl\{&
 [C_g(p_\dagger,0)-r][R_g(p_\dagger,q_\dagger)-r]\\
 &-[R_g(p_\dagger,0)-r][C_g(p_\dagger,q_\dagger)-r]\bigr\}.
 \end{aligned}
 \label{eq:main-g-intersection-condition}
\end{equation}
We use an exact machine check to establish
\[
 \frac{\partial\mathcal D(g)}{\partial g}<0
\]
on the admissible parameter ranges. After clearing positive
denominators, this becomes a polynomial sign condition, verified
exactly as detailed in Appendix~\ref{app:scalar}~\cite{garloff1986}.
That appendix gives the polynomial, domain, and checker.
Thus $\mathcal D(g)$ increases
as $g$ decreases, preserving the intersection condition.
An intersection enters or leaves the segment through an endpoint,
whose condition is already preserved. Exchanging the two players
gives the segment ending at $p=0$. All four finite segment conditions
therefore continue to hold down to $g=r$.
\end{proof}

After substituting $g=r$, write the remaining parameters as
\begin{equation}
 \boldsymbol\lambda_r=(r,r,\rho,\eta,\xi).
 \label{eq:projected-payoff-parameters}
\end{equation}
We now simplify the first-round regret bounds to obtain explicit
formulas. Applying the comparison inequalities of $\sm(y_s^1)$
at $x_s^2$ and of $\sm(y_s^2)$ at $x_s^1$ gives
\begin{equation}
 f_1(y_1^1,x_s^2)\le\min\{1,(1-\eta-r)/\eta\},\qquad
 f_2(x_s^1,y_1^2)\le\min\{1,(1-\xi-r)/\xi\}.
 \label{eq:main-first-round-regret-bounds}
\end{equation}
Replacing the two right sides by $(1-\eta-r)/\eta$ and
$(1-\xi-r)/\xi$ gives larger upper bounds.
Combining them with stationarity~\eqref{eq:properties-stationary}
at $(y_1^1,x_s^2)$ and $(x_s^1,y_1^2)$ gives relaxed bounds
for $h_1,h_2$.
From this point on, the bare symbols $s_1,s_2,h_1,h_2,S,T$
denote the functions after substituting $g=r$ and making this
relaxation:
\begin{equation}
 \begin{aligned}
 s_1&=1-r-r/\rho+u(1-r-r/\sigma),\\
 s_2&=1-r-r/\sigma+v(1-r-r/\rho),\\
 h_1&=\min\{1,1-r(1+\sigma)/\rho+\sigma(1-r)v/\rho\},\\
 h_2&=\min\{1,1-r(1+\rho)/\sigma+\rho(1-r)u/\sigma\},\\
 S&=h_1-r+u(s_2-r),\qquad T=h_2-r+v(s_1-r).
 \end{aligned}
 \label{eq:main-projected-payoff-bounds}
\end{equation}
These functions depend on $r,\rho,\eta,\xi$.
Using them in~\eqref{eq:main-weighted-regret-bounds} and
\eqref{eq:main-shared-coupling} defines the corresponding expressions
$U,V,H,K,E_1$ after the relaxation. The mixing bounds become
$R_r(p,q)=R_g(p,q)|_{g=r}$ and $C_r(p,q)=C_g(p,q)|_{g=r}$.

The next proposition states the implication we need. A failed run
would assign values to $\rho,\eta,\xi,\alpha,\gamma$ that satisfy
all the remaining necessary inequalities. Call this finite
collection $\Sigma_r$. Proving that no five values satisfy it will
rule out the assumed failure. The complete list is
\eqref{eq:scalar-system} in Appendix~\ref{app:scalar}.

\begin{proposition}[Necessary inequalities for a failed run]
\label{prop:finite-system}
There is an explicit collection of inequalities $\Sigma_r(\boldsymbol\theta)$ in the
five real variables
\[
 \boldsymbol\theta=(\rho,\eta,\xi,\alpha,\gamma)
\]
such that, for an execution with the properties of
Section~\ref{sec:properties} and zero errors~\eqref{eq:ideal-parameters},
\[
 \forall r\in[3/10,1/3],\quad
 f(\boldsymbol x_{\mathrm{out}})>r
 \ \Longrightarrow\ \exists\boldsymbol\theta,\ \Sigma_r(\boldsymbol\theta).
\]
The collection consists of coefficient bounds, the payoff bounds and shared
comparisons derived above, and the four finite segment conditions.
\end{proposition}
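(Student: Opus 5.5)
The plan is to assemble $\Sigma_r(\boldsymbol\theta)$ from the necessary conditions already derived in Section~\ref{sec:reduction}, and then check that a failed run supplies a witness $\boldsymbol\theta$. Fix $r\in[3/10,1/3]$ and an execution with zero errors~\eqref{eq:ideal-parameters} and $f(\boldsymbol x_{\mathrm{out}})>r$. For the witness we take the five coefficients the execution actually produces: the stationary weight $\rho$ from~\eqref{eq:properties-stationary} and the four $\sm$ weights $\eta,\xi,\alpha,\gamma$ from~\eqref{eq:call-weight-definitions}, each taken from the fixed optimal LP solutions.

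We define $\Sigma_r$ as the conjunction of the following groups. First, the coefficient bounds $r<\rho,\eta,\xi,\alpha,\gamma<1-r$. Second, nonnegativity of the six projected payoff bounds $s_1,s_2,h_1,h_2,S,T$ of~\eqref{eq:main-projected-payoff-bounds}. Third, the four second-round inequalities $U,V,H,K\ge r$ and the shared inequalities $E_1\ge0$ together with its symmetric counterpart. Fourth, the four segment conditions~\eqref{eq:main-segment-conditions} with $R_r,C_r$ in place of $R_g,C_g$, expanded into their finite endpoint and intersection forms as in Appendix~\ref{app:mixing}. It may also be convenient to include the first-round comparisons~\eqref{eq:main-first-round-g-comparisons} at $g=r$. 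Every group is an explicit inequality in $\boldsymbol\theta$ once $r$ is fixed, so $\Sigma_r$ is an explicit finite collection.

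The verification then proceeds in the order of Section~\ref{sec:reduction}:
\begin{enumerate}
 \item Lemma~\ref{lem:positive-weights} gives the coefficient bounds and the best-response identities.
 \item The instantiations behind~\eqref{eq:main-payoff-bounds}, \eqref{eq:main-weighted-regret-bounds}, and~\eqref{eq:main-shared-coupling} give the payoff and round-coupling inequalities at the true value $g>r$.
 \item Lemma~\ref{lem:main-jensen}, inserted into~\eqref{eq:mixing-regrets}, gives $f_1(\boldsymbol x(p,q))\le R_g(p,q)$ and $f_2(\boldsymbol x(p,q))\le C_g(p,q)$. With~\eqref{eq:failure-candidates}, this yields the segment conditions at $g$.
 \item Lemma~\ref{lem:main-g-projection}, which applies because $3/10\le r\le1/3$, transfers every one of these inequalities to $g=r$ with $\boldsymbol\theta$ held fixed.
\end{enumerate}

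The remaining issue is the relaxation that defines the bare symbols. It replaces $\min\{1,(1-\eta-r)/\eta\}$ by $(1-\eta-r)/\eta$, and similarly for $\xi$, and these quantities enter $h_1,h_2$. We must check that this only enlarges the upper bounds, so that each inequality of the form ``bound $\ge r$'' or ``bound $\ge0$'' stays true after enlargement. This holds because the bounds enter $U,V,H,K,E_1$ with nonnegative coefficients $\alpha,1-\alpha,\gamma,1-\gamma,\gamma v,u,v$, which are positive by the first step.

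The main obstacle is making this monotonicity bookkeeping airtight. We must confirm that each composite expression is nondecreasing in each relaxed bound. For $S$ and $T$, this means checking that $u,v\ge0$, and that the $\min\{1,\cdot\}$ truncation in $h_1,h_2$ is applied after the relaxation rather than before. We must also confirm that the segment conditions do not depend on the relaxed quantities, since $R_r$ and $C_r$ involve only $\rho,\eta,\xi$ and $r$. Once this is done, the five actual coefficients of the run satisfy $\Sigma_r$, which proves the implication.
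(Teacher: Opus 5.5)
Your proposal is correct and follows the paper's own argument in Appendix~\ref{app:scalar}. You take the execution's actual weights as the witness, obtain the premises \eqref{eq:app-g-projection-premises} at the true $g$, lower $g$ to $r$ via Lemma~\ref{lem:app-g-projection}, and then replace $d_1,d_2$ by their uncapped values, which only enlarges $h_i$ and hence $S,T,U,V,H,K,E_1,E_2$ while leaving the segment predicates untouched. Putting strict rather than the paper's non-strict coefficient bounds into $\Sigma_r$ is harmless, since the witness satisfies the strict ones.
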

\subsection{Finding and defining the candidate threshold}
\label{sec:bound-discovery}

Proposition~\ref{prop:finite-system} has reduced a failed run to
five real numbers $\boldsymbol\theta=(\rho,\eta,\xi,\alpha,\gamma)$
that must satisfy $\Sigma_r$; strategies, payoffs, the mixing
probabilities $p,q$, and $g$ no longer appear. We now search for
where these inequalities come closest to holding at a large $r$.
Numerical exploration suggests the comparisons responsible;
exact equations then define a candidate threshold $r_*$. This
subsection defines that candidate. The next one proves that no
other assignment satisfies $\Sigma_r$ above it.
The exploratory problem is
\begin{equation}
 \begin{aligned}
 \operatorname*{maximize}_{r,\boldsymbol\theta}&\quad r\\
 \text{subject to}&\quad 3/10\le r\le1/3,\quad
                         \Sigma_r(\boldsymbol\theta).
 \end{aligned}
 \label{eq:main-bound-search}
\end{equation}
We explore~\eqref{eq:main-bound-search} from several initial points
using SciPy's SLSQP routine~\cite{scipy2020,kraft1988}. The search
suggests a threshold near $0.30954$ and identifies comparisons close
to equality; Appendix~\ref{app:discovery} records the script and
these observations. They guide the following algebraic definition,
but the numerical search is not used to prove the bound.

Near the numerical candidate, two second-round comparisons equal
$r$, the payoff bound $h_1$ reaches its upper limit of one, and
both a selected mixing comparison and the comparison linking the
two $\sm$ rounds become equalities. The first three observations
let us express $\alpha,\gamma,\xi$ using fewer variables. The
remaining two equalities, plus a condition identifying the top of
their local solution curve, define $r_*$ exactly.

First consider the two second-round comparisons at the profile
$(y_1^1,y_1^2)$ formed by the auxiliary strategies returned in the
first round. They are
\eqref{eq:main-weighted-regret-bounds} after the projection and
relaxation:
\begin{equation}
 H=\alpha S+(1-\alpha)T\ge r,\qquad
 K=(1-\gamma)S+\gamma T\ge r.
 \label{eq:main-discovery-hk}
\end{equation}
Exchanging the players if necessary, we describe the numerical
candidate in the orientation $S>r>T$. Both $H-r$ and $K-r$
are close to zero. We therefore impose $H=r$ and $K=r$ in the
local algebraic calculation. Solving these two equations determines
$\alpha$ and $\gamma$. Denote the resulting value of $\alpha$ by $A_0$:
\begin{equation}
 A_0=\frac{r-T}{S-T},\qquad \alpha=A_0,\qquad\gamma=1-A_0.
 \label{eq:main-discovery-weights}
\end{equation}
The denominator is positive because $S>r>T$.
Thus $A_0$ is the coefficient that makes the weighted average of
$S$ and $T$ equal to $r$.

Another observed equality determines the first-round weight $\xi$.
Recall the upper-bound function from~\eqref{eq:main-projected-payoff-bounds}:
\begin{equation}
 h_1=\min\left\{1,\,
  1-\frac{r(1+\sigma)}\rho+\frac{\sigma(1-r)}\rho v\right\},
 \qquad \sigma=1-\rho,\quad v=\frac{1-\xi}{\xi}.
 \label{eq:main-discovery-h1}
\end{equation}
The numerical points approach the place where this minimum changes
which formula it uses. Equating the second expression to one and
solving for $v$ gives that formula-change value, denoted by $v_0$:
\begin{equation}
 v_0=\frac{r(1+\sigma)}{\sigma(1-r)},\qquad
 v=v_0,\qquad \xi=\frac{\sigma(1-r)}{\sigma+r}.
 \label{eq:main-discovery-cap-switch}
\end{equation}
On the line $v=v_0$, $h_1=1$, and $\xi$ is determined by $r,\rho$.
The same numerical points have $h_2<1$. For the local calculation
we therefore use the second expression in the minimum defining $h_2$:
\begin{equation}
 h_2=1-\frac{r(1+\rho)}\sigma
              +\frac{\rho(1-r)}\sigma u,\qquad u=\frac{1-\eta}{\eta}.
 \label{eq:main-discovery-h2}
\end{equation}
The formulas for $s_1,s_2,S,T$ remain those in
\eqref{eq:main-projected-payoff-bounds}, with these substitutions.

We have now expressed $\xi,\alpha,\gamma$ in terms of
$(r,\rho,\eta)$. Two further equalities appear in the numerical
outputs. One is $E_1=0$ in the shared comparison
\eqref{eq:main-shared-coupling}, after the projection and relaxation.
The other comes from the segment $p=p_\dagger$, $0\le q\le q_\dagger$
in~\eqref{eq:main-segment-conditions}: the two affine regret upper
bounds intersect at value $r$. By~\eqref{eq:main-g-intersection-condition},
this intersection equality is $\mathcal D(r)=0$.
Substituting~\eqref{eq:main-discovery-weights}--\eqref{eq:main-discovery-h2}
into these two equalities leaves rational equations in $(r,\rho,\eta)$.
Clearing their positive denominators gives polynomial equations
$N_1(r,\rho,\eta)=0$ and $N_2(r,\rho,\eta)=0$:
$N_1$ comes from the segment intersection, and $N_2$ from $E_1=0$.
Equation~\eqref{eq:root-polys} in Appendix~\ref{app:discovery}
specifies the positive factors that fix their exact normalization.

The equations $N_1=N_2=0$ define a curve near the candidate in
coordinates $(r,\rho,\eta)$, with linearly independent constraint
gradients. We want its locally largest value of $r$. At such a
point, the Lagrange
multiplier condition~\cite{nocedal2006} gives real
multipliers $\nu_1,\nu_2$ such that
\begin{equation}
 \nabla r=(1,0,0)=\nu_1\nabla N_1+\nu_2\nabla N_2.
 \label{eq:main-lagrange-condition}
\end{equation}
The vector $\nabla N_1\times\nabla N_2$ is orthogonal to both
constraint gradients. Taking its dot product with both sides
of~\eqref{eq:main-lagrange-condition} sets the vector's
$r$-coordinate to zero, giving the third equation
\begin{equation}
 N_1=N_2=0,\qquad
 H_{\mathrm{root}}=(N_1)_\rho(N_2)_\eta-(N_1)_\eta(N_2)_\rho=0,
 \label{eq:main-root-system}
\end{equation}
where subscripts indicate partial derivatives. We define
$(r_*,\rho_*,\eta_*)$ as the unique solution of~\eqref{eq:main-root-system} in the rational isolating box
\eqref{eq:app-root-box}. Appendix~\ref{app:discovery} gives the
polynomials and the checker for this root. Here an exact machine
check uses rational interval arithmetic and the Krawczyk
method~\cite{moore2009,krawczyk1969} to prove
existence and uniqueness in that box, and gives
\[
 (r_*,\rho_*,\eta_*)\approx
 (0.3095399654,\ 0.4799488361,\ 0.5461144474),\qquad
 0.30953996<r_*<0.30953997.
\]
This certifies the local algebraic root suggested by the equalities
seen in the search. It does not yet bound the algorithm: the
inequalities $\Sigma_r$ could have a solution far from this root.
We must exclude every such solution for $r>r_*$.

\subsection{Ruling out every failure above the threshold}
\label{sec:guarantee}

Fix $r_*<r\le1/3$. The previous subsection identified $r_*$ from
equalities near one candidate assignment. We must now rule out
\emph{every} assignment satisfying $\Sigma_r$, including ones far
from that candidate. First, we show that any such assignment would
still satisfy the inequalities needed below after setting the two
second-round coefficients to the values that balance $S$ and $T$.
That leaves $\rho,\eta,\xi$. Next, we move their values, while
preserving the needed inequalities, to the line $v=v_0$ where the
formula for $h_1$ changes. On that line, the comparison linking the
two $\sm$ rounds conflicts with one of two mixing tests. Thus
\[
 \Sigma_r\text{ has a solution}
 \quad\Longrightarrow\quad
 \text{the necessary inequalities hold on }v=v_0
 \quad\Longrightarrow\quad
 \text{contradiction}.
\]
The algebraic root $r_*$ enters in the final exclusion on that line;
the preceding reductions cover the entire domain.

Recall $\sigma=1-\rho$ and the relative weights
$u=(1-\eta)/\eta$, $v=(1-\xi)/\xi$ from
\eqref{eq:main-relative-weights}. The bounds on $\eta,\xi$ in
$\Sigma_r$ become
\begin{equation}
 k=\frac r{1-r},\qquad k\le u,v\le1/k.
 \label{eq:main-global-weight-domain}
\end{equation}
The moves below change numerical values satisfying necessary
inequalities; they need not correspond to new games. Each move
preserves the inequalities used at its destination. If no
destination is possible, no starting assignment is possible.

\paragraph{Removing the two second-round coefficients.}
The comparisons $H,K\ge r$ in~\eqref{eq:main-discovery-hk} are
convex combinations of $S,T$, so at least one of $S,T$ is at least
$r$. The payoff and segment conditions exclude $S,T\ge r$
simultaneously. If $S=r>T$, then $H<r$ because $\alpha<1$;
the symmetric case contradicts $K\ge r$. Hence, exchanging both
players' labels if necessary, we can arrange $S>r>T$.
This exchange also exchanges $\rho,\sigma$, $u,v$, and $\alpha,\gamma$;
it leaves $\rho$ free to range over $[r,1-r]$.
The exclusion of $S,T\ge r$ uses exact machine checks of polynomial
signs together with the payoff comparisons. Appendix~\ref{app:preliminary}
gives the complete argument, including the bound $s_1,s_2<r$,
and identifies the checkers at the sign conditions they verify.

In this orientation, $H,K\ge r$ restrict the two second-round
weights to opposite sides of the balancing value $A_0$ defined
in~\eqref{eq:main-discovery-weights}. Define the positive distances
from the threshold by
\begin{equation}
 m=S-r>0,\qquad n=r-T>0,\qquad
 A_0=\frac n{m+n}=\frac{r-T}{S-T}.
 \label{eq:main-weight-balancing-values}
\end{equation}
Then
\begin{equation}
 H-r=\alpha(m+n)-n,\qquad K-r=m-\gamma(m+n),
 \qquad \alpha\ge A_0,\quad\gamma\le1-A_0.
 \label{eq:main-weight-source-restrictions}
\end{equation}
The numerical discovery used equality in these two comparisons.
For the global proof, we must justify replacing arbitrary allowed
$\alpha,\gamma$ by those balancing values while retaining the
inequalities needed for the contradiction. The next lemma does so.

\begin{lemma}[Eliminating the second-round coefficients]
\label{lem:weight-reduction}
For $r_*<r\le1/3$, any solution of $\Sigma_r$ can, after exchanging
both players' labels if necessary, be changed to an assignment with
\[
 S>r>T,\qquad \alpha=A_0,\qquad\gamma=1-A_0,
 \qquad 0<A_0\le1-r.
\]
Changing only $\alpha,\gamma$ preserves the comparison $U\ge r$
from~\eqref{eq:main-weighted-regret-bounds} and the shared comparison
$E_1\ge0$ from~\eqref{eq:main-shared-coupling}, with the projected
functions of~\eqref{eq:main-projected-payoff-bounds}.
The six payoff bounds in~\eqref{eq:main-projected-payoff-bounds}
and the four segment conditions in~\eqref{eq:main-segment-conditions}
are unchanged.
\end{lemma}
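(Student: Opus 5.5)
We first put the assignment in the orientation $S>r>T$, then fix $\alpha$ and $\gamma$ at their extreme admissible values and check that each retained inequality only improves.

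\textbf{Orientation.} We take the orientation argument sketched before the lemma as given, with its details in Appendix~\ref{app:preliminary}. Since $H,K\ge r$ are convex combinations of $S,T$, at least one of $S,T$ is at least $r$. The case $S,T\ge r$ is excluded, and equality cases such as $S=r>T$ contradict $H\ge r$ or $K\ge r$ because $\alpha,\gamma<1$ by Lemma~\ref{lem:positive-weights}. After a label exchange we may therefore assume $S>r>T$. By \eqref{eq:main-weight-source-restrictions}, the original coefficients then satisfy $\alpha\ge A_0$ and $\gamma\le1-A_0$.

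\textbf{Unchanged and easy parts.} The six payoff bounds in \eqref{eq:main-projected-payoff-bounds} and the segment conditions involve only $r,\rho,\eta,\xi$, so replacing $\alpha,\gamma$ leaves them unchanged. For $U=\alpha s_1+(1-\alpha)h_2\ge r$, the expression is affine in $\alpha$ with slope $s_1-h_2$. The preliminary bound $s_1<r$ together with $U\ge r$ at the original $\alpha$ forces $h_2>r$, so the slope is negative. Decreasing $\alpha$ to $A_0$ therefore only increases $U$.

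\textbf{The coupling $E_1$.} At the new values, $H-r=K-r=0$. Substituting into \eqref{eq:main-shared-coupling} gives
\[
 E_1=\gamma v(U-r)+\alpha\gamma(1-r-rv)-\alpha r .
\]
I would compare this with the original $E_1$ by treating $E_1$ as a function of $(\alpha,\gamma)$ and moving each coefficient separately: first lower $\alpha$ to $A_0$, then raise $\gamma$ to $1-A_0$.

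For the move in $\alpha$, the derivative is
\[
 (1-\gamma)(m+n)+\gamma v(s_1-h_2)+\gamma(1-r-rv)-r .
\]
This need not have an obvious sign, so I would instead bound the change directly. The term $(1-\gamma)(H-r)\ge0$ is dropped, and the remaining terms are controlled using $\alpha$-monotonicity together with $U\ge r$.

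For the move in $\gamma$, the derivative is
\[
 -(H-r)+v(U-r)+\alpha(1-r-rv).
\]
Here $U-r\ge0$, so the sign rests on showing $1-r-rv\ge0$, which follows from $v\le1/k$. When $H=r$ this derivative is nonnegative, so I would order the moves to make that true: set $\alpha=A_0$ first, so that $H=r$, and only then increase $\gamma$.

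\textbf{Main obstacle.} The hard step is showing that lowering $\alpha$ does not decrease $E_1$. If the sign is not uniform, the fallback is to show that the original $E_1\ge0$, combined with $K\ge r$ and $\gamma\le1-A_0$, implies the new $E_1\ge0$ through a positive linear combination. After clearing denominators this becomes a polynomial sign condition, to be certified exactly as elsewhere.

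\textbf{Range of $A_0$.} $A_0>0$ because $n>0$. The bound $A_0\le1-r$ follows from $\gamma=1-A_0$ combined with the original $\gamma\ge r$ from Lemma~\ref{lem:positive-weights} and $\gamma\le1-A_0$.
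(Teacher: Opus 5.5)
\textbf{Verdict: there is a genuine gap.} The proposal never shows that lowering $\alpha$ to $A_0$ keeps $E_1\ge0$, and that is the only nontrivial part of the lemma.

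\textbf{What is correct.} The following steps are all valid:
\begin{itemize}
\item the orientation step and the invariance of the six payoff bounds and four segment conditions;
\item the $U$ step: $U\ge r$, $s_1<r$ and $\alpha>0$ force $h_2>r$;
\item the $\gamma$ step: once $H=r$, the $\gamma$-slope $v(U-r)+A_0(1-r-rv)$ is nonnegative because $v\le1/k$;
\item the bound $A_0\le1-r$ from $r\le\gamma\le1-A_0$.
\end{itemize}
The last two are slightly more direct than the paper's versions.

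\textbf{Where the argument fails.} For fixed $\gamma$, $E_1$ is affine in $\alpha$. So the change from the old to the new value is exactly $(A_0-\alpha)\,\partial_\alpha E_1$. There is no way to ``bound the change directly'' that avoids the sign of $\partial_\alpha E_1$. Dropping $(1-\gamma)(H-r)\ge0$ goes the wrong way: that nonnegative term is exactly what the move destroys. The old inequality minus this term need not be nonnegative, so monotonicity of the remainder in $\alpha$ gives nothing.

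The sign is also not forced by the abstract constraints. At $\gamma=r$ the derivative equals
\[
 M=(1-r)S-(1-2r)T-rh_2/\xi,
\]
which would be positive for large $S$. Its negativity depends on the specific formulas for $S,T,h_2$ in terms of $(r,\rho,u,v)$.

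Your fallback, a positive combination using $K\ge r$ and $\gamma\le1-A_0$, names no usable ingredient. Moreover, those two conditions are the same constraint, since $K-r=m-\gamma(m+n)$.

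\textbf{What is missing.} The paper writes
\[
 \partial_\alpha E_1=(1-\gamma)(S-T)+\gamma(1+T-h_2/\xi)-r,
\]
using $1/\xi=1+v$. This is affine in $\gamma$, so it suffices to check the endpoints of $r\le\gamma\le1-A_0$.
\begin{itemize}
\item At $\gamma=1-A_0$ it equals $[(S-r)(1-h_2/\xi)+T(T-r)]/(S-T)$. This is negative provided $0\le T<r$ and $h_2>\xi$. Your $h_2>r$ does not suffice, because $r<h_2\le\xi$ leaves the sign undetermined.
\item The inequality $h_2>\xi$ needs its own argument, extracted from the original point. The original $H\ge r$ with $\alpha\le1-r$ gives $Z_\xi=(1-r)S+(1+r)T-r\ge T\ge0$. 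An exact certificate then shows that $h_2\le\xi$, together with nonnegativity of the payoff bounds and $S\ge r$, forces $Z_\xi<0$.
\item At $\gamma=r$ one needs $M<0$. This is a separate machine-checked polynomial sign, obtained after eliminating $T$ via $T\ge0$ and treating both formulas for $h_1$ and for $h_2$.
\end{itemize}
Without these two signs, your argument does not establish that $E_1\ge0$ survives the $\alpha$ move.
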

\begin{proof}[Proof idea]
We present only the main proof ideas here; the complete proof of
Lemma~\ref{lem:weight-reduction} is given in Appendix~\ref{app:weights}.
We first lower $\alpha$ to $A_0$. Differentiating the projected
expression in~\eqref{eq:main-shared-coupling} gives
\begin{equation}
 \partial_\alpha E_1
 =(1-\gamma)(S-T)+\gamma(1+T-h_2/\xi)-r<0
 \qquad(r\le\gamma\le1-A_0).
 \label{eq:main-weight-derivative}
\end{equation}
To prove the sign, it suffices to check the two endpoints in
$\gamma$, since this derivative is affine in $\gamma$.
Here exact machine checks establish the endpoint signs and the
bound $h_2>\xi\ge r$, using the original nonnegative payoff bounds,
$H\ge r$, and $\alpha\le1-r$.
Appendix~\ref{app:weights} specifies the domains and code paths
for these checks.
Thus lowering $\alpha$ increases $E_1$.
It also increases $U=\alpha s_1+(1-\alpha)h_2$, since $s_1<r<h_2$.
The inequality $A_0\le\alpha\le1-r$ gives the stated upper bound
on $A_0$.

With $\alpha=A_0$, we have $H=r$ and
\[
 E_1(A_0,0)=-A_0r<0.
\]
The function $E_1(A_0,\gamma)$ is affine in $\gamma$ and
nonnegative at the current positive $\gamma$. Its slope is therefore
positive. Raising $\gamma$ to $1-A_0$ preserves $E_1\ge0$ and
leaves $U$ unchanged. The payoff functions and mixing comparisons
are independent of these two weights.
\end{proof}
Lemma~\ref{lem:weight-reduction} has moved any putative solution to
the balancing values of $\alpha,\gamma$. Only three comparisons
will be needed in the remaining reduction: $U\ge r$, the bound
$A_0\le1-r$, and the shared comparison $E_1\ge0$. We rewrite them
so their signs can be tracked as the other variables move.
Let $z=n/m>0$ be the ratio of the two distances from $r$, and set
$c=1-2r>0$.
Multiply $U(A_0)-r\ge0$ and $1-r-A_0\ge0$ by $m+n>0$,
and divide $E_1(A_0,1-A_0)\ge0$ by $(1-A_0)^2>0$.
Define the resulting nonnegative expressions by
\begin{equation}
 \begin{aligned}
 G&=(m+n)[U(A_0)-r]=m(h_2-r)-n(r-s_1)\ge0,\\
 B_{\mathrm{mass}}&=(m+n)[1-r-A_0]=(1-r)m-rn\ge0,\\
 E&=\frac{E_1(A_0,1-A_0)}{(1-A_0)^2}
   =v(h_2-r)+z[c+v(s_1-2r)]-rz^2\ge0.
 \end{aligned}
 \label{eq:main-reduced-comparisons}
\end{equation}
Thus $G\ge0$ is $U\ge r$ rewritten, $B_{\mathrm{mass}}\ge0$ is
$A_0\le1-r$ rewritten, and $E\ge0$ is the shared comparison
rewritten. We use these three inequalities in the paths below;
the other weighted comparisons are not needed there.

\paragraph{Selecting the mixing tests for the boundary paths.}
After eliminating $\alpha,\gamma$, we need mixing conditions that
remain useful as $u,v$ change. The four segment conditions
\eqref{eq:main-segment-conditions} are still available. We use the
two segments joining $(p_\dagger,q_\dagger)$ to
$(p_\dagger,0)$ and $(0,q_\dagger)$.
At the common endpoint $(p_\dagger,q_\dagger)$, we first need to
know which player's regret upper bound can exceed $r$. The factor
$D_{\mathrm{mix}}$ below is positive, so multiplying by it does not
change signs; $N_R,N_C$ are the resulting differences from $r$:
\begin{equation}
 \begin{aligned}
 D_{\mathrm{mix}}&=(\rho u+\sigma)(\sigma v+\rho)>0,\\
 N_R&=D_{\mathrm{mix}}[R_r(p_\dagger,q_\dagger)-r],\qquad
 N_C=D_{\mathrm{mix}}[C_r(p_\dagger,q_\dagger)-r].
 \end{aligned}
 \label{eq:main-mixing-numerators}
\end{equation}
To keep the two line-segment tests short, write
$\chi_1=1-r-r/\rho$ and $\chi_2=1-r-r/\sigma$, the coefficients
in $s_1=\chi_1+\chi_2u$ and $s_2=\chi_2+\chi_1v$.
Along either selected segment, the two regret upper bounds are
affine and may cross. After fixing the endpoint signs, the next
expressions test whether such a crossing can stay at or above $r$;
positive factors have been removed without changing their signs:
\begin{equation}
 L=(\chi_1+cu)N_R+ruN_C,\qquad
 B_{\mathrm{mix}}=(\chi_2+cv)N_C+rvN_R.
 \label{eq:main-mixing-determinants}
\end{equation}
Here $L$ corresponds to the segment toward $(p_\dagger,0)$ and
$B_{\mathrm{mix}}$ to the segment toward $(0,q_\dagger)$.
The endpoint and intersection conditions imply
\begin{equation}
 (N_R\ge0\ \land\ L\ge0)
 \quad\text{or}\quad
 (N_R<0\le N_C\ \land\ B_{\mathrm{mix}}\ge0).
 \label{eq:main-mixing-branch-guards}
\end{equation}
If $N_R\ge0$, the first segment supplies $L\ge0$; if $N_R<0$,
the central comparison requires $N_C\ge0$, and the second segment
supplies $B_{\mathrm{mix}}\ge0$.
Appendix~\ref{app:preliminary} expands these substitutions and
checks the endpoint signs. The paths use the weaker first
alternative $L\ge0$ alone. The second alternative continues to
require $N_R<0\le N_C$.

We next want one fixed formula for $h_2$, so that only the change
in $h_1$ separates the cases in the path argument. Exact machine
checks combine the coefficient bound
$B_{\mathrm{mass}}\ge0$ with the applicable mixing condition to exclude
$u\ge1$ on the full weight range $r\le\rho\le1-r$.
Appendix~\ref{app:weights} gives the corresponding polynomial signs
and their checkers. This also determines the formula for $h_2$.
By~\eqref{eq:main-projected-payoff-bounds}, reaching its cap would
require
\[
 u\ge\frac{r(1+\rho)}{\rho(1-r)}>1.
\]
Hence $h_2$ always uses the second expression in its minimum in
the reduced problem. The only remaining change of formula is that of $h_1$:
it uses the affine expression in~\eqref{eq:main-discovery-h1}
for $v<v_0$, and equals one for $v\ge v_0$.

Any solution of $\Sigma_r$ would therefore give values
$(r,\rho,u,v)$ satisfying the following shorter list. We keep only
the payoff signs needed by the paths below; dropping other
conditions admits more assignments, so excluding this list still
excludes every failed run.
\begin{equation}
 \begin{gathered}
 r_*<r\le1/3,\quad r\le\rho\le1-r,\quad
 k\le u<1,\quad k\le v\le1/k,\quad S>r>T,\\
 T,s_2\ge0,\qquad (\rho\le1/2\ \Longrightarrow\ s_1\ge0),\qquad
 G,B_{\mathrm{mass}},E\ge0,\\
 L\ge0\quad\text{or}\quad
 (N_R<0\le N_C\ \land\ B_{\mathrm{mix}}\ge0).
 \end{gathered}
 \label{eq:main-reduced-conditions}
\end{equation}
The functions are those defined in
\eqref{eq:main-projected-payoff-bounds},
\eqref{eq:main-reduced-comparisons}, and
\eqref{eq:main-mixing-numerators}--\eqref{eq:main-mixing-determinants}.
For $\rho>1/2$, allowing $s_1$ to become negative enlarges the
list of necessary conditions and permits the paths used next.

\paragraph{Moving solutions to where $h_1$ changes formula.}
The reduced conditions now involve $(r,\rho,u,v)$. For each fixed
pair $(r,\rho)$, the remaining freedom is in $u,v$.
The two formulas for $h_1$ suggest two paths. From $v\ge v_0$,
we can try decreasing $v$ directly to the switch. From $v<v_0$,
we first increase $u$ until a mixing condition becomes an equality,
then use that equality to guide a continuation toward $v_0$.
To justify these moves, we must preserve the inequalities
in~\eqref{eq:main-reduced-conditions} and exclude every possible
stop before reaching the destination. The next proposition records
the resulting reduction.

\begin{proposition}[Boundary reduction]
\label{prop:boundary-reduction}
Any point satisfying~\eqref{eq:main-reduced-conditions} produces
a point satisfying the same conditions on $v=v_0$, with $r,\rho$
unchanged.
\end{proposition}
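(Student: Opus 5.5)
The proof splits according to the two formulas for $h_1$, with $(r,\rho)$ held fixed throughout. In each case I will build a monotone path in the $(u,v)$ plane that ends on $v=v_0$. Along the path, every inequality in \eqref{eq:main-reduced-conditions} must be preserved, and at each stage I will argue that the path cannot stop early. Each preservation claim becomes a sign condition on a polynomial in $(r,\rho,u,v)$ after positive denominators are cleared. These are the same kind of exact checks used for Lemma~\ref{lem:main-g-projection} and Lemma~\ref{lem:weight-reduction}, so I would discharge them with the Bernstein-type sign checks cited there, applied on the box $r_*<r\le1/3$, $r\le\rho\le1-r$, $k\le u<1$, $k\le v\le1/k$, subdivided as needed.

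\emph{Case $v\ge v_0$.} Here $h_1=1$ does not depend on $v$, so I decrease $v$ to $v_0$ with $u$ fixed. The sign facts I need are the following. $T$ and $S$ are affine in $v$ through $v(s_1-r)$. Since $s_1<r$, lowering $v$ raises $T$ and lowers $S$. I must show $T<r<S$ still holds at $v_0$; this is a check of the endpoint value only, because both are affine. The quantities $s_2$, $G$ and $B_{\mathrm{mass}}$ are affine in $v$ through $S,T,m,n$, so endpoint checks suffice for them too. For $E$, which is quadratic in $z$ and hence rational in $v$, I would verify $\partial_v E\le0$ on the relevant region, so that lowering $v$ can only help. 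The mixing disjunction depends on $v$ through $q_\dagger$ and $N_R,N_C$. I expect to show that $L$ does not decrease as $v$ decreases. If instead the solution sits in the branch with $N_R<0\le N_C$, I would show either that this branch persists or that the path enters $L\ge0$, which is allowed because the disjunction is preserved either way.

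\emph{Case $v<v_0$.} Here $h_1$ is affine in $v$ and increasing, which shrinks $S$ as $v$ grows. So the plan is two-stage. First, increase $u$ with $v$ fixed. Larger $u$ raises $h_2$ and $T$ through $\rho(1-r)u/\sigma$, and raises $s_1$, which should help $G$ and $E$. I continue until a mixing condition becomes tight ($L=0$, or the branch boundary $N_C=0$ or $B_{\mathrm{mix}}=0$), or until $u$ reaches $1$. The bound $u<1$ was shown to be forced, so reaching $u=1$ would contradict the reduced conditions; hence the first stage must end at a tight mixing equality. Second, move along the curve $\{L=0\}$ (or the corresponding active equality) in the direction of increasing $v$, parametrizing $u$ implicitly by $v$. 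I would verify, via the implicit-function derivative sign, that $u$ stays in $[k,1)$ and that $G,B_{\mathrm{mass}},E,T,s_2$ and $S>r>T$ remain valid along the curve, until $v=v_0$ is reached.

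The main obstacle is this continuation along a mixing equality in the case $v<v_0$. The preservation conditions there are not coordinatewise monotone, and the curve could in principle hit $S=r$, $T=r$, $E=0$ or $u=1$ before reaching $v_0$, or switch between the two branches of \eqref{eq:main-mixing-branch-guards}. My first attempt would be to parametrize by $v$, write $du/dv=-\partial_vL/\partial_uL$, and reduce each possible exit to a polynomial system, showing it has no solution in the domain. If that is too coarse, I would split the $\rho$ range at $1/2$, matching the asymmetric hypothesis on $s_1$ in \eqref{eq:main-reduced-conditions}, and treat each subrange separately with exact checks.
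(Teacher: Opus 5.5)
Your skeleton matches the paper's: with $(r,\rho)$ fixed, lower $v$ at fixed $u$ when $h_1=1$; otherwise raise $u$ at fixed $v$ until a segment determinant vanishes, then follow that contact curve by the implicit-function theorem. The gap is in the mechanism that is supposed to preserve the inequalities along these moves.

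For $v\ge v_0$ you argue that, because $S,T,G,B_{\mathrm{mass}}$ are affine in $v$, ``endpoint checks suffice''. Their values at $v_0$ are not stand-alone polynomial signs, though. They depend on the starting point, and the quantities can move in the wrong direction. On this branch $S_v=u\chi_1$ and $G_v=u\chi_1(h_2-r)-(r-s_1)^2$. Since $\chi_1>0$ whenever $\rho>k$ (e.g.\ near $\rho_*$), lowering $v$ can decrease both $S$ and $G$. Your statement that lowering $v$ lowers $S$ holds only in that case. Nothing in your plan shows that $S$ and $G$ stay above $r$ and $0$.

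The missing idea is to follow the ratio $z=n/m=(r-T)/(S-r)$. The paper certifies $z_v>0$ on this domain and uses the factorizations $G=m[(h_2-r)-z(r-s_1)]$ and $B_{\mathrm{mass}}=m[(1-r)-rz]$, in which $h_2,s_1$ do not depend on $v$. A decreasing $z$ does three things:
\begin{itemize}
\item it keeps both brackets nonnegative even while $m$ shrinks;
\item it forbids $m\to0$ with $n>0$, since that would send $z$ to $\infty$;
\item together with a separately certified bound $T<r$, which excludes $n=0$, it keeps $S>r>T$.
\end{itemize}

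The same ratio drives the $u$-move for $v<v_0$. There $z_u<0$ follows from $z\le A_h/R_s$, where $A_h=h_2-r$ and $R_s=r-s_1$; this inequality is just $G\ge0$. $G\ge0$ then survives because $A_h/R_s$ increases in $u$. Your heuristic that raising $u$ ``helps $G$ and $E$'' is unreliable: $S_u=s_2-r<0$ always, and $s_1$ decreases when $\chi_2<0$. For $E$, the paper works with $F=E/z$. It needs only the crossing condition $F=0\Rightarrow F_v<0$ in the first move, and $F_u>0$ in the second, not a global sign of $\partial_vE$.

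The continuation stage, which you rightly flag as the main obstacle, is left open exactly where the paper needs specific facts.
\begin{itemize}
\item For $\rho\le1/2$, a separate certificate gives $N_C<0$ whenever $S\ge r$, $u\le1$, $v<v_0$. So the contact is always on $L=0$, approached from $L>0$ with $L_u\le0$. Neither branch switching nor continuation along $B_{\mathrm{mix}}=0$ arises.
\item Before the continuation certificate ($L_u<0$, $L_v<0$, and $E=0\Rightarrow E_v-E_uL_v/L_u>0$) can close, the paper first proves three things: $L_v<0$ at contacts, from the affine form $L=\ell_0+\ell_1v$ with $\ell_0>0$; the bounds $k<z<1$; and $v>4/5$.
\item These same bounds exclude the exits $G=0$, $T=0$, $m=0$ and $B_{\mathrm{mass}}=0$.
\item For $\rho\ge1/2$, where $s_1$ may be negative, the paper does not continue at all. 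Two subdivision certificates show that the contact systems $L=0$ and $B_{\mathrm{mix}}=0$ (with $N_R<0\le N_C$) are infeasible, so no point with $v<v_0$ and $\rho\ge1/2$ exists.
\end{itemize}
Your plan to reduce each exit to a polynomial system would have to rediscover all of this. As written, it is a strategy rather than a proof of the proposition.
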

\begin{proof}[Proof idea]
We present only the main proof ideas here; the complete proof of
Proposition~\ref{prop:boundary-reduction} is given in Appendix~\ref{app:paths}.
Suppose first that $v\ge v_0$. Keep $u$ fixed and decrease $v$.
Exact machine checks establish the derivative signs that make
$z=(r-T)/(S-r)$ decrease during this move. Since
$s_1,h_2$ are fixed, the identities
\[
 G=m[(h_2-r)-z(r-s_1)],\qquad
 B_{\mathrm{mass}}=m[(1-r)-rz]
\]
show that $G,B_{\mathrm{mass}}\ge0$ persist while $m>0$.
The same machine check establishes the sign needed at a zero of
the shared comparison:
\[
 E=0\quad\Longrightarrow\quad -\partial_vE>0.
\]
This excludes a crossing from $E\ge0$ to $E<0$ as $v$ decreases.
Further exact machine checks verify that the applicable determinant
increases and that $T<r$. Together with the decrease of $z$, these
signs exclude the possible stops $S=r$ and $T=r$.
The payoff signs are preserved along this move.
Appendix~\ref{app:paths} gives these checks, their domains, and the
remaining stopping arguments. Thus this path reaches $v_0$.

Suppose now that $v<v_0$. Keep $v$ fixed and increase $u$.
Here exact machine checks of the derivatives show that $z$
decreases and $(h_2-r)/(r-s_1)$ increases. These signs preserve $G,B_{\mathrm{mass}}\ge0$ by the same two
identities. The same derivative checks give
\[
 E=0\quad\Longrightarrow\quad \partial_uE>0,
\]
so it also remains nonnegative. The applicable mixing inequality
can become tight during this move. The other stopping events,
including $u=1$, are excluded by the selected comparisons.
The path therefore reaches $L=0$ or $B_{\mathrm{mix}}=0$.

If $\rho\le1/2$, the payoff conditions force $N_C<0$ before the
switch, so the contact is on $L=0$. At the contact used for
continuation, an exact subdivision check verifies $L_u<0$, $L_v<0$,
and the zero-crossing sign in the following display.
The implicit-function theorem~\cite{rudin1976} then gives a
boundary curve with
\begin{equation}
 \frac{du}{dv}=-\frac{L_v}{L_u}<0,\qquad
 E=0\quad\Longrightarrow\quad
 \frac{dE}{dv}=E_v-E_u\frac{L_v}{L_u}>0.
 \label{eq:main-boundary-continuation}
\end{equation}
Increasing $v$ along this curve decreases $u$ and preserves
$E\ge0$. The same subdivision check, combined with the stopping arguments,
allows the curve to continue to $v_0$.
If $\rho\ge1/2$, two separate exact subdivision checks exclude the
contact systems $L=0$ and $B_{\mathrm{mix}}=0$ under
\eqref{eq:main-reduced-conditions}. Consequently, no starting point
with $v<v_0$ and $\rho\ge1/2$ can satisfy these reduced conditions.

Each of these three subdivision checks verifies exact rational
interval or Bernstein bounds~\cite{moore2009,garloff1986} on every
terminal box and checks that the boxes cover its stated domain.
Appendix~\ref{app:paths} gives the path selection, stopping arguments,
and the code paths for each check.
\end{proof}

\paragraph{Ruling out the switch boundary.}
The path reduction leaves $v=v_0$, so $\xi$ is determined by
$r,\rho$ as in~\eqref{eq:main-discovery-cap-switch}.
We return to $\eta=1/(1+u)>1/2$ and work in $(r,\rho,\eta)$.
Here $h_1=1$, and $h_2$ uses precisely the formula
\eqref{eq:main-discovery-h2} used to define the local root.
This is why the switch was chosen: the polynomials from
Section~\ref{sec:bound-discovery} now describe necessary
inequalities. Clearing their positive denominators gives
\begin{equation}
 L\ge0\ \Longleftrightarrow\ N_1(r,\rho,\eta)\ge0,\qquad
 E\ge0\ \Longleftrightarrow\ N_2(r,\rho,\eta)\ge0.
 \label{eq:main-switch-polynomial-conditions}
\end{equation}
For the first mixing alternative, we must exclude simultaneous
$N_1,N_2\ge0$; the second alternative replaces $N_1$ by the
polynomial for its other segment. In each case, we find the first
value of $\eta$ allowed by the mixing condition, show that the
shared condition is already negative there, and prove it cannot
recover as $\eta$ increases. Thus the first branch reuses the local
equations defining $r_*$; the second uses the same shared comparison
with a different mixing test.

\begin{lemma}[Exclusion on the switch boundary]
\label{lem:switch-exclusion}
The necessary conditions~\eqref{eq:main-reduced-conditions}
have no point on $v=v_0$ with $r_*<r\le1/3$.
\end{lemma}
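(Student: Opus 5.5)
We work on the line $v=v_0$, where $\xi=\sigma(1-r)/(\sigma+r)$, $h_1=1$, and $h_2$ is given by~\eqref{eq:main-discovery-h2}. The coordinates are $(r,\rho,\eta)$, with $\eta=1/(1+u)$ and $u\in[k,1)$, so $\eta\in(1/2,1-r]$. By~\eqref{eq:main-switch-polynomial-conditions}, the conditions $L\ge0$ and $E\ge0$ are equivalent to $N_1\ge0$ and $N_2\ge0$. The second alternative of~\eqref{eq:main-mixing-branch-guards} is handled by the analogous polynomial $N_1'$ obtained from $B_{\mathrm{mix}}$ after clearing the same positive denominators. For each fixed $(r,\rho)$ with $r_*<r\le1/3$ and $r\le\rho\le1-r$, the plan is to treat both polynomials as functions of the single variable $\eta$. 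We then show that the set where the mixing polynomial is nonnegative lies strictly to one side of the set where $N_2\ge0$.

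\textbf{First branch.} The mixing condition should hold only for large $\eta$, i.e.\ $N_1\ge0$ exactly on an interval $[\eta_1(r,\rho),1-r]$. The shared condition should hold only for small $\eta$, i.e.\ $N_2\ge0$ exactly on an interval $[1/2,\eta_2(r,\rho)]$. These monotonicity claims are the reverse of the path-reduction moves: increasing $u$ helped $E$ and hurt the mixing test. I would establish them by exact Bernstein sign checks, on the box $r\in[r_*,1/3]$, $\rho\in[r,1-r]$, $\eta\in[1/2,1-r]$, of $\partial_\eta N_1$ wherever $N_1=0$ and of $\partial_\eta N_2$ wherever $N_2=0$; this is the ``cannot recover'' claim in the text. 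It then suffices to prove $N_2(r,\rho,\eta_1(r,\rho))<0$, that is, $\eta_2<\eta_1$. At $r=r_*$ the two curves $\{N_1=0\}$ and $\{N_2=0\}$ touch at $(\rho_*,\eta_*)$, by~\eqref{eq:main-root-system} together with the Krawczyk uniqueness certificate. Their separation therefore has to be shown in two regimes. Away from a small neighbourhood of $(r_*,\rho_*,\eta_*)$, a subdivision check shows that $N_1\ge0$ and $N_2\ge0$ never hold simultaneously for $r\ge r_*$. Near the root, we use the fact that the root is the top of the curve $N_1=N_2=0$. Concretely, I would check that $\partial_r$ of the restricted value $N_2(r,\rho,\eta_1(r,\rho))$ is negative there, which can be computed by implicit differentiation as $(N_2)_r-(N_2)_\eta(N_1)_r/(N_1)_\eta$. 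I would also check that this restricted value is maximized over $\rho$ at $\rho_*$ when $r=r_*$, which is a second-order condition compatible with $H_{\mathrm{root}}=0$. Together these give negativity for every $r>r_*$.

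\textbf{Second branch.} Here we have $N_R<0\le N_C$ and $B_{\mathrm{mix}}\ge0$. No tangency occurs in this case, so I expect a plain subdivision certificate to suffice. It should show that $N_1'\ge0$ and $N_2\ge0$, together with the guard signs, are infeasible for all $r\ge r_*$ with a uniform margin. The argument has the same shape as the first branch: locate the first admissible $\eta$ and check that $N_2<0$ there.

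\textbf{Main obstacle.} The hard step is the neighbourhood of the tangency at $r_*$. There the inequality degenerates: both polynomials vanish to first order, so naive interval arithmetic cannot certify the sign. The fix is a second-order local argument near the Krawczyk box, matched with a global subdivision outside it. The difficulty is choosing the excluded neighbourhood small enough that the local Taylor or implicit-function bounds remain valid on it, yet large enough that the outer subdivision terminates.
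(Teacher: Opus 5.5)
Your proposal is correct in outline but takes a different route from the paper at the one hard step. The skeleton is the paper's own. On the relevant rectangles $\mathcal C_\pm$, $N_1$ is a strictly concave quadratic in $\eta$ with $N_1(1/2)<0<(N_1)_\eta(1/2)$; the paper first discards $\rho\le 21/50$ and $r\ge 8/25$. So $N_1\ge0$ forces $\eta\ge\eta_-$. One then proves $N_2<0$ there, and a zero-crossing sign prevents recovery as $\eta$ grows. The second branch is handled as you describe, through the affine zero $\eta_B$ of $N_B$ and a direct check $N_2(\eta_B)<0$; the paper first confines that branch to $\rho>1/2$ via $T<r$.

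The genuine difference is how $N_2(\eta_-)<0$ is certified across the tangency at $(r_*,\rho_*,\eta_*)$.

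\textbf{Your route (local).} You analyse $\phi(r,\rho)=N_2(r,\rho,\eta_1(r,\rho))$ to second order near the root: $\partial_\rho\phi=-H_{\mathrm{root}}/(N_1)_\eta=0$ there, $\partial_\rho^2\phi<0$, and $\partial_r\phi<0$. You glue this to a 3D subdivision of the simultaneous system elsewhere, whose margin shrinks like $(\rho-\rho_*)^2$ at the seam.

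\textbf{The paper's route (global).} It applies the same ``double zero, decreasing in $r$'' idea to $\operatorname{Res}_\eta(N_1,N_2)=a_2^3N_2(\eta_-)N_2(\eta_+)$, which carries your $\phi$ as a factor.
\begin{itemize}
\item At $r=r_*$ the relevant resultant factor factors exactly as $(\rho-\rho_*)^2W(\rho)$, with $W<0$ certified on all of $[r_*,1-r_*]$.
\item $\partial_rR_{\mathrm{resultant}}<0$ holds on the whole of both rectangles.
\item Hence $N_1,N_2$ have no common root for $r>r_*$.
\item Continuity along explicit paths from the single reference value $N_2(r_*,1/2,\eta_-)<0$ then fixes the sign everywhere.
\end{itemize}
The exact factor absorbs the degeneracy you name as your main obstacle, so no boundary layer is needed. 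Your route avoids the degree-18 factor and the connection paths, but pays with that delicate gluing.

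Two intermediate claims overreach. They are harmless, since you only need them near the root.
\begin{itemize}
\item ``$(N_1)_\eta>0$ at every zero'' fails at the larger root of the concave $N_1$ whenever that root lies in $[1/2,1-r]$. Only $N_1\ge0\Rightarrow\eta\ge\eta_-$ is actually used.
\item ``$\partial_\eta N_2<0$ at every zero'' is not a box-wide polynomial fact. The paper derives it structurally. $\Psi(a)$ is quadratic with $\Psi(0)>0>\Psi(1)$, so $\Psi_a<0$ at its zero, and $\partial_\eta\Psi|_a<0$. The argument also needs $(A_0)_\eta>0$, secured only on $[\eta_-,\eta]$ using $\Delta\ge0$ on $\mathcal C_-$ and $N_1(1-r)\ge0$ on $\mathcal C_+$, together with $S>r>T$ and $h_2>r$. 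Your box-wide Bernstein check as stated would therefore likely fail.
\end{itemize}
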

\begin{proof}[Proof idea]
We present only the main proof ideas here; the complete proof of
Lemma~\ref{lem:switch-exclusion} is given in Appendix~\ref{app:switch}.
First suppose $L\ge0$, so $N_1,N_2\ge0$ are necessary.
The polynomial $N_1$ is a strictly concave quadratic in $\eta$.
Exact signs exclude parts of the $(r,\rho)$ domain where this
quadratic is negative throughout $1/2<\eta\le1-r$.
Whenever $N_1\ge0$ is possible, the quadratic has a smaller real
root $\eta_-(r,\rho)>1/2$, and $N_1\ge0$ requires $\eta\ge\eta_-$.
We will prove $N_2<0$ at this mixing boundary and propagate that
sign to any proposed feasible value of $\eta$.

To prove $N_2<0$ at $\eta_-$, we first rule out a zero there.
The resultant of $N_1$ and $N_2$ with respect to $\eta$ is a
polynomial in $r,\rho$ that must vanish if they have a common
$\eta$-root~\cite{cox2015}. Define its relevant factor by
\begin{equation}
 \operatorname{Res}_\eta(N_1,N_2)
   =\sigma^6\rho^2r^2R_{\mathrm{resultant}}(r,\rho).
 \label{eq:main-switch-resultant}
\end{equation}
The prefactor is positive. A zero of $N_2$ at $\eta_-$ would be a
common root of $N_1,N_2$ and would force
$R_{\mathrm{resultant}}=0$. We therefore seek a strict sign for
this factor above $r_*$. Once common zeros are excluded, one
negative reference value and continuity will fix the sign of
$N_2$ along the whole boundary.

The extremum equation~\eqref{eq:main-root-system}, together with
the regularity check at the isolated root, yields a double factor
at $r=r_*$:
\begin{equation}
 R_{\mathrm{resultant}}(r_*,\rho)=(\rho-\rho_*)^2W(\rho),
 \qquad W(\rho)<0\quad(r_*\le\rho\le1-r_*).
 \label{eq:main-resultant-at-root}
\end{equation}
This identity defines the univariate polynomial $W$.
An exact machine check verifies the regularity condition used for
this factorization and certifies $W<0$ using its Bernstein coefficients
and the rational isolating intervals for $r_*,\rho_*$.
The same checker establishes the preliminary
exclusions restricting possible points with $N_1\ge0$ to $r<8/25$
and $\rho>21/50$. On the remaining parameter range, it verifies
$\partial_rR_{\mathrm{resultant}}<0$.
Hence increasing $r$ from $r_*$ gives
$R_{\mathrm{resultant}}(r,\rho)<0$ at every required pair with
$r>r_*$, and excludes a common root.

Because the resultant is nonzero on the relevant region, $N_2$
cannot cross zero along the smaller-root boundary $\eta_-(r,\rho)$.
An exact check gives
$N_2(r_*,1/2,\eta_-(r_*,1/2))<0$.
The connection paths in Appendix~\ref{app:switch} link every
relevant smaller-root point to this reference within the real-root
domain, including cases of coincident quadratic roots. Continuity
therefore gives $N_2<0$ at every such $\eta_-$.

We must still exclude the inequalities at larger $\eta$.
At fixed $(r,\rho,v_0)$, consider the shared residual
$E_1(A_0,1-A_0)=(1-A_0)^2E$, with all payoff functions and $A_0$
evaluated at the varying $\eta$.
Exact machine checks provide the derivative signs that keep
$S>r>T$, $0<A_0\le1-r$, and $h_2>r$ on the interval from
$\eta_-$ to a proposed feasible $\eta$. At a zero of the shared
residual, the chain rule and the derivative of its quadratic
expression then give
\begin{equation}
 E_1(A_0,1-A_0)=0\quad\Longrightarrow\quad
 \frac{d}{d\eta}E_1(A_0,1-A_0)<0.
 \label{eq:main-switch-zero-derivative}
\end{equation}
By~\eqref{eq:main-switch-polynomial-conditions}, its starting value
at $\eta_-$ is negative. The derivative sign prevents it from
crossing to a nonnegative value as $\eta$ increases.
This excludes $N_1,N_2\ge0$ simultaneously.

Finally suppose $N_R<0\le N_C$ and $B_{\mathrm{mix}}\ge0$.
On the switch, this alternative can occur only for $\rho>1/2$;
the payoff condition $T<r$ excludes $N_C\ge0$ on the other half.
Define the polynomial for this segment by substituting
$u=(1-\eta)/\eta$ and $v=v_0$ into
\begin{equation}
 N_B(r,\rho,\eta)=\sigma^2\eta(1-r)^2B_{\mathrm{mix}}.
 \label{eq:main-second-switch-numerator}
\end{equation}
The multiplier is positive, so this alternative requires
$N_B,N_2\ge0$. The polynomial $N_B$ is affine and strictly
increasing in $\eta$ on $1/2\le\rho\le1-r$.
Its unique zero, denoted $\eta_B(r,\rho)$, is above $1/2$, so
$N_B\ge0$ implies $\eta\ge\eta_B$.
Substitution at this zero gives $N_2(r,\rho,\eta_B)<0$ throughout
the stated domain. Here an exact machine check verifies the
strictly negative Bernstein coefficients after substitution.
The interval conditions and the same zero derivative
\eqref{eq:main-switch-zero-derivative} then exclude a nonnegative
shared residual at every larger feasible $\eta$.
Both mixing alternatives are therefore excluded.
Appendix~\ref{app:switch} specifies the sign domains, the
reference connection paths, and the checker paths beside the
resultant and $N_B$ calculations.
\end{proof}

Proposition~\ref{prop:finite-system} maps an output above $r$ to a
solution of $\Sigma_r$. Lemma~\ref{lem:weight-reduction} reduces
such a solution to~\eqref{eq:main-reduced-conditions},
Proposition~\ref{prop:boundary-reduction} moves it to $v=v_0$,
and Lemma~\ref{lem:switch-exclusion} excludes that boundary.
These implications cover every point with $r_*<r\le1/3$ and give
the following conclusion.
\begin{proposition}[Global arithmetic bound]
\label{prop:global-arithmetic}
For every $r$ with $r_*<r\le1/3$, no five values satisfy the
necessary inequalities:
\[
 \nexists\boldsymbol\theta,\quad\Sigma_r(\boldsymbol\theta).
\]
Consequently, the ideal building-block properties imply
$f(\boldsymbol x_{\mathrm{out}})\le r_*$.
\end{proposition}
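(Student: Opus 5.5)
The proof will be a chain of the implications already established, run as a contradiction. Fix $r$ with $r_*<r\le1/3$ and suppose some $\boldsymbol\theta=(\rho,\eta,\xi,\alpha,\gamma)$ satisfies $\Sigma_r(\boldsymbol\theta)$. Since $r_*>3/10$, the reductions stated for $r\in[3/10,1/3]$ apply.

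First, Lemma~\ref{lem:weight-reduction} is invoked. After exchanging the players' labels if necessary, the assignment is moved to $\alpha=A_0$ and $\gamma=1-A_0$ with $S>r>T$. This move keeps $U\ge r$ and $E_1\ge0$ and leaves the payoff functions and segment conditions unchanged. I will then check that the resulting values $(r,\rho,u,v)$ satisfy every entry of~\eqref{eq:main-reduced-conditions}:
\begin{itemize}
\item the weight domain $k\le u,v\le1/k$ and the bounds on $\rho$ come from the coefficient bounds in $\Sigma_r$, via Lemma~\ref{lem:positive-weights};
\item $u<1$ and the fixed formula for $h_2$ come from the machine-checked exclusion of $u\ge1$;
\item $G,B_{\mathrm{mass}},E\ge0$ are the rewritten forms~\eqref{eq:main-reduced-comparisons};
\item the payoff signs $T,s_2\ge0$ and, for $\rho\le1/2$, $s_1\ge0$ are among the nonnegative payoff bounds retained after Lemma~\ref{lem:main-g-projection};
\item the mixing disjunction is~\eqref{eq:main-mixing-branch-guards}.
\end{itemize}
The main care needed here is bookkeeping. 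Each condition in~\eqref{eq:main-reduced-conditions} must actually be implied by $\Sigma_r$ together with the preceding moves, and the label exchange must map $\Sigma_r$ to itself. I expect this bookkeeping step, not any new inequality, to be the real obstacle. The deep work is already in the cited results, so the proof should mostly consist of verifying that their hypotheses and conclusions match.

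Next, Proposition~\ref{prop:boundary-reduction} turns this point into one with the same $(r,\rho)$ on $v=v_0$ that still satisfies~\eqref{eq:main-reduced-conditions}. Lemma~\ref{lem:switch-exclusion} says no such point exists for $r_*<r\le1/3$. This contradiction shows $\Sigma_r$ is infeasible for every such $r$.

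For the consequence, take an execution with the ideal properties~\eqref{eq:ideal-parameters} and suppose $f(\boldsymbol x_{\mathrm{out}})>r_*$. Choose $r$ with $r_*<r<\min\{f(\boldsymbol x_{\mathrm{out}}),1/3\}$ when $f(\boldsymbol x_{\mathrm{out}})>1/3$ is allowed; otherwise take any $r$ strictly between $r_*$ and $f(\boldsymbol x_{\mathrm{out}})$. This choice is valid because an output above some value also exceeds every smaller threshold, and $r_*<1/3$ leaves room. Then $f(\boldsymbol x_{\mathrm{out}})>r$ with $r\in[3/10,1/3]$, so Proposition~\ref{prop:finite-system} yields a solution of $\Sigma_r$. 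This contradicts the first part, so $f(\boldsymbol x_{\mathrm{out}})\le r_*$.
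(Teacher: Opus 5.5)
Your proposal is correct and follows the paper's own argument. A putative solution of $\Sigma_r$ is normalized by Lemma~\ref{lem:weight-reduction} (using the symmetric label exchange and the exclusion of $u\ge1$) to the reduced list~\eqref{eq:main-reduced-conditions}, moved to $v=v_0$ by Proposition~\ref{prop:boundary-reduction}, and excluded there by Lemma~\ref{lem:switch-exclusion}; the consequence then follows by choosing $r\in(r_*,\min\{f(\boldsymbol x_{\mathrm{out}}),1/3\})$ and applying Proposition~\ref{prop:finite-system}.

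One small citation slip: you start from an arbitrary $\boldsymbol\theta$, not from an execution. So the weight bounds and the payoff signs $s_1,s_2,T\ge0$ should be taken directly as conjuncts of $\Sigma_r$ in~\eqref{eq:scalar-system}, not from Lemmas~\ref{lem:positive-weights} and~\ref{lem:main-g-projection}, which concern executions.
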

The last conclusion follows by choosing a comparison threshold
strictly between $r_*$ and a hypothetical larger output regret,
with the threshold below $1/3$.
The exact checks used in these reductions verify specified arithmetic
identities, signs, and coverage; the implication from an algorithm
failure to the numerical inequalities follows from the comparisons in
Section~\ref{sec:reduction}. All proof computations are offline and
are absent from Algorithm~\ref{alg:main}.
The constant is an upper bound for the algorithm. An exact machine
check also verifies that the necessary inequalities can hold with
equality at $r_*$; Appendix~\ref{app:discovery} identifies that
check. This alone does not show the algorithm attains regret $r_*$:
that would require a game realizing those numerical values.
Section~\ref{sec:accuracy-transfer} now restores the approximation
errors.

\subsection{Restoring the approximation errors}
\label{sec:accuracy-transfer}

The preceding subsection proved that $\Sigma_r$ has no solution for
$r_*<r\le1/3$ and hence established the ideal guarantee.
The actual algorithm uses accuracy $\delta_f$ for $\spoint$ and
$\tau$ for $\om$, as specified in Section~\ref{sec:complete-algorithm}.
To prove the actual guarantee, suppose the output exceeds
$r+\delta_f+\tau$. The $\om$ guarantee then makes every tested
mixture exceed $r+\delta_f$. We next account for the stationary
comparison and its nearly best-response strategies using a single
$\delta_f$ margin. This recovers the same necessary inequalities
$\Sigma_r$, contradicting the preceding subsection. The four
$\sm$ LPs are solved exactly and add no error. The next lemma
states this transfer.

\begin{lemma}[Transfer to approximate building blocks]
\label{lem:error-transfer}
For the actual accuracy parameters $\delta_f,\tau>0$, an execution
with the building-block properties of Section~\ref{sec:properties}
satisfies
\[
 \forall r\in(r_*,1/3),\quad
 f(\boldsymbol x_{\mathrm{out}})>r+\delta_f+\tau
 \ \Longrightarrow\ \exists\boldsymbol\theta,\ \Sigma_r(\boldsymbol\theta).
\]
\end{lemma}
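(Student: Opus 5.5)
The plan is to rerun the derivation of Section~\ref{sec:reduction} with the accuracy margins carried explicitly. At every place where a zero-error comparison was used, the approximate version will weaken that comparison by at most the amount absorbed by replacing the threshold. Concretely, put $r'=r+\delta_f+\tau$ and assume $f(\boldsymbol x_{\mathrm{out}})>r'$. The selection property~\eqref{eq:properties-selection} gives $F_j^i>r'>r$ for the four $\sm$ candidates. For the $\om$ candidate, \eqref{eq:om} gives $f(\boldsymbol x(p,q))\ge f(\boldsymbol x_o)-\tau>r+\delta_f$ for every $(p,q)\in[0,1]^2$. The exact endpoint comparison gives $g\ge f(\boldsymbol x_o)>r'$. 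So we obtain the failure conditions~\eqref{eq:failure-candidates} with the margin $\delta_f$ still available on everything that touches $\spoint$. The four $\sm$ calls are exact LPs, so \eqref{eq:properties-primal}--\eqref{eq:properties-sm-br} hold verbatim.

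Next I would handle the stationary properties. Inequality~\eqref{eq:properties-stationary} reads $g-\delta_f\le T(\cdots)$. Since $g>r+\delta_f$, the positivity argument of Lemma~\ref{lem:positive-weights} still gives $r<\rho,\sigma$. The cleanest bookkeeping, I expect, is to introduce $\tilde g=g-\delta_f>r$ and check that every instantiation of~\eqref{eq:properties-stationary} used in the reduction depends on $g$ only through the left side $g-\delta_f$. Where the reduction instead uses $f_1(\boldsymbol x_s)=g$ through the regret definition, for instance in the endpoint regrets entering~\eqref{eq:mixing-regrets}, or in turning the near-best-response property~\eqref{eq:properties-stationary-br} into $B_1(x_s^2)\le u_1(y_s^1,x_s^2)+\delta_f$, I would verify that the error enters with a sign that is absorbed by the extra $\delta_f$ margin in $f(\boldsymbol x(p,q))>r+\delta_f$ and in $F_j^i>r+\delta_f$.

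The key structural point is that the near-best-response error $\delta_f$ appears additively and at most once in each derived upper bound. Each of the payoff bounds~\eqref{eq:main-payoff-bounds}, each of the second-round bounds~\eqref{eq:main-weighted-regret-bounds}, and each mixed-regret bound $R_g,C_g$ is a convex combination of terms. At most one of those terms carries a $+\delta_f$, or the coefficients multiplying $\delta_f$ sum to at most one. The Jensen-gap bound of Lemma~\ref{lem:main-jensen} loses at most $\delta_f$ through $\ell_c$. In each case I would compare the perturbed inequality with the failure threshold raised by $\delta_f$, and recover exactly the inequality used in the ideal case at threshold $r$ with $\tilde g$ in place of $g$. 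Lemma~\ref{lem:main-g-projection} applies to $\tilde g\ge r$ just as it did to $g$, and then projects to $\tilde g=r$. This yields $\Sigma_r(\boldsymbol\theta)$ with $\boldsymbol\theta=(\rho,\eta,\xi,\alpha,\gamma)$ taken from the actual LP solutions.

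The main obstacle is the bookkeeping for the cross-profile quantities $S,T$ and the coupled inequality $E_{1,g}$. There, $\delta_f$ terms are multiplied by the ratios $u,v$ and by $\gamma/\xi$. These factors can exceed one, so a single $\delta_f$ margin might not dominate. I would first try to show that in those derivations the stationary error enters only through $g-\delta_f$ (via~\eqref{eq:main-stationary-payoff-lower-bounds} with $\tilde g$) and not through~\eqref{eq:properties-stationary-br}, since $y_1^1,y_1^2$ are exact best responses from $\sm$ by Lemma~\ref{lem:positive-weights}. If some term still carries an amplified $\delta_f$, I would weaken it by using the domain bounds $u,v\le1/k$ and absorb a constant multiple of $\delta_f$. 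That would mean rescaling $\delta_f$ by a fixed constant in the algorithm, which Theorem~\ref{thm:main} tolerates.
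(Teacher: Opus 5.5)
Your plan is correct and matches the paper's proof: subtract $\tau$ once through the $\om$ comparison, carry the single conservative value $\tilde g=g-\delta_f$ (the paper's $G_s$) through every payoff, weighted, shared, and Jensen-gap bound, absorb the corner errors $\delta_f(1-p)(1-q)+\tau_1p(1-q)\le\delta_f$ using the surplus in $f(\boldsymbol x(p,q))>r+\delta_f$, and reapply the projection lemma on $r<\tilde g\le\min\{\rho\sigma(1-r)/D_c,\rho\sigma(1-r)/D_r\}$. The amplification you fear for $S,T,E_{1,g}$ does not arise, though not because $y_1^1,y_1^2$ are exact best responses (the near-best-response error does reach $S,T$ through $h_1,h_2$). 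Rather, it is cancelled by the $\delta_f$ margin of $F_1^1$, giving $\eta d<1-\eta-r-\delta_f+\eta\tau_1\le1-\eta-r$ for $d=f_1(y_1^1,x_s^2)$ and $\tau_1=f_1(y_s^1,x_s^2)\le\delta_f$, before the minimum with one is taken. It also enters the stationarity bound for $h_2$ only through $d-\tau_1\le d$, so your fallback of rescaling $\delta_f$, which would not prove the lemma as stated, is never needed.
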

\begin{proof}[Proof idea]
We present only the main proof ideas here; the complete proof of
Lemma~\ref{lem:error-transfer} is given in Appendix~\ref{app:errors}.
Fix $r\in(r_*,1/3)$ and assume the stated failed-output threshold.
Output selection~\eqref{eq:properties-selection} makes every
candidate exceed $r+\delta_f+\tau$.
The comparison of $\om$ in~\eqref{eq:om} gives
$f(\boldsymbol x_o)\le f(\boldsymbol x(p,q))+\tau$ for every
$p,q\in[0,1]$. Therefore
\begin{equation}
 \forall p,q\in[0,1],\quad
 f(\boldsymbol x(p,q))>r+\delta_f.
 \label{eq:main-error-rectangle-failure}
\end{equation}
This is the only subtraction of $\tau$. All four $\sm$ LPs are
exact, so their comparison inequalities and complementary slackness
introduce no additional accuracy term.

The stationary comparison guarantees only $g-\delta_f$ where the
ideal proof used $g$. To reuse the earlier inequalities, call this
conservative value $G_s$:
\begin{equation}
 G_s=g-\delta_f>r.
 \label{eq:main-error-stationary-value}
\end{equation}
We use this same value in every subsequent comparison. The
comparisons at $(x_s^1,y_s^2)$ and
$(y_s^1,x_s^2)$ give $G_s\le\rho,\sigma$, so both stationary
coefficients are above $r$.
The near-best-response error also fits within the same surplus.
For the row player, let $\tau_1$ measure how far $y_s^1$ is from
being a best response to $x_s^2$; it is at most $\delta_f$ and is
unrelated to the $\om$ accuracy $\tau$. Also let $d$ be the row
regret after replacing $y_s^1$ by $y_1^1$:
\begin{equation}
 \tau_1=f_1(y_s^1,x_s^2)=B_1(x_s^2)-u_1(y_s^1,x_s^2),\quad
 0\le\tau_1\le\delta_f,\qquad d=f_1(y_1^1,x_s^2)\ge0.
 \label{eq:main-error-row-shortfall}
\end{equation}
Applying the comparison inequality~\eqref{eq:properties-dual} of
$\sm(y_s^1)$ at a column best response to $y_s^1$ gives
$F_1^1\le\eta$, hence $\eta>r$.
Applying it at $x_s^2$ instead gives
\begin{equation}
 \begin{aligned}
 r+\delta_f<F_1^1&\le\eta(\tau_1-d)+(1-\eta),\\
 \eta d&<1-\eta-r-\delta_f+\eta\tau_1\le1-\eta-r.
 \end{aligned}
 \label{eq:main-error-cancellation}
\end{equation}
Thus $\eta<1-r$ and
$d\le\min\{1,(1-\eta-r)/\eta\}$, the same regret bound used
in~\eqref{eq:main-first-round-regret-bounds}.
The error has canceled before taking the minimum with one.
An exact machine check reconstructs this cancellation identity
and both cases of the minimum.
The symmetric comparisons give $r<\xi<1-r$ and the corresponding
column-regret bound. The auxiliary strategies from the first round
are then exact best responses to those calls' chosen strategies;
the second-round comparisons give $r<\alpha,\gamma<1-r$ as in the
ideal proof.

Using the stationary value $G_s$ and these regret bounds, the six
payoff bounds are now evaluated at
\begin{equation}
 \boldsymbol\lambda_s=(r,G_s,\rho,\eta,\xi).
 \label{eq:main-error-payoff-parameters}
\end{equation}
The same applications of the comparison inequalities from
Section~\ref{sec:reduction}
give their nonnegativity and the weighted and shared comparisons
\eqref{eq:main-weighted-regret-bounds} and
\eqref{eq:main-shared-coupling}, with $g=G_s$.
The joint regret comparisons used for the Jensen-gap bounds have
additional nonnegative terms proportional to $\tau_1$ and its
column counterpart. Dropping these terms and applying the weighted
minimum argument of Lemma~\ref{lem:main-jensen} gives the row
lower bound $j_r(\boldsymbol\lambda_s,q)$ and the symmetric
column bound, both evaluated at $g=G_s$.

It remains to transfer the segment conditions. In the row-regret
interpolation~\eqref{eq:mixing-regrets}, the stationary corner
contributes $g=G_s+\delta_f$ and the corner $(y_s^1,x_s^2)$
contributes $\tau_1$. Their total error is bounded by
\[
 \delta_f(1-p)(1-q)+\tau_1p(1-q)
 \le\delta_f(1-q).
\]
The column calculation is symmetric. With the Jensen-gap lower
bounds just obtained, this gives
\begin{equation}
 \begin{aligned}
 \forall p,q\in[0,1],\quad f_1(\boldsymbol x(p,q))
   &\le R_{G_s}(p,q)+\delta_f(1-q),\\
 \forall p,q\in[0,1],\quad f_2(\boldsymbol x(p,q))
   &\le C_{G_s}(p,q)+\delta_f(1-p).
 \end{aligned}
 \label{eq:main-error-rectangle-bounds}
\end{equation}
Here $R_{G_s},C_{G_s}$ are $R_g,C_g$ with $g$ replaced by $G_s$.
The same machine check reconstructs these error identities and
the nonnegative combinations giving the payoff and regret bounds.
If their maximum were at most $r$ at any $(p,q)$, both actual
regrets would be at most $r+\delta_f$, contradicting
\eqref{eq:main-error-rectangle-failure}.
The four segment conditions~\eqref{eq:main-segment-conditions}
therefore hold at this same adjusted parameter.

The first-round comparisons~\eqref{eq:main-first-round-g-comparisons},
with $g=G_s$, give the interval needed for the substitution in
Lemma~\ref{lem:main-g-projection}. Its preservation argument uses
only the coefficient bounds, nonnegativity, weighted and shared
comparisons, segment conditions, and first-round comparisons
established above. It therefore applies to these adjusted numerical
values. Substituting $G_s=r$ and then using the payoff bounds
\eqref{eq:main-projected-payoff-bounds} produces $\Sigma_r$.
Appendix~\ref{app:errors} identifies the checker path for these
error calculations.
\end{proof}

For every $r\in(r_*,1/3)$, Proposition~\ref{prop:global-arithmetic}
excludes the solution supplied by Lemma~\ref{lem:error-transfer}.
If the output regret exceeded $r_*+\delta_f+\tau$, we could choose
\[
 r_*<r<\min\{f(\boldsymbol x_{\mathrm{out}})-\delta_f-\tau,1/3\},
\]
contradicting that exclusion. Hence
\begin{equation}
 f(\boldsymbol x_{\mathrm{out}})\le r_*+\delta_f+\tau.
 \label{eq:main-error-final-bound}
\end{equation}
Taking $\delta_f=\tau=\delta/2$ gives the claimed
$(r_*+\delta)$ approximation and proves Theorem~\ref{thm:main};
the running time follows from the fixed composition in
Section~\ref{sec:complete-algorithm}.

\section{Discussion}
\label{sec:discussion}

The present work extends the role of LLM agents in the LegoNE
methodology. In the reported LegoNE discovery runs, the programming
language, building-block specifications, and compilation rules were
fixed before the search~\cite{legone2026,computeraided2026}; the LLM
agent explored compositions within that space. Here, an LLM agent
can also develop the search space: identify or propose building
blocks, interleave instantiation and forgetting as the argument
evolves, and choose proof methods and write suitable compilation
or verification routines. The mathematical abstractions and
verification methods can therefore evolve together with the algorithm.

Codex played a central role in discovering the algorithm and developing
its proof, including the specialized routines and exact arithmetic
checkers. In our experience, LLM agents appear indispensable for the
present improvement and may remain so for future advances in
approximate Nash equilibrium algorithms: the proof complexity already
exceeds what we can reliably handle through unaided human reasoning.

An open problem is a polynomial-time $(c+\delta)$-approximation
algorithm for bimatrix games with a constant $c<0.3$ and every
$\delta>0$. We explored this target with Codex, but prolonged work
did not produce a certified bound: the growing complexity of the
proof became the main obstacle. Progress may require new building
blocks or mathematical abstractions that make the analysis simpler.

\section*{Acknowledgments}

OpenAI's Codex was used for manuscript preparation, including language
editing and \LaTeX{} typesetting. It also made substantial contributions
to algorithm design, the development of the approximation proof, and
the implementation of exact arithmetic verification routines. The
authors are responsible for reviewing and finalizing all mathematical
arguments and computational results, and assume full responsibility
for the content.

% A self-contained bibliography keeps this initial draft compilable in the
% built-in LaTeX editor without an external BibTeX file.
% Cite journal versions first, then conference versions, then preprints.

\appendix

\section{Instantiating the building-block properties}
\label{app:instantiation}

This appendix supplies the strategy instantiations used in
Lemma~\ref{lem:positive-weights} and in the payoff elimination of
Section~\ref{sec:reduction}. We analyze the building-block properties
with zero approximation errors, as specified in~\eqref{eq:ideal-parameters}.
The accuracy transfer is proved in Appendix~\ref{app:errors}.
Fix one game with payoffs in $[0,1]$, one execution, and one optimal
dual solution for each of its five LPs, including the stationary LP.
All comparisons below use this same game and these same dual solutions.
The stationary-point and mixing encodings are those
of~\cite{ts2008,tightness2023,automating2025,mixing2025};
their instantiation follows the LegoNE
framework~\cite{computeraided2026,legone2026} and uses LP optimality
and complementary slackness~\cite{boyd2004}.

Use the four candidate profiles $\boldsymbol z_j^i$ and their maximum
regrets $F_j^i$ from~\eqref{eq:candidate-profiles}. In these two symbols,
the superscript identifies the player whose input is fixed and the
subscript identifies the round. In a single strategy $x_j^i$ or $y_j^i$,
the superscript instead identifies the player using that strategy.
The four weights $\eta,\xi,\alpha,\gamma$ are the call-specific
dual weights defined in~\eqref{eq:call-weight-definitions}. For example,
$\eta$ weights the row-deviation gains in the first-round operation
$\sm(y_s^1)$, and $1-\eta$ weights its column regret.

To analyze a possible failure of the approximation guarantee, assume that
the selected output has maximum regret above $r$, with $0<r\le1/3$,
as in~\eqref{eq:failure-hypothesis}. The resulting inequalities
\eqref{eq:failure-candidates} say that all four $F_j^i$ exceed $r$,
that $g\ge f(\boldsymbol x_o)>r$, and that every profile
$\boldsymbol x(p,q)$ in the original mixing rectangle has maximum
regret above $r$. These assumptions allow comparisons at strategies
chosen during the proof, as well as at the returned candidates.

\paragraph{The stationary LP comparison.}
The joint inequality~\eqref{eq:properties-stationary} can be
represented by two additive constants from the stationary LP.
Write $A_s,B_s$ for lower bounds on its two separate comparison
parts. These constants are chosen for the fixed stationary dual
solution, independently of the comparison strategies. With zero
error, their conditions are
\begin{equation}
 \begin{gathered}
 \exists A_s,B_s\in\mathbb R\text{ such that}\\
 \begin{aligned}
 \forall x^{1\prime}\in\Delta_{n_1},\quad
 A_s&\le-\rho u_1(x^{1\prime},x_s^2)
       +\sigma[u_2(x^{1\prime},y_s^2)-u_2(x^{1\prime},x_s^2)],\\
 \forall x^{2\prime}\in\Delta_{n_2},\quad
 B_s&\le-\sigma u_2(x_s^1,x^{2\prime})
       +\rho[u_1(y_s^1,x^{2\prime})-u_1(x_s^1,x^{2\prime})],\\
 g&\le\rho u_1(\boldsymbol x_s)+\sigma u_2(\boldsymbol x_s)+A_s+B_s.
 \end{aligned}
 \end{gathered}
 \label{eq:stationary-lp-offsets}
\end{equation}
Such constants exist: minimize each comparison part over its own
simplex. The sum of these two minima is the minimum of their sum,
since the two comparison strategies are independent; the universal
stationarity inequality gives the last line. Conversely, adding the
three lines gives~\eqref{eq:properties-stationary}.
The quantified registry records this LP representation. We use the
combined inequality directly, eliminating $A_s,B_s$ before forming
the scalar necessary system. With approximation error, the last
line has $g-\delta_f$ in place of $g$.

\paragraph{The four LP comparisons.}
We first record the dual inequalities for the first-round operations
$\sm(y_s^1)$ and $\sm(y_s^2)$ and the second-round operations
$\sm(y_1^1)$ and $\sm(y_1^2)$, so that each subsequent strategy
instantiation has an explicit premise. Instantiating
\eqref{eq:properties-dual} with their inputs and outputs in
Algorithm~\ref{alg:main} and the weight assignments
\eqref{eq:call-weight-definitions} gives
\begin{subequations}
\label{eq:app-four-call-clauses}
\begin{align}
 \forall x^{2\prime}\in\Delta_{n_2},\quad
 F_1^1&\le
 \eta[u_1(y_1^1,x^{2\prime})-u_1(y_s^1,x^{2\prime})]
 +(1-\eta)f_2(y_s^1,x^{2\prime}),
 \label{eq:first-row-instance}\\
 \forall x^{1\prime}\in\Delta_{n_1},\quad
 F_1^2&\le
 \xi[u_2(x^{1\prime},y_1^2)-u_2(x^{1\prime},y_s^2)]
 +(1-\xi)f_1(x^{1\prime},y_s^2),
 \label{eq:app-first-column-instance}\\
 \forall x^{2\prime}\in\Delta_{n_2},\quad
 F_2^1&\le
 \alpha[u_1(y_2^1,x^{2\prime})-u_1(y_1^1,x^{2\prime})]
 +(1-\alpha)f_2(y_1^1,x^{2\prime}),
 \label{eq:app-second-row-instance}\\
 \forall x^{1\prime}\in\Delta_{n_1},\quad
 F_2^2&\le
 \gamma[u_2(x^{1\prime},y_2^2)-u_2(x^{1\prime},y_1^2)]
 +(1-\gamma)f_1(x^{1\prime},y_1^2).
 \label{eq:app-second-column-instance}
\end{align}
\end{subequations}
Each inequality holds throughout the stated comparison simplex,
with one fixed dual strategy and weight for its specified operation.

\paragraph{Positive weights and best responses.}
The best-response conclusions
\eqref{eq:properties-stationary-br} and~\eqref{eq:properties-sm-br}
require positive dual weights. We establish that positivity before
using the conclusions to compare payoffs for the second-round operations
$\sm(y_1^1)$ and $\sm(y_1^2)$.
Recall that $\rho$ is the stationary row weight and $\sigma=1-\rho$
is the stationary column weight.
Define the two stationary/dual payoff differences by
\begin{equation}
 \lambda_{\mathrm{payoff}}=u_1(y_s^1,y_s^2)-u_1(x_s^1,y_s^2),\qquad
 \mu_{\mathrm{payoff}}=u_2(y_s^1,y_s^2)-u_2(y_s^1,x_s^2).
 \label{eq:stationary-payoff-differences}
\end{equation}
Instantiate~\eqref{eq:properties-stationary}, with $\delta_f=0$,
at the comparison profiles $(x_s^1,y_s^2)$ and $(y_s^1,x_s^2)$,
respectively:
\begin{equation}
 \begin{aligned}
 g&\le\rho[u_1(y_s^1,y_s^2)-u_1(x_s^1,y_s^2)],\\
 g&\le\sigma[u_2(y_s^1,y_s^2)-u_2(y_s^1,x_s^2)].
 \end{aligned}
 \label{eq:app-stationary-weight-tests}
\end{equation}
The column bracket is zero in the first instantiation and the row
bracket is zero in the second. Each displayed payoff difference is
at most one. Since $g>r$ in~\eqref{eq:failure-candidates}, this gives
\begin{equation}
 r<\rho<1-r,\qquad \rho,\sigma>0.
 \label{eq:app-positive-stationary-weights}
\end{equation}
Thus~\eqref{eq:properties-stationary-br}, with zero error, implies
\begin{equation}
 B_1(x_s^2)=u_1(y_s^1,x_s^2),\qquad
 B_2(x_s^1)=u_2(x_s^1,y_s^2).
 \label{eq:app-stationary-best-responses}
\end{equation}
The functions $B_1,B_2$ are the best-response payoff functions
defined in Section~\ref{sec:optimalmixing}.
Nonnegativity of $u_1(x_s^1,y_s^2)$ and $u_2(y_s^1,x_s^2)$ in
\eqref{eq:app-stationary-weight-tests} also gives
\begin{equation}
 u_1(y_s^1,y_s^2)\ge g/\rho,\qquad
 u_2(y_s^1,y_s^2)\ge g/\sigma.
 \label{eq:app-stationary-payoff-lower-bounds}
\end{equation}
Equivalently, the two payoff differences satisfy the lower bounds
\begin{equation}
 \lambda_{\mathrm{payoff}}\ge\lambda_0=g/\rho,\qquad
 \mu_{\mathrm{payoff}}\ge\mu_0=g/\sigma.
 \label{eq:stationary-gap-lower-values}
\end{equation}
These lower bounds will bound regrets from above, since all
best-response payoffs are at most one.

For the first-round operation $\sm(y_s^1)$, choose a column best
response $x_{\mathrm{br}}^2$ to $y_s^1$. It exists because payoffs are
continuous and the simplex is compact. Instantiate
\eqref{eq:first-row-instance} with $x^{2\prime}=x_{\mathrm{br}}^2$.
The column regret is zero, so
\begin{equation}
 \begin{aligned}
 f_2(y_s^1,x_{\mathrm{br}}^2)&=0,\\
 F_1^1&\le\eta[u_1(y_1^1,x_{\mathrm{br}}^2)
                   -u_1(y_s^1,x_{\mathrm{br}}^2)]\le\eta.
 \end{aligned}
 \label{eq:app-first-row-best-response-test}
\end{equation}
Next instantiate the same inequality with $x^{2\prime}=x_s^2$.
By~\eqref{eq:app-stationary-best-responses}, its row payoff
difference is nonpositive; the column regret is at most one. Hence
\begin{equation}
 F_1^1\le\eta[u_1(y_1^1,x_s^2)-u_1(y_s^1,x_s^2)]
                  +(1-\eta)f_2(y_s^1,x_s^2)\le1-\eta.
 \label{eq:app-first-row-stationary-test}
\end{equation}
Together with $F_1^1>r$ in~\eqref{eq:failure-candidates}, these
instantiations give
$r<\eta<1-r$.

To transfer a best-response conclusion to the second round,
instantiate the strong-duality equality
\eqref{eq:properties-strong-duality} with the input $y_s^1$,
primal output $x_1^2$, dual strategy $y_1^1$, and weight $\eta$
of the first-round operation $\sm(y_s^1)$.
The regret definition~\eqref{eq:regret} then gives
\begin{equation}
 \begin{aligned}
 F_1^1
 &=\eta[u_1(y_1^1,x_1^2)-u_1(y_s^1,x_1^2)]
                       +(1-\eta)f_2(y_s^1,x_1^2)\\
 &\le\eta f_1(y_s^1,x_1^2)+(1-\eta)f_2(y_s^1,x_1^2)
 \le F_1^1.
 \end{aligned}
 \label{eq:app-first-row-tightness}
\end{equation}
Every inequality is an equality. Since $\eta>0$, $y_1^1$ attains
the row best-response payoff against $x_1^2$. Since $\eta$ and
$1-\eta$ are both positive, both regrets equal the maximum regret:
\begin{equation}
 B_1(x_1^2)=u_1(y_1^1,x_1^2),\qquad
 f_1(\boldsymbol z_1^1)=f_2(\boldsymbol z_1^1)=F_1^1.
 \label{eq:app-first-row-best-response-output}
\end{equation}

For the first-round operation $\sm(y_s^2)$, instantiate
\eqref{eq:app-first-column-instance} at a row best response to
$y_s^2$; the row regret vanishes and $F_1^2\le\xi$.
Instantiate the same inequality with $x^{1\prime}=x_s^1$;
the column payoff difference is nonpositive by
\eqref{eq:app-stationary-best-responses}, so $F_1^2\le1-\xi$.
Apply the strong-duality equality and regret definition at the
primal output $x_1^1$, now with input $y_s^2$, dual strategy
$y_1^2$, and weight $\xi$ of $\sm(y_s^2)$. This gives
\begin{equation}
 r<\xi<1-r,\qquad
 B_2(x_1^1)=u_2(x_1^1,y_1^2),\qquad
 f_1(\boldsymbol z_1^2)=f_2(\boldsymbol z_1^2)=F_1^2.
 \label{eq:app-first-column-best-response-output}
\end{equation}

For the second-round operation $\sm(y_1^1)$, instantiate
\eqref{eq:app-second-row-instance} with $x^{2\prime}=x_1^2$.
Its input $y_1^1$ is a best response to $x_1^2$ by
\eqref{eq:app-first-row-best-response-output}, giving
$F_2^1\le1-\alpha$. Instantiating the same inequality at a column
best response to $y_1^1$ instead gives $F_2^1\le\alpha$.
For the second-round operation $\sm(y_1^2)$, instantiate
\eqref{eq:app-second-column-instance} first with $x^{1\prime}=x_1^1$
and then at a row best response to $y_1^2$. These give
$F_2^2\le1-\gamma$ and $F_2^2\le\gamma$, respectively.
The failures in~\eqref{eq:failure-candidates} therefore make all
four weights and their complements positive. Instantiating
\eqref{eq:properties-strong-duality} at the respective primal outputs
of $\sm(y_1^1)$ and $\sm(y_1^2)$ and repeating the equality chain
\eqref{eq:app-first-row-tightness} gives
\begin{equation}
 \begin{gathered}
 r<\eta,\xi,\alpha,\gamma<1-r,\\
 B_1(x_2^2)=u_1(y_2^1,x_2^2),\qquad
 B_2(x_2^1)=u_2(x_2^1,y_2^2),\\
 f_1(\boldsymbol z_2^1)=f_2(\boldsymbol z_2^1)=F_2^1,\qquad
 f_1(\boldsymbol z_2^2)=f_2(\boldsymbol z_2^2)=F_2^2.
 \end{gathered}
 \label{eq:app-positive-call-weights}
\end{equation}
Thus the first-round best-response conclusion is proved before
it is used to instantiate the dual inequalities of the second-round
operations $\sm(y_1^1)$ and $\sm(y_1^2)$.

\paragraph{Lower bounds on payoffs used by the second round.}
Recall the parameter tuple $\boldsymbol\lambda=(r,g,\rho,\eta,\xi)$
from~\eqref{eq:payoff-bound-parameters}. All six payoff-bound
functions below are evaluated at this full tuple.
Dividing the dual inequalities of the first-round operations
$\sm(y_s^1)$ and $\sm(y_s^2)$ by their positive weights repeatedly
produces the same ratios. Recall the relative weights $u,v$ from
\eqref{eq:main-relative-weights}. The following display uses exactly
these same $u,v$ and records their common lower endpoint:
\begin{equation}
 u=\frac{1-\eta}{\eta},\qquad
 v=\frac{1-\xi}{\xi},\qquad
 k=\frac r{1-r},\qquad k<u,v<1/k.
 \label{eq:relative-weights}
\end{equation}
The inequalities follow from~\eqref{eq:app-positive-call-weights}.
These are scalar ratios; the payoff functions continue to be
written as $u_1,u_2$.

We seek lower bounds on the payoffs of the first-round dual
strategies, because they are the inputs of $\sm(y_1^1)$ and
$\sm(y_1^2)$. Instantiate the dual inequality
\eqref{eq:first-row-instance} of $\sm(y_s^1)$ with
$x^{2\prime}=y_s^2$, and use
\eqref{eq:app-stationary-payoff-lower-bounds}:
\begin{equation}
 \begin{aligned}
 r<F_1^1
 &\le\eta[u_1(y_1^1,y_s^2)-u_1(y_s^1,y_s^2)]
                          +(1-\eta)f_2(y_s^1,y_s^2)\\
 &\le\eta[u_1(y_1^1,y_s^2)-g/\rho]
                          +(1-\eta)(1-g/\sigma).
 \end{aligned}
 \label{eq:app-first-row-dual-profile-test}
\end{equation}
Solving for $u_1(y_1^1,y_s^2)$, and then subtracting from one,
gives the first of the following upper-bound functions.
The second follows by instantiating the dual inequality
\eqref{eq:app-first-column-instance} of $\sm(y_s^2)$ with
$x^{1\prime}=y_s^1$:
\begin{equation}
 \begin{aligned}
 s_1(\boldsymbol\lambda)&=1-g/\rho-\frac{r-(1-\eta)(1-g/\sigma)}{\eta},\\
 s_2(\boldsymbol\lambda)&=1-g/\sigma-\frac{r-(1-\xi)(1-g/\rho)}{\xi}.
 \end{aligned}
 \label{eq:payoff-bound-s}
\end{equation}
They satisfy
\begin{equation}
 \begin{aligned}
 1-u_1(y_1^1,y_s^2)&\le s_1(\boldsymbol\lambda),&
 f_1(y_1^1,y_s^2)&\le s_1(\boldsymbol\lambda),\\
 1-u_2(y_s^1,y_1^2)&\le s_2(\boldsymbol\lambda),&
 f_2(y_s^1,y_1^2)&\le s_2(\boldsymbol\lambda).
 \end{aligned}
 \label{eq:app-payoff-bound-s-meaning}
\end{equation}
The bounds on $1-u_i$ give payoff information as well as regret
information; later inequalities will use both.

We also need the row regret of $y_1^1$ against $x_s^2$. Rearrange
\eqref{eq:app-first-row-stationary-test}, using $F_1^1>r$,
$f_2\le1$, and~\eqref{eq:app-stationary-best-responses}.
This regret is at most $(1-\eta-r)/\eta$ and also at most one.
Instantiating the dual inequality
\eqref{eq:app-first-column-instance} of $\sm(y_s^2)$ with
$x^{1\prime}=x_s^1$ gives the second bound:
\begin{equation}
 \begin{aligned}
 d_1&=\min\{1,(1-\eta-r)/\eta\},&
 B_1(x_s^2)-u_1(y_1^1,x_s^2)&\le d_1,\\
 d_2&=\min\{1,(1-\xi-r)/\xi\},&
 B_2(x_s^1)-u_2(x_s^1,y_1^2)&\le d_2.
 \end{aligned}
 \label{eq:app-stationary-payoff-loss-bounds}
\end{equation}
Both $d_i$ are nonnegative by~\eqref{eq:app-positive-call-weights}.

To turn these regrets into payoff bounds for the other player,
instantiate~\eqref{eq:properties-stationary} at
$(x_s^1,y_1^2)$ and $(y_1^1,x_s^2)$. The zero-error comparisons are
\begin{equation}
 \begin{aligned}
 g&\le\rho[u_1(y_s^1,y_1^2)-u_1(x_s^1,y_1^2)]
              +\sigma[B_2(x_s^1)-u_2(x_s^1,y_1^2)],\\
 g&\le\rho[B_1(x_s^2)-u_1(y_1^1,x_s^2)]
              +\sigma[u_2(y_1^1,y_s^2)-u_2(y_1^1,x_s^2)].
 \end{aligned}
 \label{eq:app-cross-stationary-tests}
\end{equation}
Use~\eqref{eq:app-stationary-payoff-loss-bounds} and nonnegative
payoffs to obtain $u_1(y_s^1,y_1^2)\ge g/\rho-(\sigma/\rho)d_2$
and $u_2(y_1^1,y_s^2)\ge g/\sigma-(\rho/\sigma)d_1$.
Together with $u_i\ge0$, these define
\begin{equation}
 \begin{aligned}
 h_1(\boldsymbol\lambda)&=\min\{1,1-g/\rho+(\sigma/\rho)d_2\},\\
 h_2(\boldsymbol\lambda)&=\min\{1,1-g/\sigma+(\rho/\sigma)d_1\}.
 \end{aligned}
 \label{eq:payoff-bound-h}
\end{equation}
Their meanings are
\begin{equation}
 1-u_1(y_s^1,y_1^2)\le h_1(\boldsymbol\lambda),\qquad
 1-u_2(y_1^1,y_s^2)\le h_2(\boldsymbol\lambda).
 \label{eq:app-payoff-bound-h-meaning}
\end{equation}
Thus $(s_1(\boldsymbol\lambda),h_2(\boldsymbol\lambda))$ bounds the two regrets at $(y_1^1,y_s^2)$,
and $(h_1(\boldsymbol\lambda),s_2(\boldsymbol\lambda))$ bounds them at $(y_s^1,y_1^2)$.

We next compare the dual strategies returned by
$\sm(y_s^1)$ and $\sm(y_s^2)$.
Instantiate the dual inequality~\eqref{eq:first-row-instance}
of $\sm(y_s^1)$ with $x^{2\prime}=y_1^2$ and the dual inequality
\eqref{eq:app-first-column-instance} of $\sm(y_s^2)$ with
$x^{1\prime}=y_1^1$.
Using~\eqref{eq:failure-candidates}, then the four payoff bounds
\eqref{eq:app-payoff-bound-s-meaning} and
\eqref{eq:app-payoff-bound-h-meaning}, gives
\begin{equation}
 \begin{aligned}
 r&\le\eta[u_1(y_1^1,y_1^2)-u_1(y_s^1,y_1^2)]
                           +(1-\eta)f_2(y_s^1,y_1^2),\\
 r&\le\xi[u_2(y_1^1,y_1^2)-u_2(y_1^1,y_s^2)]
                           +(1-\xi)f_1(y_1^1,y_s^2).
 \end{aligned}
 \label{eq:app-first-round-cross-tests}
\end{equation}
Solving for the actual payoffs at $(y_1^1,y_1^2)$ defines
\begin{equation}
 S(\boldsymbol\lambda)=h_1(\boldsymbol\lambda)-r+u(s_2(\boldsymbol\lambda)-r),\qquad
 T(\boldsymbol\lambda)=h_2(\boldsymbol\lambda)-r+v(s_1(\boldsymbol\lambda)-r).
 \label{eq:payoff-bound-st}
\end{equation}
In particular,
\begin{equation}
 1-u_1(y_1^1,y_1^2)\le S(\boldsymbol\lambda),\qquad
 1-u_2(y_1^1,y_1^2)\le T(\boldsymbol\lambda).
 \label{eq:app-payoff-bound-st-meaning}
\end{equation}
Here the one-argument notation $T(\boldsymbol\lambda)$ always denotes the scalar
upper-bound function in~\eqref{eq:payoff-bound-st}.
The four-argument expression used in~\eqref{eq:properties-stationary}
continues to denote the stationary payoff comparison.
The six functions $s_i(\boldsymbol\lambda),h_i(\boldsymbol\lambda),S(\boldsymbol\lambda),T(\boldsymbol\lambda)$ are nonnegative at the
scalar values supplied by the execution, since each bounds
a nonnegative quantity of the form $1-u_i$.

\paragraph{Second-round comparisons at four specified strategies.}
The preceding six payoff bounds allow four instantiations of the
dual inequalities of the second-round operations $\sm(y_1^1)$ and
$\sm(y_1^2)$ to be written without additional payoff variables. Define
\begin{equation}
 \begin{aligned}
 U_g&=\alpha s_1(\boldsymbol\lambda)+(1-\alpha)h_2(\boldsymbol\lambda),&
 V_g&=(1-\gamma)h_1(\boldsymbol\lambda)+\gamma s_2(\boldsymbol\lambda),\\
 H_g&=\alpha S(\boldsymbol\lambda)+(1-\alpha)T(\boldsymbol\lambda),&
 K_g&=(1-\gamma)S(\boldsymbol\lambda)+\gamma T(\boldsymbol\lambda).
 \end{aligned}
 \label{eq:app-weighted-regret-bounds}
\end{equation}
Instantiate the dual inequality~\eqref{eq:app-second-row-instance}
of $\sm(y_1^1)$ with $x^{2\prime}=y_s^2$ and $y_1^2$; instantiate
the dual inequality~\eqref{eq:app-second-column-instance} of
$\sm(y_1^2)$ with $x^{1\prime}=y_s^1$ and $y_1^1$.
Each deviation payoff is
at most one, and each regret is at most one minus the actual
payoff. The four bounds therefore imply
\begin{equation}
 U_g\ge r,\qquad V_g\ge r,\qquad H_g\ge r,\qquad K_g\ge r.
 \label{eq:app-second-round-comparisons}
\end{equation}
For example, instantiating the dual inequality of $\sm(y_1^1)$
with $x^{2\prime}=y_s^2$ gives an expression bounded above by
$\alpha[1-u_1(y_1^1,y_s^2)]+(1-\alpha)[1-u_2(y_1^1,y_s^2)]$,
which is at most $U_g$. The lower bound $r$ comes from
$F_2^1>r$ in~\eqref{eq:failure-candidates}.

\paragraph{Comparisons combining the two rounds.}
Instantiating two dual inequalities at the same arbitrary strategy
can give a restriction that separate instantiations do not supply.
We use the first-round operation $\sm(y_s^2)$ and the second-round
operation $\sm(y_1^2)$ to bound a row best-response payoff at a
mixture, and then instantiate the dual inequality of the second-round
operation $\sm(y_1^1)$ at that mixture.
Fix $x^{1\prime}\in\Delta_{n_1}$. Equations
\eqref{eq:app-first-column-instance} and
\eqref{eq:app-second-column-instance}, the failures
\eqref{eq:failure-candidates}, and the regret definition give
\begin{equation}
 \begin{aligned}
 r&\le\xi[u_2(x^{1\prime},y_1^2)-u_2(x^{1\prime},y_s^2)]
             +(1-\xi)[B_1(y_s^2)-u_1(x^{1\prime},y_s^2)],\\
 r&\le\gamma[u_2(x^{1\prime},y_2^2)-u_2(x^{1\prime},y_1^2)]
             +(1-\gamma)[B_1(y_1^2)-u_1(x^{1\prime},y_1^2)].
 \end{aligned}
 \label{eq:app-shared-column-tests}
\end{equation}
Use $u_2(x^{1\prime},y_s^2)\ge0$, $B_1\le1$, and
$u_2(x^{1\prime},y_2^2)\le1$, then rearrange:
\begin{equation}
 \begin{aligned}
 \xi u_2(x^{1\prime},y_1^2)
     &\ge r-(1-\xi)+(1-\xi)u_1(x^{1\prime},y_s^2),\\
 \gamma u_2(x^{1\prime},y_1^2)
     &\le1-r-(1-\gamma)u_1(x^{1\prime},y_1^2).
 \end{aligned}
 \label{eq:app-shared-column-payoff-inequalities}
\end{equation}
Multiply the first inequality by the positive number $\gamma/\xi$
and substitute its lower bound for the common column payoff into
the second inequality. With $v$ from~\eqref{eq:relative-weights},
this gives
\begin{equation}
 \forall x^{1\prime}\in\Delta_{n_1},\quad
 (1-\gamma)u_1(x^{1\prime},y_1^2)+\gamma v u_1(x^{1\prime},y_s^2)
 \le1-r-\frac{\gamma}{\xi}[r-(1-\xi)].
 \label{eq:common-payoff}
\end{equation}
Both occurrences of the eliminated payoff concern the same
strategy and the same fixed dual solutions. This is why the
universal comparison holds with the same $\xi,\gamma$.

Recall $D_E$ and $y_E^2$ from~\eqref{eq:main-column-comparison-mixture}.
They normalize the coefficients on the left of~\eqref{eq:common-payoff}:
\begin{equation}
 D_E=1-\gamma+\gamma v>0,\qquad
 y_E^2=\frac{(1-\gamma)y_1^2+\gamma v y_s^2}{D_E}\in\Delta_{n_2}.
 \label{eq:app-column-comparison-mixture}
\end{equation}
This is a strategy chosen for the proof. It gives a valid instantiation
of the dual inequality~\eqref{eq:app-second-row-instance} of
$\sm(y_1^1)$, whose comparison domain is the entire column simplex. Payoff linearity and~\eqref{eq:common-payoff}
give
\begin{equation}
 B_1(y_E^2)\le
 \frac{1-r-(\gamma/\xi)[r-(1-\xi)]}{D_E}.
 \label{eq:app-column-mixture-best-response-bound}
\end{equation}
Instantiate the dual inequality~\eqref{eq:app-second-row-instance}
of $\sm(y_1^1)$ with $x^{2\prime}=y_E^2$. Bound the row-deviation
payoff by~\eqref{eq:app-column-mixture-best-response-bound} and the
column best-response payoff by one. Expand each actual payoff
using~\eqref{eq:app-column-comparison-mixture}, and apply
\eqref{eq:app-payoff-bound-s-meaning},
\eqref{eq:app-payoff-bound-h-meaning}, and
\eqref{eq:app-payoff-bound-st-meaning}.
Multiplying by the positive denominator $D_E$ gives
\begin{equation}
 \begin{aligned}
 rD_E\le{}&
 \alpha\{1-r-(\gamma/\xi)[r-(1-\xi)]
       -(1-\gamma)[1-S(\boldsymbol\lambda)]-\gamma v[1-s_1(\boldsymbol\lambda)]\}\\
 &+(1-\alpha)[(1-\gamma)T(\boldsymbol\lambda)+\gamma v h_2(\boldsymbol\lambda)].
 \end{aligned}
 \label{eq:app-column-mixture-second-row-test}
\end{equation}
Grouping by the functions in~\eqref{eq:app-weighted-regret-bounds}
defines the left side of an additional necessary inequality:
\begin{equation}
 E_{1,g}=(1-\gamma)(H_g-r)+\gamma v(U_g-r)
               +\alpha\gamma(1-r-rv)-\alpha r\ge0.
 \label{eq:common-envelope}
\end{equation}

For completeness, the other player's calculation starts by
instantiating the dual inequalities~\eqref{eq:first-row-instance}
of $\sm(y_s^1)$ and~\eqref{eq:app-second-row-instance} of
$\sm(y_1^1)$ at the same $x^{2\prime}\in\Delta_{n_2}$. Eliminating their common
row payoff after multiplying the first rearranged inequality by
$\alpha/\eta>0$ gives
\begin{equation}
 \forall x^{2\prime}\in\Delta_{n_2},\quad
 (1-\alpha)u_2(y_1^1,x^{2\prime})+\alpha u u_2(y_s^1,x^{2\prime})
 \le1-r-\frac{\alpha}{\eta}[r-(1-\eta)].
 \label{eq:app-shared-row-payoff-bound}
\end{equation}
Define the row comparison strategy and its positive denominator by
\begin{equation}
 D_{E,1}=1-\alpha+\alpha u>0,\qquad
 y_E^1=\frac{(1-\alpha)y_1^1+\alpha u y_s^1}{D_{E,1}}\in\Delta_{n_1}.
 \label{eq:app-row-comparison-mixture}
\end{equation}
The best-response payoff $B_2(y_E^1)$ is bounded by the right side
of~\eqref{eq:app-shared-row-payoff-bound} divided by $D_{E,1}$.
Instantiate the dual inequality~\eqref{eq:app-second-column-instance}
of $\sm(y_1^2)$ with $x^{1\prime}=y_E^1$, expand payoffs, and
multiply by $D_{E,1}$:
\begin{equation}
 \begin{aligned}
 rD_{E,1}\le{}&
 \gamma\{1-r-(\alpha/\eta)[r-(1-\eta)]
       -(1-\alpha)[1-T(\boldsymbol\lambda)]-\alpha u[1-s_2(\boldsymbol\lambda)]\}\\
 &+(1-\gamma)[(1-\alpha)S(\boldsymbol\lambda)+\alpha u h_1(\boldsymbol\lambda)].
 \end{aligned}
 \label{eq:app-row-mixture-second-column-test}
\end{equation}
Grouping these terms gives
\begin{equation}
 E_{2,g}=(1-\alpha)(K_g-r)+\alpha u(V_g-r)
               +\alpha\gamma(1-r-ru)-\gamma r\ge0.
 \label{eq:app-second-shared-comparison}
\end{equation}
Thus $\sm(y_s^2)$ and $\sm(y_1^2)$ jointly bound the row
best-response payoff at the mixture $y_E^2$, and $\sm(y_s^1)$
and $\sm(y_1^1)$ jointly bound the column best-response payoff
at the mixture $y_E^1$. Instantiating the dual inequalities of
$\sm(y_1^1)$ and $\sm(y_1^2)$ at these respective mixtures yields
$E_{1,g},E_{2,g}\ge0$, which connect both first-round weights with
both second-round weights. The arithmetic regroupings and fixed
strategy bindings in this appendix are checked by
\nolinkurl{quantified-interface-identities}\footnote{Checker:
\nolinkurl{proof/bound/quantified/replay_quantified_identities.py}.\label{fn:checker-quantified-interface-identities}} and
\nolinkurl{ideal-cap-provenance-ledger}\footnote{Checker:
\nolinkurl{proof/bound/global/replay_ideal_cap_provenance_ledger.py}.\label{fn:checker-ideal-cap-provenance-ledger}}. Their inputs are the specified
building-block instantiations; LP duality and the strategy semantics
are supplied by the mathematical argument above. All checker paths
below are relative to the repository root.

\section{Jensen gaps and finite mixing comparisons}
\label{app:mixing}

We now analyze the $\om$ comparisons~\cite{mixing2025} over all
stationary/dual mixtures, using Jensen's inequality for convex
best-response payoffs~\cite{boyd2004}.
The zero-error and failed-output assumptions are
\eqref{eq:ideal-parameters} and~\eqref{eq:failure-hypothesis};
the resulting comparisons are~\eqref{eq:failure-candidates}.
The positive weights proved in
\eqref{eq:app-positive-stationary-weights} and
\eqref{eq:app-positive-call-weights} will justify every division below.

\paragraph{Joint bounds on endpoint regrets.}
The Jensen-gap identity~\eqref{eq:mixing-gap-losses} minimizes
the weighted endpoint regrets of one comparison strategy.
To bound the column Jensen gap, define its two regret functions on
the comparison simplex:
\begin{equation}
 \begin{aligned}
 \forall x^{2\prime}\in\Delta_{n_2},\quad
 A_c(x^{2\prime})&=f_2(x_s^1,x^{2\prime})\ge0,\\
 B_c(x^{2\prime})&=f_2(y_s^1,x^{2\prime})\ge0.
 \end{aligned}
 \label{eq:app-column-endpoint-losses}
\end{equation}
These are the same regrets introduced in
\eqref{eq:main-column-payoff-losses}, now written as functions of
the comparison strategy:
\[
 A=A_c(x^{2\prime}),\qquad B=B_c(x^{2\prime}).
\]
The subscript $c$ identifies column-player regrets.
The same comparison strategy occurs at the two row-strategy endpoints.
Instantiate~\eqref{eq:properties-stationary} at
$(x_s^1,x^{2\prime})$, and use
\eqref{eq:app-stationary-best-responses} and nonnegative payoffs:
\begin{equation}
 \begin{aligned}
 g&\le\rho[u_1(y_s^1,x^{2\prime})-u_1(x_s^1,x^{2\prime})]
                         +\sigma A_c(x^{2\prime}),\\
 u_1(y_s^1,x^{2\prime})
 &\ge g/\rho-(\sigma/\rho)A_c(x^{2\prime}).
 \end{aligned}
 \label{eq:app-column-loss-stationary-test}
\end{equation}
Instantiate the dual inequality~\eqref{eq:first-row-instance}
of the first-round operation $\sm(y_s^1)$ at the same
$x^{2\prime}$, and use this lower bound and
$u_1(y_1^1,x^{2\prime})\le1$. Since $F_1^1>r$, this gives
\begin{equation}
 r\le\eta[1-g/\rho+(\sigma/\rho)A_c(x^{2\prime})]
                         +(1-\eta)B_c(x^{2\prime}).
 \label{eq:app-column-loss-first-row-test}
\end{equation}
Rearrangement yields a joint lower bound valid for every column strategy.
For the row counterpart, define the functions
\begin{equation}
 \begin{aligned}
 \forall x^{1\prime}\in\Delta_{n_1},\quad
 A_r(x^{1\prime})&=f_1(x^{1\prime},x_s^2)\ge0,\\
 B_r(x^{1\prime})&=f_1(x^{1\prime},y_s^2)\ge0.
 \end{aligned}
 \label{eq:app-row-endpoint-losses}
\end{equation}
Instantiate~\eqref{eq:properties-stationary} at
$(x^{1\prime},x_s^2)$. Then instantiate the dual inequality
\eqref{eq:app-first-column-instance} of the first-round operation
$\sm(y_s^2)$ at the same $x^{1\prime}$, and use the resulting
column-payoff lower bound. Explicitly, the two instantiations give
\begin{equation}
 \begin{aligned}
 g&\le\rho A_r(x^{1\prime})
       +\sigma[u_2(x^{1\prime},y_s^2)-u_2(x^{1\prime},x_s^2)],\\
 u_2(x^{1\prime},y_s^2)
       &\ge g/\sigma-(\rho/\sigma)A_r(x^{1\prime}),\\
 r&\le\xi[1-g/\sigma+(\rho/\sigma)A_r(x^{1\prime})]
                         +(1-\xi)B_r(x^{1\prime}).
 \end{aligned}
 \label{eq:app-row-loss-tests}
\end{equation}
The two conclusions and their right sides are
\begin{equation}
 \begin{aligned}
 \forall x^{2\prime}\in\Delta_{n_2},\quad
 \frac{\eta\sigma}{\rho}A_c(x^{2\prime})
       +(1-\eta)B_c(x^{2\prime})&\ge\ell_c(\boldsymbol\lambda),&
 \ell_c(\boldsymbol\lambda)&=r-\eta(1-g/\rho),\\
 \forall x^{1\prime}\in\Delta_{n_1},\quad
 \frac{\xi\rho}{\sigma}A_r(x^{1\prime})
       +(1-\xi)B_r(x^{1\prime})&\ge\ell_r(\boldsymbol\lambda),&
 \ell_r(\boldsymbol\lambda)&=r-\xi(1-g/\sigma).
 \end{aligned}
 \label{eq:app-joint-endpoint-loss-bounds}
\end{equation}
When a right side is positive, no comparison strategy has zero
regret at both endpoints.

We next convert a joint lower bound into a lower bound on the
weighted regrets in the Jensen-gap identity
\eqref{eq:mixing-gap-losses}. The elementary calculation must work
even when $\ell_c(\boldsymbol\lambda)$ or $\ell_r(\boldsymbol\lambda)$ is negative.

\begin{lemma}[A weighted lower bound on two regrets]
\label{lem:app-weighted-losses}
Let $a,b\ge0$, $c_1,c_2>0$, and $\ell\in\mathbb R$ satisfy
$c_1a+c_2b\ge\ell$. For $t\in[0,1]$, define
$M(t)=\min\{(1-t)/c_1,t/c_2\}$. Then
\[
 (1-t)a+tb\ge M(t)\ell.
\]
\end{lemma}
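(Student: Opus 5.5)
The plan is to follow the calculation already sketched in the proof of Lemma~\ref{lem:main-jensen}. The key fact is that $M(t)$ is simultaneously at most $(1-t)/c_1$ and at most $t/c_2$, and both of these quantities are nonnegative for $t\in[0,1]$. Hence $M(t)\ge0$, $(1-t)\ge M(t)c_1$ and $t\ge M(t)c_2$. We will multiply these coefficient comparisons by the nonnegative numbers $a$ and $b$, respectively, to get
\[
 (1-t)a\ge M(t)c_1a,\qquad tb\ge M(t)c_2b.
\]
Adding the two inequalities gives $(1-t)a+tb\ge M(t)(c_1a+c_2b)$.

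For the second step, we multiply the hypothesis $c_1a+c_2b\ge\ell$ by the nonnegative scalar $M(t)$, which preserves its direction. This yields $M(t)(c_1a+c_2b)\ge M(t)\ell$, and chaining it with the first step completes the proof. The sign of $\ell$ plays no role: the argument uses only $M(t)\ge0$, never $\ell\ge0$. When $\ell$ is negative the conclusion is weaker, but it remains valid, which is the case the paper flags as necessary.

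There is no real obstacle. The only point requiring care is to confirm $M(t)\ge0$, which follows from $t\in[0,1]$ and $c_1,c_2>0$. Without it, the multiplication in the second step could reverse the inequality. The endpoints $t=0$ and $t=1$ give $M=0$, and the conclusion then reduces to $a\ge0$ or $b\ge0$ (more precisely, to the trivial bound with right side $0$). These cases are covered by the same argument.
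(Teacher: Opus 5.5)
Your proof is correct and is essentially the paper's argument: the paper writes $(1-t)a+tb-M(t)\ell$ as the sum $[1-t-c_1M(t)]a+[t-c_2M(t)]b+M(t)[c_1a+c_2b-\ell]$ of three nonnegative terms, which is your two-step chain of inequalities packaged as a single identity. As you note, the argument needs only $M(t)\ge0$ and never $\ell\ge0$.
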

\begin{proof}
Both $1-t-c_1M(t)$ and $t-c_2M(t)$ are nonnegative. Therefore
\begin{equation}
 \begin{aligned}
 (1-t)a+tb-M(t)\ell
  &=[1-t-c_1M(t)]a+[t-c_2M(t)]b\\
  &\quad+M(t)[c_1a+c_2b-\ell]\ge0.
 \end{aligned}
 \label{eq:app-weighted-loss-identity}
\end{equation}
The calculation uses $M(t)\ge0$ and does not require $\ell\ge0$.
\end{proof}

For the column regrets, apply the lemma with
$a=A_c(x^{2\prime})$, $b=B_c(x^{2\prime})$,
$c_1=\eta\sigma/\rho$, $c_2=1-\eta$, $\ell=\ell_c(\boldsymbol\lambda)$,
and $t=p$. For the row regrets, use
$a=A_r(x^{1\prime})$, $b=B_r(x^{1\prime})$,
$c_1=\xi\rho/\sigma$, $c_2=1-\xi$, $\ell=\ell_r(\boldsymbol\lambda)$,
and $t=q$. Minimize over the comparison strategy using
\eqref{eq:mixing-gap-losses}. This defines explicit Jensen-gap lower bounds:
\begin{equation}
 \begin{aligned}
 \forall p\in[0,1],\quad J_2(p)&\ge j_c(\boldsymbol\lambda,p),&
 j_c(\boldsymbol\lambda,p)&=\ell_c(\boldsymbol\lambda)
       \min\left\{\frac{(1-p)\rho}{\eta\sigma},\frac p{1-\eta}\right\},\\
 \forall q\in[0,1],\quad J_1(q)&\ge j_r(\boldsymbol\lambda,q),&
 j_r(\boldsymbol\lambda,q)&=\ell_r(\boldsymbol\lambda)
       \min\left\{\frac{(1-q)\sigma}{\xi\rho},\frac q{1-\xi}\right\}.
 \end{aligned}
 \label{eq:gap-lower-bounds}
\end{equation}
The functions $j_c,j_r$ are lower bounds on the actual Jensen gaps $J_2,J_1$.
A negative lower bound is still valid. We use these signed formulas
throughout the arithmetic reduction. The unit
\nolinkurl{full-BR-rectangle-bridge}\footnote{Checker:
\nolinkurl{proof/bound/quantified/replay_fullbr_rectangle_bridge.py}.\label{fn:checker-full-BR-rectangle-bridge}} reconstructs the joint-regret
combinations and the signed weighted-regret identities used to obtain
\eqref{eq:gap-lower-bounds}.

\paragraph{Upper bounds on both regrets throughout the rectangle.}
To turn the universal mixing comparison
\eqref{eq:failure-candidates} into finitely many conditions, we need
upper bounds that can be evaluated along specified mixing segments.
The stationary best-response identities
\eqref{eq:app-stationary-best-responses} and the payoff lower bounds
\eqref{eq:app-stationary-payoff-lower-bounds} give the endpoint bounds
\begin{equation}
 \begin{array}{c|cccc}
 (p,q)&(0,0)&(1,0)&(0,1)&(1,1)\\ \hline
 \text{row-regret upper bound}&g&0&1&1-g/\rho\\
 \text{column-regret upper bound}&g&1&0&1-g/\sigma
 \end{array}
 \label{eq:app-rectangle-endpoint-bounds}
\end{equation}
For example, the row regret at $(1,0)$ is zero because $y_s^1$
is a best response to $x_s^2$. At $(1,1)$ it is at most
$1-u_1(y_s^1,y_s^2)\le1-g/\rho$.
Insert~\eqref{eq:app-rectangle-endpoint-bounds} into the exact
regret interpolation~\eqref{eq:mixing-regrets}, and subtract
the lower bounds~\eqref{eq:gap-lower-bounds}. The resulting functions are
\begin{equation}
 \begin{aligned}
 \forall p,q\in[0,1],\quad f_1(\boldsymbol x(p,q))&\le R_g(p,q),&
 R_g(p,q)&=(1-p)(1-q)g+q(1-pg/\rho)-j_r(\boldsymbol\lambda,q),\\
 \forall p,q\in[0,1],\quad f_2(\boldsymbol x(p,q))&\le C_g(p,q),&
 C_g(p,q)&=(1-p)(1-q)g+p(1-qg/\sigma)-j_c(\boldsymbol\lambda,p).
 \end{aligned}
 \label{eq:rectangle-upper-bounds}
\end{equation}
Thus $R_g,C_g$ denote upper bounds on the row and column regrets;
they are distinct from the payoff matrices $R,C$.

Each minimum in~\eqref{eq:gap-lower-bounds} changes formula where
its two ratios are equal. Define the positive denominators and
the corresponding mixing coefficients by
\begin{equation}
 \begin{aligned}
 D_c&=\rho(1-\eta)+\sigma\eta>0,&
 p_\dagger&=\frac{\rho(1-\eta)}{D_c},\\
 D_r&=\sigma(1-\xi)+\rho\xi>0,&
 q_\dagger&=\frac{\sigma(1-\xi)}{D_r}.
 \end{aligned}
 \label{eq:app-mixing-cuts}
\end{equation}
Both coefficients lie strictly between zero and one.
They are determined by the ratios even when $\ell_c(\boldsymbol\lambda)$ or
$\ell_r(\boldsymbol\lambda)$ is zero. The two cuts partition $[0,1]^2$ into
four rectangles. On each rectangle $R_g,C_g$ are bilinear,
so they are affine along a segment with one coordinate fixed.
There are twelve distinct rectangle edges: each of three
vertical levels and three horizontal levels is split into two segments.

We will use the four segments joining the central cut point to
the boundary. Name their endpoints now for later reference:
\begin{equation}
 \begin{aligned}
 e_0&=(p_\dagger,q_\dagger),&
 e_1&=(p_\dagger,0),&
 e_2&=(p_\dagger,1),\\
 e_3&=(0,q_\dagger),&
 e_4&=(1,q_\dagger),&
 \boldsymbol c_j(\boldsymbol\lambda)&=(R_g(e_j),C_g(e_j))
       \quad(j\in\{0,\ldots,4\}).
 \end{aligned}
 \label{eq:app-mixing-segment-points}
\end{equation}
Here $e_j$ denotes a pair of mixing coefficients. Evaluation at $e_j$ means evaluation at its two
coordinates. The pair $\boldsymbol c_j(\boldsymbol\lambda)$ consists of regret
upper bounds. Omitting the other eight edge conditions enlarges
the necessary system that the proof must exclude.

\paragraph{The finite condition on one segment.}
We eliminate the continuous segment parameter by identifying the
possible minimizers of the maximum of two affine functions.
For this calculation, let
$\boldsymbol a=(R_0,C_0),\boldsymbol b=(R_1,C_1)\in\mathbb R^2$
be the two endpoint upper-bound pairs. Parameterize the segment
by $t\in[0,1]$ and define
\begin{equation}
 \begin{aligned}
 L_1(t)&=R_0+(R_1-R_0)t,&
 L_2(t)&=C_0+(C_1-C_0)t,\\
 D&=(R_1-R_0)-(C_1-C_0),&
 N&=C_0-R_0,&
 W&=R_1C_0-R_0C_1.
 \end{aligned}
 \label{eq:app-affine-segment-data}
\end{equation}
The symbols $R_0,R_1,C_0,C_1,D,N,W$ are local to this affine
calculation. If $D\ne0$, the lines intersect at $t=N/D$ with
common value $W/D$. No intersection condition is needed outside $[0,1]$.
When $D=0$, endpoint conditions suffice.
The maximum is affine on either side of an intersection, so its
minimum occurs at an endpoint or an intersection in the segment.
Define the segment predicate by
\begin{equation}
 \mathcal T_r(\boldsymbol a,\boldsymbol b)
 \quad\Longleftrightarrow\quad
 \begin{gathered}
 \max\{R_0,C_0\}\ge r,\qquad \max\{R_1,C_1\}\ge r,\\
 \text{if }D\ne0\text{ and }0\le N/D\le1,\text{ then }W/D\ge r.
 \end{gathered}
 \label{eq:segment-test}
\end{equation}
The ratios in the last condition are evaluated only when $D\ne0$.
Equivalently, the predicate says that
$\min_{t\in[0,1]}\max\{L_1(t),L_2(t)\}\ge r$.

Every actual profile on one of the twelve edges has maximum regret
above $r$ by~\eqref{eq:failure-candidates}. The upper bounds
\eqref{eq:rectangle-upper-bounds} therefore satisfy the strict
affine minimax inequality there; continuity on the compact
segment justifies taking its minimum. Weakening it to
\eqref{eq:segment-test} gives a necessary condition.
When clearing denominators, we must check their signs.
For example, if $R_0<r<R_1$ and $C_1<r<C_0$, then $D>0$,
the intersection lies in the segment, and its non-strict condition is
\begin{equation}
 (C_0-r)(R_1-r)-(R_0-r)(C_1-r)=W-rD\ge0.
 \label{eq:app-positive-denominator-intersection-test}
\end{equation}
Thus the inequality is equivalent to $W/D\ge r$ in this case.
A symbolic routine constructs endpoint conditions, intersection
conditions, and sign-correct polynomial inequalities for all
twelve edges; the independent unit
\nolinkurl{parametric-OM-linearizer}\footnote{Checker:
\nolinkurl{proof/bound/quantified/replay_parametric_optimalmix.py}.\label{fn:checker-parametric-OM-linearizer}} reconstructs 131 identities
for these constructions.
The four segment conditions~\eqref{eq:main-segment-conditions}
are expressed by this finite predicate as
\begin{equation}
 \forall j\in[4],\qquad
 \mathcal T_r(\boldsymbol c_0(\boldsymbol\lambda),\boldsymbol c_j(\boldsymbol\lambda)).
 \label{eq:app-four-segment-comparisons}
\end{equation}
This calculation eliminates one-dimensional affine upper-bound
problems. It supplies necessary conditions for a failed execution
on the selected segments.

\section{Forgetting and the finite necessary system}
\label{app:scalar}

We now eliminate the remaining game-dependent quantities from the
comparisons proved above. Throughout, the zero-error assumption is
\eqref{eq:ideal-parameters}, and the failed-output assumption is
\eqref{eq:failure-hypothesis}. The purpose is to obtain necessary
conditions involving only $r$ and five dual weights.

\paragraph{Consistent replacement of payoff and regret terms.}
An instantiated building-block clause still concerns strategies
in the fixed game. Forgetting replaces each distinct payoff,
regret, or best-response payoff term by a real variable, using the
same variable at every occurrence of the same term,
as in the LegoNE abstraction~\cite{computeraided2026,legone2026}.
For this example only, introduce the scalar names
\begin{equation}
 \widehat t=F_1^1,\qquad
 \widehat a_1=u_1(y_1^1,x_s^2),\qquad
 \widehat a_0=u_1(y_s^1,x_s^2),\qquad
 \widehat b=f_2(y_s^1,x_s^2).
 \label{eq:app-scalar-renaming}
\end{equation}
These scalar names represent the displayed payoff and regret values.
The instantiation of the dual inequality
\eqref{eq:first-row-instance} of $\sm(y_s^1)$ with
$x^{2\prime}=x_s^2$, the payoff range, and
\eqref{eq:app-stationary-best-responses} become
\begin{equation}
 \widehat t\le\eta(\widehat a_1-\widehat a_0)
                      +(1-\eta)\widehat b,\qquad
 0\le\widehat a_0,\widehat a_1,\widehat b\le1,\qquad
 \widehat a_1\le\widehat a_0.
 \label{eq:app-scalar-first-row-test}
\end{equation}
The additional instantiation at a column best response to $y_s^1$,
recorded in~\eqref{eq:app-first-row-best-response-test}, and the
failure~\eqref{eq:failure-candidates} give
$r<\widehat t\le\min\{\eta,1-\eta\}$.
Likewise, both occurrences of $u_2(x^{1\prime},y_1^2)$ in
\eqref{eq:app-shared-column-tests} must be replaced by the same
scalar for their cancellation to remain valid.
Relations proved from the regret definition, payoff linearity,
LP duality, and complementary slackness~\cite{boyd2004}
remain conditions on
the replacement variables.

We simplify further by eliminating many payoff variables through
their lower bounds. The resulting regret upper bounds depend
only on weights, $r$, and $g$. Every failed execution supplies
values satisfying the resulting conditions; a solution of the
scalar conditions need not be realizable by a game.
Deleting a constraint, increasing an upper bound, or changing
a scalar point must therefore be justified in the necessary
direction. The unit \nolinkurl{ideal-cap-provenance-ledger}\textsuperscript{\ref{fn:checker-ideal-cap-provenance-ledger}}
checks the strategy types and fixed dual solutions, then
reconstructs the nonnegative combinations, positive divisions,
and payoff mixing identities used here.

\paragraph{Eliminating the stationary regret.}
The remaining game-dependent scalar is the stationary regret $g$.
Setting it to $r$ requires proof: the term $(1-p)(1-q)g$ in
\eqref{eq:rectangle-upper-bounds} increases with $g$, whereas
other terms decrease. We first identify the admissible interval
and then prove preservation of the selected inequalities while
lowering $g$. The sign certificates cover $3/10\le r\le1/3$.

Instantiate the dual inequality~\eqref{eq:first-row-instance}
of the first-round operation $\sm(y_s^1)$ with $x^{2\prime}=y_s^2$
and the dual inequality~\eqref{eq:app-first-column-instance}
of the first-round operation $\sm(y_s^2)$ with $x^{1\prime}=y_s^1$.
These instantiations use the stationary dual profile $(y_s^1,y_s^2)$.
Use~\eqref{eq:app-stationary-payoff-lower-bounds}, bound the other
deviation payoff by one, and use~\eqref{eq:failure-candidates}.
The resulting two inequalities are
\begin{equation}
 \begin{aligned}
 r&\le\eta(1-g/\rho)+(1-\eta)(1-g/\sigma),\\
 r&\le\xi(1-g/\sigma)+(1-\xi)(1-g/\rho).
 \end{aligned}
 \label{eq:app-first-round-g-comparisons}
\end{equation}
With the positive $D_c,D_r$ from~\eqref{eq:app-mixing-cuts},
solving these inequalities bounds $g$ from above.
Weakening the strict weight and regret inequalities from
\eqref{eq:app-positive-stationary-weights},
\eqref{eq:app-positive-call-weights}, and
\eqref{eq:failure-candidates} gives the domain used in the projection:
\begin{equation}
 \begin{gathered}
 3/10\le r\le1/3,\qquad
 r\le\rho,\eta,\xi,\alpha,\gamma\le1-r,\qquad \sigma=1-\rho,\\
 r\le g\le
 \min\left\{\frac{\rho\sigma(1-r)}{D_c},
             \frac{\rho\sigma(1-r)}{D_r}\right\}.
 \end{gathered}
 \label{eq:g-projection-domain}
\end{equation}

The projection will preserve the nonnegative payoff upper bounds,
the four comparisons from the second-round operations $\sm(y_1^1)$
and $\sm(y_1^2)$, the two comparisons obtained by instantiating
their dual inequalities at mixtures, and the four selected segment
conditions.
For a general scalar point, record these premises as
\begin{equation}
 \begin{gathered}
 s_1(\boldsymbol\lambda),s_2(\boldsymbol\lambda),h_1(\boldsymbol\lambda),h_2(\boldsymbol\lambda),S(\boldsymbol\lambda),T(\boldsymbol\lambda)\ge0,\\
 U_g,V_g,H_g,K_g\ge r,\qquad E_{1,g},E_{2,g}\ge0,\\
 \forall j\in[4],\qquad
 \mathcal T_r(\boldsymbol c_0(\boldsymbol\lambda),\boldsymbol c_j(\boldsymbol\lambda)).
 \end{gathered}
 \label{eq:app-g-projection-premises}
\end{equation}
The functions and coefficient points are defined in
\eqref{eq:payoff-bound-s}, \eqref{eq:payoff-bound-h},
\eqref{eq:payoff-bound-st},
\eqref{eq:app-weighted-regret-bounds},
\eqref{eq:common-envelope},
\eqref{eq:app-second-shared-comparison}, and
\eqref{eq:app-mixing-segment-points}. The predicate is
\eqref{eq:segment-test}. A failed execution satisfies these premises
by the payoff interpretations, the instantiations of the dual
inequalities of $\sm(y_1^1)$ and $\sm(y_1^2)$, and
\eqref{eq:app-four-segment-comparisons}.

\begin{lemma}[Lowering the stationary regret]
\label{lem:app-g-projection}
Let a scalar point satisfy the domain~\eqref{eq:g-projection-domain}
and the inequalities~\eqref{eq:app-g-projection-premises}.
Hold $r,\rho,\eta,\xi,\alpha,\gamma$ fixed and evaluate every
function in~\eqref{eq:app-g-projection-premises} with $g=r$.
All the inequalities in~\eqref{eq:app-g-projection-premises}
continue to hold.
\end{lemma}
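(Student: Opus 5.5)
The plan is to treat $g$ as a parameter moving monotonically from its given value down to $r$ inside the domain~\eqref{eq:g-projection-domain}, with $r,\rho,\eta,\xi,\alpha,\gamma$ frozen. We then show that each premise in~\eqref{eq:app-g-projection-premises} is preserved along this move. We split the premises into two groups. The first group is the payoff-bound and weighted/shared comparisons, which will be handled by a monotonicity argument. The second group is the four segment predicates, which need a case analysis.

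\textbf{First group.} From~\eqref{eq:payoff-bound-s} we read off
\[
 \partial_g s_1=-1/\rho-u/\sigma<0,\qquad \partial_g s_2=-1/\sigma-v/\rho<0.
\]
In~\eqref{eq:payoff-bound-h}, the quantities $d_1,d_2$ do not depend on $g$. Hence $h_1,h_2$ are minima of $1$ and a function decreasing in $g$, so they are nonincreasing in $g$. By~\eqref{eq:payoff-bound-st}, $S$ and $T$ are combinations of $h_i,s_i$ with the nonnegative coefficients $1,u$ and $1,v$, so they are also nonincreasing in $g$.

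The expressions $U_g,V_g,H_g,K_g$ are convex combinations of these six functions with $g$-free weights. For $E_{1,g}$ and $E_{2,g}$, the $g$-dependence enters only through $(1-\gamma)H_g+\gamma vU_g$ and the analogous combination, again with nonnegative coefficients; the remaining terms are free of $g$. Therefore, lowering $g$ to $r$ only increases every left side, and all these inequalities persist. In particular, nonnegativity of the six bounds persists.

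\textbf{Second group: the segment predicates.} Here we use~\eqref{eq:rectangle-upper-bounds} evaluated at the points $e_j$ of~\eqref{eq:app-mixing-segment-points}. The points $p_\dagger,q_\dagger$ are independent of $g$. The Jensen terms $j_r,j_c$ are increasing in $g$, since $\ell_r,\ell_c$ increase in $g$, so subtracting them makes $R_g,C_g$ decreasing in $g$ from that source. The only obstruction is the term $(1-p)(1-q)g$.

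For the segments to $e_2=(p_\dagger,1)$ and $e_4=(1,q_\dagger)$, the obstruction vanishes at the outer endpoint. Then I would check endpoint by endpoint, via explicit partial derivatives, that the relevant bound increases as $g$ decreases. At $e_0$ this uses the first-round comparisons~\eqref{eq:app-first-round-g-comparisons}, which bound $g$ above and make the positive coefficient of $g$ dominated. At the outer endpoint the argument is direct. Along each segment the predicate~\eqref{eq:segment-test} then reduces to the endpoint conditions plus the intersection condition $W-rD\ge0$ in the sign-correct form~\eqref{eq:app-positive-denominator-intersection-test}.

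For the segments to $e_1=(p_\dagger,0)$ and $e_3=(0,q_\dagger)$, I would first derive $C_g(e_1)\ge R_g(e_1)$ from~\eqref{eq:app-first-round-g-comparisons}. The endpoint condition at $e_1$ is therefore governed by the column bound, which increases as $g$ decreases. For the intersection condition I would form $\mathcal D(g)$ as in~\eqref{eq:main-g-intersection-condition} and show $\partial_g\mathcal D<0$ on the domain.

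\textbf{Main obstacle.} I expect this derivative sign to be the hard step. It is a genuinely polynomial inequality in $(r,g,\rho,\eta,\xi)$, and I would attack it by clearing the positive denominators $D_c$, $D_r$, $\rho$, $\sigma$, $\eta$, $\xi$. I would then certify the resulting polynomial sign on the box~\eqref{eq:g-projection-domain} by Bernstein expansion with subdivision~\cite{garloff1986}, and treat the upper $g$-limit by a substitution that maps it to a box.

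\textbf{Continuity argument.} Finally, as $g$ decreases, an intersection point $N/D$ may enter or leave $[0,1]$, or $D$ may change sign. Such a transition can only occur where the intersection coincides with an endpoint or the lines are parallel. In each of these cases the predicate is implied by endpoint conditions, which are already preserved. Hence a continuity argument in $g$ shows that the predicate holds throughout, down to $g=r$. The segment to $e_3$ follows by exchanging the players' roles.
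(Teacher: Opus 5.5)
Your architecture matches the paper's: monotonicity in $g$ for the six payoff bounds and the weighted and shared comparisons; the ordering $C_g(e_1)\ge R_g(e_1)$ from the first-round comparisons; a sign for $\partial_g\mathcal D$; continuity for an intersection entering or leaving a segment; and the player interchange for $e_3$. Two steps are not justified as written, though both are easy to repair.

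\emph{The central point $e_0$.} You attribute the monotonicity there to the first-round upper bounds on $g$. That cannot be the reason. With the other parameters frozen, $R_g(e_0)$ and $C_g(e_0)$ are affine in $g$, so their slopes do not depend on $g$, and no bound on $g$ can decide their sign. What works is the cut equality: $1-q_\dagger=\rho\xi/D_r$, and the Jensen-gap term at the cut has $g$-slope $\xi/D_r$. Hence $\partial_gR_g(e_0)=\frac{\xi}{D_r}[\rho(1-p_\dagger)-1]-p_\dagger q_\dagger/\rho<0$, and symmetrically $\partial_gC_g(e_0)=\frac{\eta}{D_c}[\sigma(1-q_\dagger)-1]-p_\dagger q_\dagger/\sigma<0$, with no condition on $g$.

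\emph{The segments toward $e_2$ and $e_4$.} This is the more serious omission. You reduce the predicate to endpoint conditions plus the intersection condition $W-rD\ge0$, but you never show that the intersection condition survives. Checking only ``the relevant'' bound at each endpoint does not control the crossing value. Your continuity paragraph covers only an intersection entering or leaving the segment. The paper avoids intersections here altogether. Both bounds have negative $g$-slope at both endpoints: at $e_2$ the slopes are $-p_\dagger/\rho$ and $-p_\dagger/\sigma-\eta p_\dagger/[\rho(1-\eta)]$; at $e_4$ they are $-q_\dagger/\rho-\xi q_\dagger/[\sigma(1-\xi)]$ and $-q_\dagger/\sigma$. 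Both bounds are affine along the segment, so their slopes at interior points are convex combinations of these. Hence both bounds rise pointwise as $g$ decreases, and $\min_t\max$ rises with them.

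\emph{The sign of $\partial_g\mathcal D$.} You propose a five-variable Bernstein certificate in $(r,g,\rho,\eta,\xi)$. The paper's route is lighter. It writes $\partial_g\mathcal D=-\mathcal Q(g)/(\sigma D_c^2(1-\xi))$, where $\mathcal Q$ is affine in $g$ with slope $-2\eta Z_Q<0$. So it suffices to certify $\mathcal Q>0$ at the single value $G_*=\rho\sigma(1-r)/\max\{D_c,D_r\}$, which eliminates $g$. The maximum is resolved by splitting at $\rho=1/2$ and $\eta+\xi=1$, using $D_c-D_r=(\rho-\sigma)(1-\eta-\xi)$.

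If you keep $g$ as a coordinate, note that the admissible $g$-interval is nonempty only where $\mathcal B_*=\rho\sigma(1-r)-rD_*\ge0$. This fails on parts of the $(r,\rho,\eta,\xi)$ box, e.g.\ at $\rho=r$, $\eta=1-r$. Mapping $[r,G_*]$ onto $[0,1]$ there would ask the certificate to cover infeasible points. That constraint therefore has to be built in, as the paper does with the multiplier $m_Q\mathcal B_*$ on $\rho\le1/2$.
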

\begin{proof}
We first treat the payoff upper bounds. Direct differentiation gives
\begin{equation}
 \frac{\partial s_1(\boldsymbol\lambda)}{\partial g}=-\frac1\rho-\frac u\sigma<0,\qquad
 \frac{\partial s_2(\boldsymbol\lambda)}{\partial g}=-\frac1\sigma-\frac v\rho<0.
 \label{eq:app-g-payoff-bound-derivatives}
\end{equation}
Each $h_i(\boldsymbol\lambda)$ is the minimum of one and an affine function decreasing
in $g$, by~\eqref{eq:payoff-bound-h}. Thus lowering $g$ cannot decrease
any $s_i(\boldsymbol\lambda)$ or $h_i(\boldsymbol\lambda)$.
The formulas~\eqref{eq:payoff-bound-st}, with $u,v>0$, imply the
same for $S(\boldsymbol\lambda),T(\boldsymbol\lambda)$. The nonnegative coefficients in
\eqref{eq:app-weighted-regret-bounds} then give the same conclusion
for $U_g,V_g,H_g,K_g$.
In~\eqref{eq:common-envelope} and
\eqref{eq:app-second-shared-comparison}, only these weighted bounds
depend on $g$, and their coefficients are nonnegative.
Therefore $E_{1,g},E_{2,g}$ also cannot decrease.
This proves preservation of the first two lines of
\eqref{eq:app-g-projection-premises}.

We next treat the four segment predicates, beginning with their
shared central endpoint $e_0=(p_\dagger,q_\dagger)$.
Differentiate~\eqref{eq:rectangle-upper-bounds}, using the equality
of the two ratios at each cut in~\eqref{eq:app-mixing-cuts}:
\begin{equation}
 \begin{aligned}
 \partial_gR_g(p_\dagger,q_\dagger)
 &=\frac{\xi}{D_r}[\rho(1-p_\dagger)-1]
                       -\frac{p_\dagger q_\dagger}{\rho}<0,\\
 \partial_gC_g(p_\dagger,q_\dagger)
 &=\frac{\eta}{D_c}[\sigma(1-q_\dagger)-1]
                       -\frac{p_\dagger q_\dagger}{\sigma}<0.
 \end{aligned}
 \label{eq:app-central-g-derivatives}
\end{equation}
The signs follow from $0<\rho,\sigma,p_\dagger,q_\dagger<1$.
At the two endpoints with a coefficient equal to one, the Jensen gap
associated with that coefficient is zero and direct differentiation gives
\begin{equation}
 \begin{aligned}
 \partial_gR_g(p_\dagger,1)&=-p_\dagger/\rho<0,\\
 \partial_gC_g(p_\dagger,1)
 &=-p_\dagger/\sigma-\eta p_\dagger/[\rho(1-\eta)]<0,\\
 \partial_gR_g(1,q_\dagger)
 &=-q_\dagger/\rho-\xi q_\dagger/[\sigma(1-\xi)]<0,\\
 \partial_gC_g(1,q_\dagger)&=-q_\dagger/\sigma<0.
 \end{aligned}
 \label{eq:app-coefficient-one-g-derivatives}
\end{equation}
Both upper-bound functions are affine along each selected segment.
Their derivatives at an interior point are therefore convex
combinations of their endpoint derivatives.
Equations~\eqref{eq:app-central-g-derivatives} and
\eqref{eq:app-coefficient-one-g-derivatives} prove that lowering
$g$ cannot decrease either bound on the segments from $e_0$ to
$e_2$ and $e_4$. Their predicates are preserved.

For the segment from $e_1=(p_\dagger,0)$ to $e_0$, we must also
consider an intersection of the two affine bounds. Recall
$\mathcal D(g)$ from~\eqref{eq:main-g-intersection-condition}, which
expresses the intersection condition:
\begin{equation}
 \begin{aligned}
 \mathcal D(g)=\frac1{q_\dagger}\bigl\{&
 [C_g(p_\dagger,0)-r][R_g(p_\dagger,q_\dagger)-r]\\
 &-[R_g(p_\dagger,0)-r][C_g(p_\dagger,q_\dagger)-r]\bigr\}.
 \end{aligned}
 \label{eq:app-zero-endpoint-determinant}
\end{equation}
Division by $q_\dagger>0$ preserves its sign.
The endpoint ordering follows from
\eqref{eq:app-first-round-g-comparisons}, which are equivalent to
the upper bounds on $g$ in~\eqref{eq:g-projection-domain}.
To see the ordering explicitly, define two temporary functions
for the affine row formula on this segment:
\begin{equation}
 B_{\mathrm e}(g)=(1-p_\dagger)g,\qquad
 U_{\mathrm e}(g)=1-p_\dagger g/\rho-\ell_r(\boldsymbol\lambda)/(1-\xi).
 \label{eq:app-lower-segment-affine-values}
\end{equation}
$B_{\mathrm e}(g)$ is the row upper bound at $e_1$;
$U_{\mathrm e}(g)$ is its affine extrapolation to coefficient one.
It is distinct from the weighted bound $U_g$ in
\eqref{eq:app-weighted-regret-bounds}.
For $0\le q\le q_\dagger$, the formula is
\begin{equation}
 R_g(p_\dagger,q)
       =(1-q)B_{\mathrm e}(g)+qU_{\mathrm e}(g).
 \label{eq:app-lower-segment-row-formula}
\end{equation}
Direct subtraction and the first-round comparisons give
\begin{equation}
 \begin{aligned}
 \relax[U_{\mathrm e}(g)-B_{\mathrm e}(g)]
        -(1-p_\dagger)(g/\rho-g)
 &=\frac{1-r-g((1-\xi)/\rho+\xi/\sigma)}{1-\xi}\ge0,\\
 C_g(p_\dagger,0)-R_g(p_\dagger,0)
 &=\frac{\rho(1-r)-\eta g}{D_c}\ge0.
 \end{aligned}
 \label{eq:app-lower-segment-endpoint-order}
\end{equation}
For the second sign, the first upper bound on $g$ gives
$\eta g\le\eta\rho\sigma(1-r)/D_c\le\rho(1-r)$,
because $\eta\sigma\le D_c$.
Since $g>0$, $\rho<1$, and $p_\dagger<1$,
the first line implies $U_{\mathrm e}(g)-B_{\mathrm e}(g)>0$.
The row bound increases with $q$, whereas the column bound decreases:
\begin{equation}
 \partial_qC_g(p_\dagger,q)
        =-(1-p_\dagger)g-p_\dagger g/\sigma<0,\qquad
 \partial_gC_g(p_\dagger,0)=-\rho\eta/D_c<0.
 \label{eq:app-lower-segment-column-derivatives}
\end{equation}
At $e_1$ the column bound is the larger one, so its maximum
cannot decrease when $g$ is lowered. The central endpoint is
protected by~\eqref{eq:app-central-g-derivatives}.
If the increasing row line and decreasing column line intersect
inside the segment, their common value is at least $r$ exactly
when $\mathcal D(g)\ge0$, by~\eqref{eq:segment-test} and
\eqref{eq:app-positive-denominator-intersection-test}.
We will prove $\partial_g\mathcal D(g)<0$ throughout the admissible
interval. This preserves the intersection condition as $g$ decreases.
An intersection can enter or leave the segment only through an
endpoint, whose condition is already preserved. Continuity consequently
preserves the full segment predicate.

The derivative sign reduces to a polynomial sign.
Define the function $\mathcal Q(g)$ and the positive coefficient $Z_Q$ by
\begin{equation}
 \begin{aligned}
 \partial_g\mathcal D(g)
    &=-\frac{\mathcal Q(g)}{\sigma D_c^2(1-\xi)},\\
 Z_Q&=\rho\sigma\eta\xi+\sigma^2\eta(1-\xi)
       +\rho^2(1-\eta)\xi+2\rho\sigma(1-\eta)(1-\xi)>0.
 \end{aligned}
 \label{eq:app-determinant-derivative-numerator}
\end{equation}
The numerator can be written explicitly as
\begin{equation}
 \begin{aligned}
 \mathcal Q(g)={}&
 [\sigma(1-\xi)(1-\rho\eta)+D_c\xi]
                   [\rho(1-r)-rD_c]\\
 &+\sigma D_c\rho\eta(1-r)
       +rD_c(1-\xi)[\sigma^2\eta+\rho(1-\eta)]
       -2\eta Z_Q g .
 \end{aligned}
 \label{eq:app-determinant-numerator-polynomial}
\end{equation}
The unit \nolinkurl{g-identities}\footnote{Checker:
\nolinkurl{proof/analytic_scalar_bound/cross_projection_symbolic_checks.py}.\label{fn:checker-g-identities}} reconstructs this derivative
identity and the endpoint and intersection identities above.
This formula follows by expanding
\eqref{eq:app-zero-endpoint-determinant} using
\eqref{eq:app-lower-segment-row-formula} and
\eqref{eq:rectangle-upper-bounds}.
It gives $\partial_g\mathcal Q(g)=-2\eta Z_Q<0$.
Thus it suffices to prove positivity at the largest admissible $g$.
Define that endpoint, its feasibility margin, and a cleared numerator by
\begin{equation}
 \begin{aligned}
 D_*&=\max\{D_c,D_r\},&
 G_*&=\frac{\rho\sigma(1-r)}{D_*},\\
 \mathcal B_*&=\rho\sigma(1-r)-rD_*\ge0,&
 \mathcal H_*&=D_*\mathcal Q(G_*).
 \end{aligned}
 \label{eq:app-g-upper-endpoint}
\end{equation}
The sign of $\mathcal B_*$ is exactly $G_*\ge r$, which follows
from~\eqref{eq:g-projection-domain}.

Before invoking the certificates, specify the four parameter
regions and their unit-box coordinates. Split the weight domain
at $\rho=1/2$ and $\eta+\xi=1$. The identity
\begin{equation}
 D_c-D_r=(\rho-\sigma)(1-\eta-\xi)
 \label{eq:app-denominator-order}
\end{equation}
determines which denominator equals $D_*$ in each region.
Let $t_r,t_\rho,t_\eta,t_\xi\in[0,1]$, and substitute
\begin{equation}
 \begin{aligned}
 r&=3/10+t_r/30,&
 \eta&=r+(1-2r)t_\eta,\\
 \rho&=
 \begin{cases}
 r+(1/2-r)t_\rho,&\rho\le1/2,\\
 1/2+(1/2-r)t_\rho,&\rho\ge1/2,
 \end{cases}
 &
 \xi&=
 \begin{cases}
 r+(1-\eta-r)t_\xi,&\eta+\xi\le1,\\
 1-\eta+(\eta-r)t_\xi,&\eta+\xi\ge1.
 \end{cases}
 \end{aligned}
 \label{eq:app-g-certificate-coordinates}
\end{equation}
These substitutions cover all four regions, including their
boundaries. After clearing positive denominators, the checker
reconstructs the resulting polynomials and their rational
Bernstein coefficients~\cite{garloff1986}. For degree at most $d$,
the Bernstein basis on $[0,1]$ satisfies
\begin{equation}
 b_i^{(d)}(t)=\binom di t^i(1-t)^{d-i},\quad i=0,\ldots,d,\quad
 b_i^{(d)}(t)\ge0,\quad\sum_{i=0}^d b_i^{(d)}(t)=1.
 \label{eq:bernstein-basis}
\end{equation}
On a rational box $\prod_j[a_j,b_j]$, substitute
$x_j=a_j+(b_j-a_j)t_j$ and expand in products of these basis
functions. The polynomial's value is a convex combination of its
coefficients, so their signs certify its sign throughout the box.
The checker uses Python~3 rational arithmetic with
\nolinkurl{fractions.Fraction}~\cite{python3} to verify these signs
exactly.

On the two regions with $\rho\le1/2$, choose the positive rational
certificate multiplier $m_Q=12609/50000$. Subtracting
$m_Q\mathcal B_*$ produces a polynomial whose strict positivity
can be checked through its Bernstein coefficients; the known
$\mathcal B_*\ge0$ then recovers positivity of $\mathcal H_*$.
On the other two regions the certificate proves positivity
of $\mathcal H_*$ directly. The certified inequalities are
\begin{equation}
 \begin{cases}
 \mathcal H_*-m_Q\mathcal B_*>0,\quad m_Q=12609/50000,
                         &\rho\le1/2,\\
 \mathcal H_*>0,          &\rho\ge1/2.
 \end{cases}
 \label{eq:app-g-numerator-sign-certificates}
\end{equation}
All reconstructed Bernstein coefficients for these stated
polynomials are strictly positive on the respective unit boxes.
The unit \nolinkurl{g-derivative-small}\footnote{Checker:
\nolinkurl{proof/analytic_scalar_bound/replay_cross_derivative_certificate.py}.\label{fn:checker-g-derivative-small}} reconstructs the two
$\rho\le1/2$ certificates, and \nolinkurl{g-derivative-large}\footnote{Checker:
\nolinkurl{proof/analytic_scalar_bound/replay_cross_derivative_large_mass_certificate.py}.\label{fn:checker-g-derivative-large}}
reconstructs the two $\rho\ge1/2$ certificates.
Both certificates imply $\mathcal Q(G_*)>0$ because $D_*>0$.
Since $\mathcal Q$ decreases with $g$ and $g\le G_*$,
\begin{equation}
 \mathcal Q(g)\ge\mathcal Q(G_*)>0,\qquad
 \partial_g\mathcal D(g)<0.
 \label{eq:app-g-determinant-sign}
\end{equation}
This completes the segment from $e_1$ to $e_0$.

The other coefficient-zero segment, from $e_3$ to $e_0$, follows
from the same identities under the explicit interchange
\begin{equation}
 (\rho,\sigma,\eta,\xi,p,q,R_g,C_g)
       \longleftrightarrow
 (\sigma,\rho,\xi,\eta,q,p,C_g,R_g).
 \label{eq:app-player-interchange}
\end{equation}
It exchanges $D_c$ with $D_r$ and the two first-round inequalities
\eqref{eq:app-first-round-g-comparisons}. The domain
\eqref{eq:g-projection-domain} is invariant under the interchange,
so the preceding endpoint, determinant, and certificate arguments
cover this segment as well. Together with the coefficient-one
segments, all four predicates in~\eqref{eq:app-g-projection-premises}
are preserved while $g$ decreases to $r$.
\end{proof}

\paragraph{The projected upper-bound functions.}
Lemma~\ref{lem:app-g-projection} associates each failed execution
with a scalar point at $g=r$. We now evaluate the arithmetic
functions at this point.
There is one further relaxation: replace $d_1,d_2$ in
\eqref{eq:payoff-bound-h} by $(1-\eta-r)/\eta$ and
$(1-\xi-r)/\xi$, respectively. Each replacement increases or leaves
unchanged the corresponding $h_i$. With nonnegative $u,v$ and
dual weights, it also increases or leaves unchanged
$S,T,U,V,H,K,E_1,E_2$.
For compact formulas, define
\begin{equation}
 \chi_1=1-r-r/\rho,\qquad
 \chi_2=1-r-r/\sigma,\qquad c=1-2r.
 \label{eq:app-projected-coefficients}
\end{equation}
The scalar $c$ is a recurring coefficient.
The symbols without an argument below are working names for
the projected functions, including the stated relaxation of $h_i$:
\begin{equation}
 \begin{aligned}
 s_1&=\chi_1+\chi_2u,&
 s_2&=\chi_2+\chi_1v,\\
 h_1&=\min\{1,1-r(1+\sigma)/\rho+\sigma(1-r)v/\rho\},&
 h_2&=\min\{1,1-r(1+\rho)/\sigma+\rho(1-r)u/\sigma\},\\
 S&=h_1-r+u(s_2-r),&
 T&=h_2-r+v(s_1-r).
 \end{aligned}
 \label{eq:scalar-caps}
\end{equation}
In terms of $\boldsymbol\lambda_r$ from
\eqref{eq:projected-payoff-parameters}, the relation to the original
functions is
\[
 \forall i\in[2],\quad s_i=s_i(\boldsymbol\lambda_r),\qquad
 h_i\ge h_i(\boldsymbol\lambda_r).
\]
The larger $h_i$ result from using the uncapped bounds in their
affine formulas. The functions $S,T$ are then recalculated using
these relaxed $h_i$. For $s_i,S,T$, we use the displayed formulas directly.
All six displayed upper bounds are nonnegative at the scalar
point produced by the execution and the preceding projection.
Later arithmetic transformations must establish the signs they need.

Use the same substitutions in the four weighted functions
\eqref{eq:app-weighted-regret-bounds} and the two inequalities
\eqref{eq:common-envelope} and~\eqref{eq:app-second-shared-comparison}.
Their projected working names and formulas are
\begin{equation}
 \begin{aligned}
 U&=\alpha s_1+(1-\alpha)h_2\ge r,&
 V&=(1-\gamma)h_1+\gamma s_2\ge r,\\
 H&=\alpha S+(1-\alpha)T\ge r,&
 K&=(1-\gamma)S+\gamma T\ge r,\\
 E_1&=(1-\gamma)(H-r)+\gamma v(U-r)
                         +\alpha\gamma(1-r-rv)-\alpha r\ge0,\\
 E_2&=(1-\alpha)(K-r)+\alpha u(V-r)
                         +\alpha\gamma(1-r-ru)-\gamma r\ge0.
 \end{aligned}
 \label{eq:scalar-sources}
\end{equation}
Each symbol denotes the displayed arithmetic function of $r$
and the five weights.

\paragraph{Definition of the finite necessary system.}
The preceding projection removes $g$ and preserves the selected
necessary conditions. We can now specify their conjunction
explicitly in five scalar variables.
With $e_j$ from~\eqref{eq:app-mixing-segment-points}, define
\begin{equation}
 \boldsymbol c_j=\boldsymbol c_j(\boldsymbol\lambda_r)
       =(R_r(e_j),C_r(e_j))
       \quad(j\in\{0,\ldots,4\}),\qquad
 \boldsymbol\theta=(\rho,\eta,\xi,\alpha,\gamma)\in\mathbb R^5.
 \label{eq:app-projected-segment-pairs}
\end{equation}
Here $R_r,C_r$ mean the functions $R_g,C_g$ with $g=r$;
the bold pairs record numerical regret upper bounds.
For $3/10\le r\le1/3$, let $\Sigma_r(\boldsymbol\theta)$ be
the conjunction
\begin{equation}
 \Sigma_r(\boldsymbol\theta)
 \quad\Longleftrightarrow\quad
 \begin{gathered}
 r\le\rho,\eta,\xi,\alpha,\gamma\le1-r,\\
 s_1,s_2,h_1,h_2,S,T\ge0,\\
 U,V,H,K\ge r,\qquad E_1,E_2\ge0,\\
 \forall j\in[4],\qquad
 \mathcal T_r(\boldsymbol c_0,\boldsymbol c_j).
 \end{gathered}
 \label{eq:scalar-system}
\end{equation}
The first line bounds the five weights. The second expresses
nonnegativity of the six payoff upper bounds. The third records
the four second-round comparisons and the two mixture comparisons.
The last line uses the finite affine segment predicate
\eqref{eq:segment-test}.
All quantities are the functions defined in
\eqref{eq:relative-weights}, \eqref{eq:app-mixing-cuts},
\eqref{eq:scalar-caps}, \eqref{eq:scalar-sources}, and
\eqref{eq:app-projected-segment-pairs}, with $\sigma=1-\rho$.

The first-round inequalities used in the projection are already
implied by this conjunction. To check that implication explicitly,
subtract $r$ from their projected upper bounds:
\begin{equation}
 \begin{aligned}
 \eta(1-r/\rho)+(1-\eta)(1-r/\sigma)-r&=\eta s_1,\\
 \xi(1-r/\sigma)+(1-\xi)(1-r/\rho)-r&=\xi s_2,\\
 \eta h_1+(1-\eta)s_2-r&=\eta S,\\
 \xi h_2+(1-\xi)s_1-r&=\xi T.
 \end{aligned}
 \label{eq:app-first-round-projected-identities}
\end{equation}
Their nonnegative right sides give precisely the needed
non-strict inequalities.
Instantiating the dual inequality of the second-round operation
$\sm(y_1^1)$ with $x^{2\prime}=x_s^2$ gives the upper bound
$\alpha d_1+1-\alpha$, which is at least $r$ because $d_1\ge0$
and $\alpha\le1-r$. Instantiating the dual inequality of the
second-round operation $\sm(y_1^2)$ with $x^{1\prime}=x_s^1$
gives $\gamma d_2+1-\gamma\ge r$ in the same way. At the respective first-round primal outputs $x_1^2,x_1^1$,
the exact best-response conclusions instead give the upper bounds
$1-\alpha$ and $1-\gamma$, also at least $r$ by the weight ranges.
Thus these instantiations impose no additional condition on
\eqref{eq:scalar-system}. Further mixture comparisons could
strengthen the system, but are not part of its definition.

Every failed execution first yields
\eqref{eq:g-projection-domain} and~\eqref{eq:app-g-projection-premises}.
Lemma~\ref{lem:app-g-projection} and the subsequent increase of
the payoff upper bounds then yield~\eqref{eq:scalar-system}.
This proves the implication
\begin{equation}
 f(\boldsymbol x_{\mathrm{out}})>r
 \quad\Longrightarrow\quad
 \exists\boldsymbol\theta\in\mathbb R^5,\ 
             \Sigma_r(\boldsymbol\theta),
 \qquad 3/10\le r\le1/3.
 \label{eq:app-failure-implies-scalar-system}
\end{equation}
The system has fixed dimension, independent of the game.
Replacing strict failure inequalities by non-strict necessary
conditions and increasing upper bounds both enlarge the system.
Consequently, infeasibility of $\Sigma_r$ excludes a failed
execution; realizability of every scalar point is not needed.

\section{The algebraic definition of the bound}
\label{app:discovery}

This appendix describes how we compute a candidate bound and use
the returned active comparisons to define it by exact polynomial
equations. Its starting point is the finite
necessary system~\eqref{eq:scalar-system}:
\begin{equation}
 \operatorname*{maximize}_{r,\boldsymbol\theta}\ r
 \quad\text{subject to}\quad 3/10\le r\le1/3,\quad
                              \Sigma_r(\boldsymbol\theta).
 \label{eq:app-bound-search}
\end{equation}
We minimize $-r$ subject to all the inequalities in
\eqref{eq:scalar-system}, using SciPy's multi-start SLSQP
routine~\cite{scipy2020,kraft1988}. The segment minima are evaluated
by the endpoint and intersection cases in~\eqref{eq:segment-test}.
The computation is implemented by \nolinkurl{find_bound.py}\footnote{Script:
\nolinkurl{tools/find_bound.py}. Parameters and reproduction instructions:
\nolinkurl{tools/README.md}.}.
Its returned points and constraint residuals give a candidate
threshold near $0.3095399654$ and identify the active comparisons
used in the following algebraic calculation. Exchanging the players
if necessary, we describe the candidate in the orientation $S>r>T$.
An inequality is called active when it holds with
equality. Near this value, five inequalities appear active: the
uncapped expression for $h_1$ equals one, $H=K=r$, one interior
mixing comparison holds with equality, and $E_1=0$. Here $H,K,E_1$
are the expressions in~\eqref{eq:scalar-sources}. These observations
motivate the equations below; they place no restriction on an
arbitrary solution of $\Sigma_r$.

Recall $\sigma=1-\rho$ and the relative weights $u,v$ from
\eqref{eq:relative-weights}. Setting the uncapped formula for $h_1$
in~\eqref{eq:scalar-caps} equal to one gives its change of formula:
\begin{equation}
 v_0=\frac{r(1+\sigma)}{\sigma(1-r)},\qquad
 v=v_0\quad\Longleftrightarrow\quad
 \xi=\frac{\sigma(1-r)}{\sigma+r},\qquad h_1=1.
 \label{eq:app-cap-switch}
\end{equation}
At the numerical candidate $S>r>T$. Under this condition, substituting
$H=K=r$ into their linear formulas in~\eqref{eq:scalar-sources}
determines both second-round weight variables:
\begin{equation}
 A_0=\frac{r-T}{S-T},\qquad \alpha=A_0,\qquad\gamma=1-A_0.
 \label{eq:app-observed-weights}
\end{equation}
The candidate also has $h_2<1$. For the following local calculation,
we therefore fix $v=v_0$, use $h_1=1$, and substitute
\begin{equation}
 \sigma=1-\rho,\quad u=\frac{1-\eta}{\eta},\quad
 \xi=\frac{\sigma(1-r)}{\sigma+r},\quad
 h_2=1-\frac{r(1+\rho)}{\sigma}
                +\frac{\rho(1-r)}{\sigma}u.
 \label{eq:app-root-substitution}
\end{equation}
The formulas for $s_1,s_2,S,T$ remain those
in~\eqref{eq:scalar-caps}, with these substitutions. This specifies
which rational functions are used before defining their polynomial
numerators. The global argument must later justify the use of this
formula for $h_2$ at the points left by its reductions.

Substitute~\eqref{eq:app-observed-weights} into $E_1$
in~\eqref{eq:scalar-sources}. Define the quadratic $\Psi$ with a real
argument $a$; evaluation at the balancing coefficient $A_0$ gives
\begin{equation}
 \begin{aligned}
 \forall a\in\mathbb R,\quad
 \Psi(a)&=a(1-a)K_0+(1-a)^2v h_2-ra-r(1-a)v,\\
 K_0&=1-r/\xi+vs_1,\qquad E_1(A_0,1-A_0)=\Psi(A_0).
 \end{aligned}
 \label{eq:app-root-comparison}
\end{equation}
The symbol $K_0$ denotes only the coefficient in this polynomial;
it is distinct from the weighted bound $K$. Equation
\eqref{eq:app-root-comparison} expresses the necessary comparison
$E_1\ge0$ in a short factored form. We have not yet proved that every
possible scalar point can be moved to these weight values.

The observed interior mixing equality comes from the segment joining
$(p_\dagger,0)$ and $(p_\dagger,q_\dagger)$. On~\eqref{eq:app-cap-switch},
the two cut coefficients have the following denominators and values:
\begin{equation}
 D_c=\rho+(1-2\rho)\eta,\qquad J=\rho+2\sigma r,\qquad
 p_\dagger=\frac{\rho(1-\eta)}{D_c},\qquad
 q_\dagger=\frac{r(2-\rho)}{J}.
 \label{eq:app-root-cuts}
\end{equation}
The expression $D_c$ equals $\rho(1-\eta)+\sigma\eta$ from the
stationary-regret projection. The symbol $J$ is a denominator,
distinct from the Jensen gaps. Evaluating their lower bounds
\eqref{eq:gap-lower-bounds} at the cuts gives
\begin{equation}
 J_C=j_c(\boldsymbol\lambda_r,p_\dagger)=\frac{r(\rho+\eta)-\rho\eta}{D_c},\qquad
 J_R=j_r(\boldsymbol\lambda_r,q_\dagger)=\frac{r(3-2\rho)-\sigma}{J}.
 \label{eq:app-root-gap-values}
\end{equation}
Thus $J_C,J_R$ denote the Jensen-gap lower bounds evaluated at the two
comparison coefficients. Substituting $g=r$ and the two
endpoint coordinates into~\eqref{eq:rectangle-upper-bounds} gives
\begin{equation}
 \begin{aligned}
 R_\ell&=(1-p_\dagger)r,&
 C_\ell&=(1-p_\dagger)r+p_\dagger-J_C,\\
 R_k&=(1-p_\dagger)(1-q_\dagger)r
                    +q_\dagger(1-p_\dagger r/\rho)-J_R,\\
 C_k&=(1-p_\dagger)(1-q_\dagger)r
                    +p_\dagger(1-q_\dagger r/\sigma)-J_C.
 \end{aligned}
 \label{eq:app-root-endpoints}
\end{equation}
The subscripts $\ell,k$ denote the outer endpoint and the central
cut point, respectively; these are four scalar regret upper bounds.

We need polynomial equations with specified normalization to define
the constant reproducibly. The intersection calculation
\eqref{eq:segment-test}, applied to~\eqref{eq:app-root-endpoints},
and the comparison~\eqref{eq:app-root-comparison} define their
polynomial numerators by
\begin{equation}
 \begin{aligned}
 (C_\ell-r)(R_k-r)-(R_\ell-r)(C_k-r)
       &=\frac{N_1(r,\rho,\eta)}{\sigma D_c^2J},\\
 \Psi\!\left(\frac{r-T}{S-T}\right)
       &=\frac{N_2(r,\rho,\eta)}{D_2},\\
 D_2&=\rho\eta\sigma(1-r)
            [\rho\eta\sigma^2(1-r)(S-T)]^2.
 \end{aligned}
 \label{eq:root-polys}
\end{equation}
Every quantity in these identities uses
\eqref{eq:app-root-substitution}. On $3/10\le r\le1/3$,
$r\le\rho,\eta\le1-r$, and $S>r>T$, the factors $\rho,\eta,\sigma$,
$1-r,D_c,J$, and $S-T$ are positive. Thus both displayed
denominators are positive. Exact expansion shows that $N_1$ is
quadratic and $N_2$ is cubic in $\eta$. The identities specify these
polynomials without a large coefficient list.

The equations $N_1=N_2=0$ suggest one further equation at an extremum
of $r$. A gradient $\nabla N_i$ is the vector of partial derivatives
in coordinates $(r,\rho,\eta)$. An intersection is regular here when
the two gradients are linearly independent. At a regular constrained
local extremum, the Lagrange multiplier
condition~\cite{nocedal2006} gives real multipliers
$\nu_1,\nu_2$ such that
\[
 \nabla r=(1,0,0)=\nu_1\nabla N_1+\nu_2\nabla N_2.
\]
Taking the dot product with
$\nabla N_1\times\nabla N_2$ shows that the cross product's
$r$-coordinate is zero, giving
\begin{equation}
 N_1=N_2=0,\qquad
 H_{\mathrm{root}}=(N_1)_\rho(N_2)_\eta-(N_1)_\eta(N_2)_\rho=0.
 \label{eq:natural-root}
\end{equation}
A subscript on a polynomial denotes a partial derivative.
The symbol $H_{\mathrm{root}}$ is distinct from the weighted bound $H$.
Solving near the numerical candidate gives
\[
 (r_*,\rho_*,\eta_*)\approx
 (0.3095399654,\ 0.4799488361,\ 0.5461144474).
\]
For an exact definition, take the rational center
\begin{equation}
 \begin{aligned}
 c_r&=0.30953996543144345279052787573817601463814080254235,\\
 c_\rho&=0.47994883614366958777809825446105296705426289894925,\\
 c_\eta&=0.54611444737225673894884919253273990511184474731359,
 \end{aligned}
 \qquad \boldsymbol c=(c_r,c_\rho,c_\eta).
 \label{eq:app-root-center}
\end{equation}
These terminating decimals denote exact rationals. Define
$(r_*,\rho_*,\eta_*)$ as the unique solution
of~\eqref{eq:natural-root} in the closed box
\begin{equation}
 X=\{(r,\rho,\eta):
 |r-c_r|\le10^{-30},\
 |\rho-c_\rho|\le10^{-30},\
 |\eta-c_\eta|\le10^{-30}\}.
 \label{eq:app-root-box}
\end{equation}
The existence and uniqueness check below makes this definition
well posed and also gives
\begin{equation}
 0.30953996<r_*<0.30953997.
 \label{eq:app-root-value}
\end{equation}

To explain exactly what this local check establishes, let
$\mathcal F=(N_1,N_2,H_{\mathrm{root}})$. Its Jacobian $D\mathcal F$ is
the $3\times3$ matrix of partial derivatives, with one row for each
component and columns for $(r,\rho,\eta)$. Let $Y$ be an invertible
rational matrix approximating the inverse Jacobian at $\boldsymbol c$.
Rational interval arithmetic~\cite{moore2009} computes lower and upper rational bounds
containing all values of each entry on $X$; write the resulting
Jacobian enclosure as $D\mathcal F(X)$. The verification unit
\nolinkurl{natural-root}\footnote{Checker:
\nolinkurl{proof/bound/critical/local_root_certificate.py}.\label{fn:checker-natural-root}} checks
\begin{equation}
 \boldsymbol c-Y\mathcal F(\boldsymbol c)
   +[I-YD\mathcal F(X)](X-\boldsymbol c)
       \subset\operatorname{int}X,\qquad
 \sup_{\boldsymbol z\in X}
       \|I-YD\mathcal F(\boldsymbol z)\|_\infty<1.
 \label{eq:app-root-contraction}
\end{equation}
The first expression is the interval enclosure commonly called the
Krawczyk enclosure~\cite{krawczyk1969}. Here $I$ is the identity matrix,
$\operatorname{int}X$ is the box interior, the vector norm
$\|\boldsymbol z\|_\infty$ is the largest absolute coordinate, and
the matrix norm is the largest sum of absolute entries in a row.
The interval products enclose every possible value on $X$.
Invertibility of $Y$ makes zeros of $\mathcal F$ exactly the fixed
points of $\boldsymbol z\mapsto\boldsymbol z-Y\mathcal F(\boldsymbol z)$.
The first condition maps the box into itself, and the second makes
the map a contraction: distances are reduced by one uniform factor
strictly below one. The contraction therefore has exactly one fixed
point in $X$ by the contraction mapping
theorem~\cite{rudin1976}. This proves the local algebraic definition and the
interval~\eqref{eq:app-root-value}. Excluding feasible points elsewhere
requires the separate global argument; the numerical search
contributes no assumption to that argument.
At this isolated root, the unit \nolinkurl{scalar-root-fixture}\footnote{Checker:
\nolinkurl{proof/bound/critical/local_signs.py}.\label{fn:checker-scalar-root-fixture}}
checks the selected scalar equality point: it verifies 58 strict
signs and four active identities in the necessary relaxation.
This check concerns scalar inequalities; an algorithmic tightness
claim would additionally require a game realizing the point.

\section{Global reduction and exclusion}
\label{app:global}

\subsection{Exact sign checks and preliminary exclusions}
\label{app:preliminary}

The global proof needs polynomial signs throughout parameter regions.
For this purpose we use the Bernstein representation defined
in~\eqref{eq:bernstein-basis}, following the polynomial range
method of~\cite{garloff1986}. Its basis functions are nonnegative
and sum to one, so a polynomial $P$ with coefficients $b_0,\ldots,b_d$
satisfies
\begin{equation}
 \min_jb_j\le P(t)\le\max_jb_j\qquad(0\le t\le1).
 \label{eq:bernstein-sign-bound}
\end{equation}
In several variables, products of the one-variable bases have the
same properties. A rational affine substitution maps a rational box
to a unit box, meaning a Cartesian product of intervals $[0,1]$.
Strictly negative coefficients then prove negativity throughout
the box. If one coefficient list is insufficient, the certificate
subdivides the box. The independent checker verifies that every
subdivided box equals the union of its child boxes and that every
terminal box is excluded by a polynomial sign or incompatible
inequalities. This covers the whole region, including its boundaries.
For expressions involving the isolated root, the checker uses rational
interval enclosures~\cite{moore2009} from~\eqref{eq:app-root-box}.

It is enough to exclude solutions of $\Sigma_r$ for $r_*<r\le1/3$.
The interval~\eqref{eq:app-root-value} permits the slightly larger
threshold range below. The weight bounds in~\eqref{eq:scalar-system}
and the relative-weight definitions~\eqref{eq:relative-weights} give
the following basic domain and coefficients:
\begin{equation}
 \begin{gathered}
 r_0=\frac{30953996}{10^8},\qquad r_0\le r\le1/3,\qquad
 r\le\rho\le1-r,\qquad \sigma=1-\rho,\\
 k=\frac r{1-r},\qquad k\le u,v\le1/k,\qquad
 \chi_1=1-r-r/\rho,\quad\chi_2=1-r-r/\sigma,\quad c=1-2r.
 \end{gathered}
 \label{eq:global-basic-domain}
\end{equation}
The cutoff $r_0$ fixes the range of this proof.
Throughout this domain, $\rho,\sigma,r,1-r,u,v$ and $c$ are positive.
The two clipped lower bounds corresponding to $h_1,h_2$ in
\eqref{eq:scalar-caps} can be written as positive parts:
\begin{equation}
 \begin{aligned}
 \kappa_1&=\left[\frac{r(1+\sigma)-\sigma(1-r)v}{\rho}\right]_+=1-h_1,\\
 \kappa_2&=\left[\frac{r(1+\rho)-\rho(1-r)u}{\sigma}\right]_+=1-h_2.
 \end{aligned}
 \label{eq:cap-payoff-lower-bounds}
\end{equation}
Here $[t]_+=\max\{0,t\}$. Taking the positive part combines a
derived payoff lower bound with the inherent bound that payoffs
are nonnegative. It is equivalent to taking the minimum with one
in the definitions of $h_1,h_2$.

Two elementary coefficient bounds that will be used below are
\begin{equation}
 \rho\sigma\le1/4,\qquad
 \chi_1-r,\ \chi_2-r
   \le\frac{1-4r+2r^2}{1-r}<0.
 \label{eq:global-coefficient-signs}
\end{equation}
The first follows from $(\rho-\sigma)^2\ge0$ and $\rho+\sigma=1$.
For the second, use $\rho,\sigma\le1-r$ in their definitions.
Its numerator is negative at $r=3/10$ and has derivative $4r-4<0$
on the threshold interval.

For the odds coordinates $u,v$, the cut denominators have the
following equivalent expressions:
\begin{equation}
 D_{c,\mathrm{odds}}=\rho u+\sigma=D_c/\eta,\qquad
 D_{r,\mathrm{odds}}=\sigma v+\rho=D_r/\xi.
 \label{eq:odds-cut-denominators}
\end{equation}
Thus $p_\dagger=\rho u/D_{c,\mathrm{odds}}$ and
$q_\dagger=\sigma v/D_{r,\mathrm{odds}}$ are the same cuts as
in~\eqref{eq:app-mixing-cuts}.

Before eliminating weights, we specify the two segment inequalities
that later paths will use. Substituting the cut coefficients into
\eqref{eq:rectangle-upper-bounds} with $g=r$ gives the positive
denominator and the central numerators
\begin{equation}
 \begin{aligned}
 D_{\mathrm{mix}}&=D_{c,\mathrm{odds}}D_{r,\mathrm{odds}}>0,\\
 N_R&=\sigma^2\chi_2+\rho(\sigma-2r)u+\sigma^2cv
                         +\rho\sigma(\chi_1-r)uv,\\
 N_C&=\rho^2\chi_1+\sigma(\rho-2r)v+\rho^2cu
                         +\rho\sigma(\chi_2-r)uv,\\
 R_r(p_\dagger,q_\dagger)-r&=N_R/D_{\mathrm{mix}},\qquad
 C_r(p_\dagger,q_\dagger)-r=N_C/D_{\mathrm{mix}}.
 \end{aligned}
 \label{eq:mixing-numerators}
\end{equation}
Thus the signs of $N_R,N_C$ specify which central regret upper
bound is above the comparison threshold. Define
\begin{equation}
 L=(\chi_1+cu)N_R+ruN_C,\qquad
 B_{\mathrm{mix}}=(\chi_2+cv)N_C+rvN_R.
 \label{eq:mixing-determinants}
\end{equation}
These are positive multiples of the intersection determinants for
the segments toward $(p_\dagger,0)$ and $(0,q_\dagger)$,
respectively. The other endpoint differences are
\begin{equation}
 \begin{aligned}
 R_r(p_\dagger,0)-r&=-\frac{\rho ur}{\rho u+\sigma},&
 C_r(p_\dagger,0)-r&=\frac{\rho(\chi_1+cu)}{\rho u+\sigma},\\
 R_r(0,q_\dagger)-r&=\frac{\sigma(\chi_2+cv)}{\sigma v+\rho},&
 C_r(0,q_\dagger)-r&=-\frac{\sigma vr}{\sigma v+\rho}.
 \end{aligned}
 \label{eq:mixing-outer-endpoints}
\end{equation}
The segment predicates in~\eqref{eq:scalar-system} force
$\chi_1+cu,\chi_2+cv\ge0$: the other bound at each outer endpoint
is strictly below $r$. If $N_C<0$, the central comparison forces
$N_R\ge0$, and the intersection condition~\eqref{eq:segment-test} gives
$L\ge0$. If $N_R,N_C\ge0$, the definition of $L$ and the endpoint
coefficient signs give $L\ge0$ directly. Both numerators negative
would violate the central comparison. The remaining case
$N_R<0\le N_C$ gives $B_{\mathrm{mix}}\ge0$ from the second segment.
Consequently, the initial necessary system implies the explicit
guarded alternatives
\begin{equation}
 \Sigma_r\ \Longrightarrow\
 \begin{gathered}
 \chi_1+cu\ge0,\qquad\chi_2+cv\ge0,\\
 (N_R\ge0\ \land\ L\ge0)
 \quad\text{or}\quad
 (N_R<0\le N_C\ \land\ B_{\mathrm{mix}}\ge0).
 \end{gathered}
 \label{eq:mixing-branch-guards}
\end{equation}
These substitutions and determinant identities are reconstructed by
the units \nolinkurl{compact-identities}\footnote{Checker:
\nolinkurl{proof/bound/simplify/verify_compact_identities.py}.\label{fn:checker-compact-identities}} and
\nolinkurl{mass-identities}\footnote{Checker:
\nolinkurl{proof/bound/global/check_reduction_identities.py}.\label{fn:checker-mass-identities}}.

We next show that exactly one of $S,T$ is above $r$. The inequalities
$H,K\ge r$ in~\eqref{eq:scalar-sources} imply $\max\{S,T\}\ge r$,
because $H,K$ are convex combinations of $S,T$. A preliminary
uniform bound needed for this exclusion is
\begin{equation}
 s_1<r,\qquad s_2<r
 \qquad\text{on the full domain~\eqref{eq:global-basic-domain}}.
 \label{eq:small-payoff-bounds}
\end{equation}
Indeed, $s_2=\chi_2+\chi_1v$ in~\eqref{eq:scalar-caps} is affine
in $v$, so its maximum on $[k,1/k]$ occurs at an endpoint.
Cauchy--Schwarz, applied to the two terms with denominators
$\sigma,\rho$, gives
\begin{equation}
 \frac1\sigma+\frac v\rho\ge(1+\sqrt v)^2.
 \label{eq:reciprocal-weight-bound}
\end{equation}
At $v=k$, substitute~\eqref{eq:reciprocal-weight-bound} into $s_2$
and use $r\ge3/10$, $k\ge3/7>(13/20)^2$ to obtain
\[
 s_2\le1-k-2r\sqrt{k}
       \le1-3/7-39/100=127/700<r.
\]
At $v=1/k$, the same substitution gives
$s_2\le1/r-2-2\sqrt{r(1-r)}<r$. To verify the last inequality,
both sides of $1/r-2-r<2\sqrt{r(1-r)}$ are positive on this interval;
squaring and multiplying by $r^2$ gives
$5r^4+2r^2-4r+1<0$. This polynomial is negative at $r_0$, and its
derivative $20r^3+4r-4$ is at most $-52/27$ for $r\le1/3$.
The proof for $s_1$ exchanges $\rho,\sigma$ and $u,v$. In particular,
\eqref{eq:small-payoff-bounds} does not assume $\rho\le1/2$.

To exclude $S,T\ge r$ simultaneously, first exchange all player
labels, if necessary, so that $\rho\le1/2$. This temporary labeling
does not assume an order between $S,T$. Under $S\ge r$, we will
prove $u<99/100$ before using a separate implication valid over
the full range of $\rho$. Define the uncapped expression for $h_1$
and the affine expression for $S$ obtained from it by
\begin{equation}
 \begin{aligned}
 u_c&=99/100,&
 \bar h_1(v)&=1-r(1+\sigma)/\rho+\sigma(1-r)v/\rho,\\
 S_0(v)&=\bar h_1(v)-r,& S_u(v)&=s_2-r<0,\\
 S^{\mathrm{raw}}(u,v)&=S_0(v)+uS_u(v).
 \end{aligned}
 \label{eq:preliminary-uncapped-source}
\end{equation}
The superscript denotes replacement of $h_1$ by its uncapped
formula. Since $h_1\le\bar h_1$, equation~\eqref{eq:scalar-caps}
gives $S\le S^{\mathrm{raw}}$. Its slope in $v$ at $u_c$ is positive:
\begin{equation}
 \begin{aligned}
 A_v&=\frac{\sigma(1-r)}{\rho}+u_c\chi_1
     =\frac{(1-u_c)\sigma(1-r)+u_cc}{\rho}>0,\\
 v_s&=\frac{r-S_0(0)-u_cS_u(0)}{A_v},\qquad
 S^{\mathrm{raw}}(u_c,v_s)=r.
 \end{aligned}
 \label{eq:preliminary-source-boundary}
\end{equation}
Thus $v_s$ is the value at which this uncapped expression equals
$r$. If $u\ge u_c$ and $S\ge r$, the negative $u$-slope
in~\eqref{eq:preliminary-uncapped-source} gives
\begin{equation}
 S^{\mathrm{raw}}(u_c,v)\ge S^{\mathrm{raw}}(u,v)\ge S\ge r,
 \qquad v\ge v_s.
 \label{eq:preliminary-source-order}
\end{equation}

We must obtain the left-segment condition with its original guards
before contradicting it. Write the central column numerator as
\begin{equation}
 \begin{aligned}
 N_C&=C_{\mathrm b}(v)+uC_{\mathrm s}(v),\\
 C_{\mathrm b}(v)&=\rho^2\chi_1+\sigma(\rho-2r)v,\qquad
 C_{\mathrm s}(v)=\rho^2c+\rho\sigma(\chi_2-r)v,\\
 (N_C)_v&=\sigma(\rho-2r)+\rho\sigma(\chi_2-r)u<0.
 \end{aligned}
 \label{eq:preliminary-column-affine}
\end{equation}
The subscripts $\mathrm b,\mathrm s$ distinguish the constant and
$u$-slope of the affine expression. The derivative sign follows
from $\rho\le1/2<2r$, $u>0$, and
\eqref{eq:global-coefficient-signs}. If $C_{\mathrm s}(v)\le0$,
equations~\eqref{eq:preliminary-source-order}
and~\eqref{eq:preliminary-column-affine} imply
$N_C(u,v)\le N_C(u_c,v)\le N_C(u_c,v_s)$.

If $C_{\mathrm s}(v)>0$, use $S\ge r$ to bound $u$ before
substitution into $N_C$. On $v\le v_0$, $h_1=\bar h_1$ and
$u\le(S_0-r)/(-S_u)$; on $v\ge v_0$, $h_1=1$ and
$u\le c/(-S_u)$. The denominator is positive by
\eqref{eq:small-payoff-bounds}. Consequently
\begin{equation}
 \begin{aligned}
 P_{\mathrm{raw}}&=(-S_u)C_{\mathrm b}+C_{\mathrm s}(S_0-r),
 &N_C&\le P_{\mathrm{raw}}/(-S_u)&& (v\le v_0),\\
 P_{\mathrm{clip}}&=(-S_u)C_{\mathrm b}+cC_{\mathrm s},
 &N_C&\le P_{\mathrm{clip}}/(-S_u)&& (v\ge v_0).
 \end{aligned}
 \label{eq:preliminary-column-substitution}
\end{equation}
These inequalities follow by replacing $u$ in
\eqref{eq:preliminary-column-affine} by its respective upper bound.
Both numerator polynomials have degree at most two in $v$.

To state their sign checks exactly, for a quadratic polynomial $P$
on an interval $[a,b]$ with $a\le b$, define
\begin{equation}
 \begin{aligned}
 \beta_0(P;a,b)&=P(a),\\
 \beta_1(P;a,b)&=P(a)+(b-a)P'(a)/2,\\
 \beta_2(P;a,b)&=P(b),\\
 P(a+(b-a)t)&=\beta_0(P;a,b)(1-t)^2
          +2\beta_1(P;a,b)t(1-t)+\beta_2(P;a,b)t^2.
 \end{aligned}
 \label{eq:quadratic-bernstein}
\end{equation}
For $0\le t\le1$, negative coefficients imply a negative polynomial
by~\eqref{eq:bernstein-sign-bound}. Unit
\nolinkurl{canonical-u-cap}\footnote{Checker:
\nolinkurl{proof/bound/simplify/replay-relaxed-source-knot-u-cap.py}.\label{fn:checker-canonical-u-cap}} reconstructs the following seven sign checks:
\begin{equation}
 \begin{gathered}
 N_C(u_c,v_s)<0,\\
 \beta_j(P_{\mathrm{raw}};v_s,v_0)<0\quad(j=0,1,2;\ v_s\le v_0),\\
 \beta_j(P_{\mathrm{clip}};v_0,1/k)<0\quad(j=0,1,2).
 \end{gathered}
 \label{eq:preliminary-column-signs}
\end{equation}
Their exact parameter domain is $619/2000\le r\le1/3$,
$r\le\rho\le1/2$, covered by
\begin{equation}
 r=619/2000+(1/3-619/2000)t_r,\qquad
 \rho=r+(1/2-r)t_\rho,\qquad 0\le t_r,t_\rho\le1.
 \label{eq:preliminary-sign-domain}
\end{equation}
It covers the required domain because $619/2000<r_0$.
The denominators cleared in these checks are positive products
of $\rho,\sigma,r,1-r$ and the factor
$J_{\mathrm{CN}}=100\rho A_v>0$. There are 352 strictly negative
rational Bernstein coefficients. The interval $[v_0,1/k]$ is
nonempty on this domain. If $v_s>v_0$, no uncapped-$h_1$ point can
satisfy $v\ge v_s$, so that case needs no first interval check.
Equations~\eqref{eq:preliminary-column-substitution}
and~\eqref{eq:preliminary-column-signs}, together with the
$C_{\mathrm s}\le0$ case, prove $N_C<0$. The original segment
alternative~\eqref{eq:mixing-branch-guards} now gives $L\ge0$.

To contradict that inequality, expand $L$ about $(u_c,v_s)$:
\begin{equation}
 \begin{aligned}
 L(u_c+d_u,v_s+d_v)
  &=b_{00}+b_{10}d_u+b_{01}d_v+b_{20}d_u^2
                +b_{11}d_ud_v+b_{21}d_u^2d_v,\\
 d_u&=u-u_c\ge0,\qquad d_v=v-v_s\ge0.
 \end{aligned}
 \label{eq:preliminary-shifted-determinant}
\end{equation}
Here $b_{ij}$ is the coefficient of $d_u^id_v^j$, equivalently the
corresponding derivative of $L$ at $(u_c,v_s)$ divided by $i!j!$.
The same verification unit reconstructs these six coefficients
and proves each strictly negative on
\eqref{eq:preliminary-sign-domain}. Every monomial is nonnegative
and $b_{00}<0$, so~\eqref{eq:preliminary-shifted-determinant}
gives $L<0$. We have proved
\begin{equation}
 \Sigma_r,\quad r_0\le r\le1/3,\quad\rho\le1/2,\quad S\ge r
 \quad\Longrightarrow\quad u<99/100.
 \label{eq:preliminary-u-cap}
\end{equation}

The next implication supplies the contradiction if $T\ge r$ also.
It uses the explicit capped formulas~\eqref{eq:scalar-caps} and
holds on the full stationary-weight range:
\begin{equation}
 \begin{gathered}
 r_0\le r\le1/3,\quad r\le\rho\le1-r,\quad
 \sigma=1-\rho,\quad 0\le u\le1,\quad v\ge0,\\
 S\ge r\quad\Longrightarrow\quad T<r.
 \end{gathered}
 \label{eq:full-cap-exclusion}
\end{equation}
In particular, it requires no second-round weight comparison.
Define the other uncapped expression and replace both capped
formulas in $S,T$ by their uncapped expressions:
\begin{equation}
 \begin{aligned}
 \bar h_2(u)&=1-r(1+\rho)/\sigma+\rho(1-r)u/\sigma,\\
 \bar S(u,v)&=\bar h_1(v)-r+u(s_2-r),\\
 \bar T(u,v)&=\bar h_2(u)-r+v(s_1-r).
 \end{aligned}
 \label{eq:uncapped-common-bounds}
\end{equation}
Thus $S\le\bar S$ and $T\le\bar T$. Subscripts $u,v$ on these
functions denote partial derivatives. Their slopes in $v$ have
the endpoint values
\begin{equation}
 \begin{aligned}
 \left.\bar S_v\right|_{u=0}&=\sigma(1-r)/\rho>0,&
 \left.\bar S_v\right|_{u=1}&=c/\rho>0,\\
 \left.\bar T_v\right|_{u=0}&=\chi_1-r<0,&
 \left.\bar T_v\right|_{u=1}&=\chi_1+\chi_2-r<0.
 \end{aligned}
 \label{eq:raw-cap-slopes}
\end{equation}
The slopes are affine in $u$, so these signs hold for $0\le u\le1$.
The first negative sign uses~\eqref{eq:global-coefficient-signs};
the second uses $\rho\sigma\le1/4$ to give
$\chi_1+\chi_2-r=2-3r-r/(\rho\sigma)\le2-7r<0$.
These arguments do not need bounds on $v$.

Because $\bar S_v>0$, the equation $\bar S(u,v)=r$ has the unique
solution
\begin{equation}
 v_{\mathrm b}(u)=\frac{r-\bar S(u,0)}{\bar S_v(u)}.
 \label{eq:raw-cap-boundary}
\end{equation}
Its numerator and denominator are affine in $u$, so the derivative
of their ratio has constant sign. To control $\bar T$ along this
boundary, the unit \nolinkurl{full-mass-both-high}\footnote{Checker:
\nolinkurl{proof/bound/simplify/replay-fullmass-both-high.py}.\label{fn:checker-full-mass-both-high}} checks
\begin{equation}
 \begin{gathered}
 0<v_{\mathrm b}(0),v_{\mathrm b}(1)<7/3,\qquad
 \bar T(1,v_{\mathrm b}(1))<r,\\
 \mathcal J_{\mathrm{cap}}
       =\bar T_u\bar S_v-\bar T_v\bar S_u>0
       \quad(0\le u\le1,\ 0\le v\le7/3).
 \end{gathered}
 \label{eq:cap-boundary-signs}
\end{equation}
The determinant $\mathcal J_{\mathrm{cap}}$ combines the two rows
of partial derivatives of $\bar T,\bar S$. Its positive denominator
is $\rho^2\sigma^2$; the endpoint expressions also have positive
rational denominators. The exact coefficient-check domain is
$r\in[r_0,1/3]$, $\rho\in[r,1-r]$, $u\in[0,1]$,
$v\in[0,7/3]$. Since $v_{\mathrm b}$ stays between its endpoint
values, this domain contains the whole boundary curve.
If $S\ge r$, then $\bar S\ge r$ and
\eqref{eq:raw-cap-boundary} imply $v\ge v_{\mathrm b}(u)$.
Using the negative slope~\eqref{eq:raw-cap-slopes} and differentiating
$\bar S(u,v_{\mathrm b}(u))=r$ gives
\begin{equation}
 \begin{aligned}
 T&\le\bar T(u,v)\le\bar T(u,v_{\mathrm b}(u)),\\
 \frac{d}{du}\bar T(u,v_{\mathrm b}(u))
   &=\frac{\mathcal J_{\mathrm{cap}}}{\bar S_v}>0.
 \end{aligned}
 \label{eq:cap-boundary-comparison}
\end{equation}
Hence $T\le\bar T(1,v_{\mathrm b}(1))<r$, proving
\eqref{eq:full-cap-exclusion}, even for $v>7/3$.

If $S,T\ge r$ in the temporary labeling with $\rho\le1/2$,
equation~\eqref{eq:preliminary-u-cap} makes
\eqref{eq:full-cap-exclusion} applicable and contradicts $T\ge r$.
The impossibility of both inequalities is invariant under exchanging
the players. Together with $\max\{S,T\}\ge r$, this proves
\begin{equation}
 \Sigma_r,\quad r_0\le r\le1/3
 \quad\Longrightarrow\quad
 (S>r>T)\ \text{or}\ (T>r>S).
 \label{eq:strict-bound-orientation}
\end{equation}
For strictness, $S=r>T$ would give
$H=\alpha r+(1-\alpha)T<r$, because $\alpha\le1-r<1$;
the case $T=r>S$ similarly contradicts $K\ge r$.

\subsection{Eliminating the second-round weights}
\label{app:weights}

The purpose of this reduction is to eliminate $\alpha,\gamma$
while preserving enough necessary inequalities for the later paths.
By~\eqref{eq:strict-bound-orientation}, exchange all player labels,
if necessary, to obtain $S>r>T$. This exchange acts simultaneously:
\begin{equation}
 (\rho,\sigma,\eta,\xi,\alpha,\gamma,u,v,s_1,s_2,h_1,h_2,S,T)
 \longleftrightarrow
 (\sigma,\rho,\xi,\eta,\gamma,\alpha,v,u,s_2,s_1,h_2,h_1,T,S).
 \label{eq:weight-exchange}
\end{equation}
The stationary row weight after this exchange can exceed $1/2$.
We therefore use the full domain~\eqref{eq:global-basic-domain}
throughout.

Define the two distances from $r$ and their ratio by
\begin{equation}
 m=S-r>0,\qquad n=r-T>0,\qquad
 z=\frac nm>0,\qquad A_0=\frac{n}{m+n}=\frac{r-T}{S-T}.
 \label{eq:weight-balancing-values}
\end{equation}
This is the same coefficient $A_0$ used in
\eqref{eq:app-observed-weights}; it balances $S,T$ at $r$.
Substitute~\eqref{eq:weight-balancing-values} into $H,K$
in~\eqref{eq:scalar-sources}. The results are
\begin{equation}
 H-r=\alpha(m+n)-n,\qquad K-r=m-\gamma(m+n),
 \qquad
 \alpha\ge A_0,\quad\gamma\le1-A_0,\quad0<A_0\le1-r.
 \label{eq:weight-source-restrictions}
\end{equation}
The two weight inequalities use $H,K\ge r$ and $m+n>0$;
the upper bound on $A_0$ also uses $\alpha\le1-r$
from~\eqref{eq:scalar-system}. They do not yet justify changing
the weights, because $E_1\ge0$ must survive.

To lower $\alpha$ while preserving $E_1\ge0$, we need its
$\alpha$-derivative to be negative throughout the allowed interval
for $\gamma$. Two signs will imply this:
\begin{equation}
 M=(1-r)S-(1-2r)T-rh_2/\xi<0,\qquad h_2>\xi.
 \label{eq:weight-derivative-signs}
\end{equation}
Their hypotheses differ. The sign $M<0$ needs
\eqref{eq:global-basic-domain} and $T\ge0$. The sign $h_2>\xi$
also uses the initial nonnegativity
$s_1,s_2,h_1,h_2,S,T\ge0$, $S\ge r$, and the original
$H\ge r$ with $\alpha\le1-r$. These are conditions
of~\eqref{eq:scalar-system} and~\eqref{eq:scalar-sources}, before
any later path drops them. We explain the checks before using
the two signs to move the weights.

First suppose $h_2=\bar h_2$, using its uncapped formula
in~\eqref{eq:uncapped-common-bounds}. For either formula of $h_1$,
the functions $T,M$ are affine in $u$. Write
\begin{equation}
 \begin{aligned}
 T&=t_0+t_1u,\qquad M=M_0+M_1u,\\
 t_0&=\bar h_2(0)-r+v(\chi_1-r),\qquad
 t_1=\rho(1-r)/\sigma+v\chi_2>0,\\
 M_1&=(1-r)(s_2-r)-(1-2r)t_1
                        -r(1+v)\rho(1-r)/\sigma<0.
 \end{aligned}
 \label{eq:weight-M-affine}
\end{equation}
Here $M_0$ is the constant coefficient obtained by formally
evaluating this affine formula at $u=0$. To justify $t_1>0$,
if $\chi_2\ge0$ both terms are nonnegative and the first is
positive. If $\chi_2<0$, $v\le1/k$ gives
$t_1\ge\rho(1-r)/\sigma+\chi_2/k=(1-r)c/r>0$.
The sign of $M_1$ then follows from
\eqref{eq:small-payoff-bounds}, $c>0$, and
$1/\xi=1+v$ in~\eqref{eq:relative-weights}.
Using $T\ge0$ in~\eqref{eq:weight-M-affine} gives the exact elimination
\begin{equation}
 M=M_0-\frac{M_1t_0}{t_1}+\frac{M_1}{t_1}T
       \le M_0-\frac{M_1t_0}{t_1}.
 \label{eq:weight-M-elimination}
\end{equation}
It remains to check the sign of the right-hand side after clearing
its positive denominators. The unit \nolinkurl{mass-M-small}\footnote{Checker:
\nolinkurl{proof/bound/global/replay_m_chart_certificate.py}.\label{fn:checker-mass-M-small}} checks
$619/2000\le r\le1/3$, $r\le\rho\le1/2$; the unit
\nolinkurl{mass-M}\footnote{Checker:
\nolinkurl{proof/bound/global/replay_m_chart_full_mass_certificate.py}.\label{fn:checker-mass-M}} adds $r_0\le r\le1/3$,
$1/2\le\rho\le1-r$. In both cases the $h_1$ intervals are
$k\le v\le v_0$ and $v_0\le v\le1/k$. They are in this order
because $v_0\le1/k$ is equivalent to $r^2\le\sigma c$,
which follows from $\sigma\ge r$ and $c\ge r$.
The smaller-$\rho$ checks have strictly negative Bernstein
coefficients. In each of the two larger-$\rho$ checks, three terms
are grouped into a strictly negative quadratic; all other
coefficients are negative. The grouped coefficients
$b_0,b_1,b_2$ satisfy $b_0,b_2<0$ and
$60b_0b_2-36b_1^2>0$, proving negativity of
$b_0(1-t)^2+6b_1t(1-t)+15b_2t^2$.
Thus~\eqref{eq:weight-M-elimination} proves $M<0$
on the uncapped-$h_2$ branch.

On the other branch $h_2=1$, the definition of the cap implies
$u\ge u_C$, where
\begin{equation}
 u_C=\frac{r(1+\rho)}{\rho(1-r)},\qquad
 M\le(1-r)[h_1-r+u_C(s_2-r)]-r(1+v).
 \label{eq:weight-M-capped}
\end{equation}
Indeed, $S=h_1-r+u(s_2-r)$ decreases with $u$
by~\eqref{eq:small-payoff-bounds}, and the term
$-(1-2r)T$ is nonpositive because $T\ge0$.
The right-hand side of~\eqref{eq:weight-M-capped} is affine in $v$
on each formula of $h_1$. The unit \nolinkurl{mass-M}\textsuperscript{\ref{fn:checker-mass-M}} checks its
three endpoint values at $v=k,v_0,1/k$ are negative on
$r_0\le r\le1/3$, $r\le\rho\le1-r$. This completes $M<0$
on both $h_2$ branches.

For the other derivative sign, the unit \nolinkurl{mass-h_2-xi}\footnote{Checker:
\nolinkurl{proof/bound/simplify/replay-h_2-xi-certificates.py}.\label{fn:checker-mass-h_2-xi}}
excludes $h_2\le\xi$ under the hypotheses stated for
\eqref{eq:weight-derivative-signs}. Its sufficient contradiction
is the following conditional sign:
\begin{equation}
 \begin{gathered}
 s_1,s_2,h_1,h_2,S,T\ge0,\quad S\ge r,\quad h_2\le\xi\\
 \Longrightarrow\quad
 Z_\xi=(1-r)S+(1+r)T-r<0,
 \end{gathered}
 \label{eq:weight-h2-xi-exclusion}
\end{equation}
on~\eqref{eq:global-basic-domain}. To specify the substitutions
in this check, put $x=\xi$ and regard $h_2$ temporarily as a
coordinate. Since $h_2\le x\le1-r<1$, its uncapped formula applies:
\begin{equation}
 \begin{aligned}
 u&=\frac{\sigma h_2-\sigma+r(1+\rho)}{\rho(1-r)},\qquad
 v=\frac{1-x}{x},\\
 s_1&=c_0+c_1h_2,\qquad
 c_0=\chi_1-\frac{\chi_2[\sigma-r(1+\rho)]}{\rho(1-r)},\quad
 c_1=\frac{\sigma(1-r)-r}{\rho(1-r)}.
 \end{aligned}
 \label{eq:weight-h2-xi-substitution}
\end{equation}
The weight ranges give $c_1<1$, $c_0<r$, and
$\partial_{h_2}T=1+vc_1>0$. For $c_1<1$, use
$(1-2\rho)(1-r)<r$, a consequence of
\eqref{eq:global-coefficient-signs}. For $\rho\le1/2$,
$\chi_2\ge0$, $\sigma-r(1+\rho)\ge0$, and
$\chi_1\le1-3r<r$ give $c_0<r$. For $\rho\ge1/2$,
$\chi_1\le19/70$. If the product subtracted in $c_0$ is negative,
then $0<\chi_2\le1/10$,
$0<r(1+\rho)-\sigma<1/20$, and
$\rho(1-r)\ge1/3$; hence
$c_0\le19/70+3/200=401/1400<3/10\le r$.
If that product is nonnegative, $c_0\le\chi_1<r$ directly.
Finally, $c_1\ge0$ when $\rho\le1/2$; otherwise
$c_1\ge-2r^2/(1-r)$ and $v\le1/k$ give $1+vc_1\ge1-2r>0$.

Increasing $h_2$ to $x$ therefore gives the upper bound
\begin{equation}
 T\le\widehat T(x)
     =x-r/x+(1-x)(c_0+c_1x)/x,\qquad
 \widehat T'(x)=1-c_1+(r-c_0)/x^2>0.
 \label{eq:weight-h2-xi-T}
\end{equation}
If $h_1=1$, the switch formula~\eqref{eq:app-cap-switch} gives
$x\le\theta=\sigma(1-r)/(\sigma+r)$. The first two sign checks
in \nolinkurl{mass-h_2-xi}\textsuperscript{\ref{fn:checker-mass-h_2-xi}} certify $\widehat T(\theta)<0$,
contradicting $T\ge0$. Their rational boxes are
$r\in[3/10,1/3]$, $\rho\in[3/10,1/2]$ and
$r\in[3/10,1/3]$, $\rho\in[1/2,7/10]$.
The numerators have positive Bernstein coefficients and their
denominators are negative.

Thus $h_1$ is uncapped. Equations~\eqref{eq:scalar-caps}
and~\eqref{eq:small-payoff-bounds} give
$S<h_1-r$, so $S\ge r$ implies $h_1>2r$. Solving these two
strict bounds on the uncapped formula for $h_1$ gives
\begin{equation}
 \theta<x<\theta_2,\qquad
 \theta=\frac{\sigma(1-r)}{\sigma+r},\qquad
 \theta_2=\frac{\sigma(1-r)}{1+r-2\rho(1-r)}.
 \label{eq:weight-h2-xi-range}
\end{equation}
The denominator of $\theta_2$ is positive on the stated domain.
After~\eqref{eq:weight-h2-xi-substitution}, differentiation at
fixed $x$ gives
\[
 \partial_{h_2}Z_\xi
       =(\sigma/\rho)(s_2-r)+(1+r)(1+vc_1)>0.
\]
For $\rho\le1/2$, $s_2\ge0$ and $c_1\ge0$ give a lower bound
$1+r-\sigma r/\rho\ge2r>0$. For $\rho\ge1/2$, use
$\sigma/\rho\le1$, $s_2\ge0$, and $vc_1\ge-2r$ to get
$1-2r-2r^2\ge1/9>0$. Thus substituting $h_2=x$ increases
$Z_\xi$. The remaining two sign checks substitute
$x=\theta+(\theta_2-\theta)t$, $0\le t\le1$, and prove that even
this increased expression is negative. Their exact boxes are
$r\in[4/13,1/3]$, $\rho\in[3/10,1/2]$, $t\in[0,1]$ and
$r\in[619/2000,1/3]$, $\rho\in[1/2,7/10]$, $t\in[0,1]$.
After clearing positive denominators, the numerators of $-Z_\xi$
have positive Bernstein coefficients. These four boxes cover all
weights needed for~\eqref{eq:weight-h2-xi-exclusion}.

In our ordered case $S>r>T$, the original $H\ge r$ and
$\alpha\le1-r$ imply
$r\le H\le(1-r)S+rT$. Together with $T\ge0$, this gives
$Z_\xi\ge T\ge0$, contradicting
\eqref{eq:weight-h2-xi-exclusion}. Hence $h_2>\xi$, completing
\eqref{eq:weight-derivative-signs}.

Now differentiate the particular expression $E_1$ in
\eqref{eq:scalar-sources}. Using $1/\xi=1+v$ gives
\begin{equation}
 \begin{aligned}
 \partial_\alpha E_1
   &=(1-\gamma)(S-T)+\gamma(1+T-h_2/\xi)-r,\\
 \left.\partial_\alpha E_1\right|_{\gamma=r}&=M<0,\\
 \left.\partial_\alpha E_1\right|_{\gamma=1-A_0}
   &=\frac{(S-r)(1-h_2/\xi)+T(T-r)}{S-T}<0.
 \end{aligned}
 \label{eq:weight-E1-derivative}
\end{equation}
The second endpoint is negative because $S-r>0$, $S-T>0$,
$0\le T<r$, and $h_2>\xi$. The derivative is affine in $\gamma$,
so it is negative throughout $r\le\gamma\le1-A_0$.
Lowering $\alpha$ to $A_0$ therefore cannot decrease $E_1$.
It cannot decrease $U=\alpha s_1+(1-\alpha)h_2$ either:
$s_1<r$ by~\eqref{eq:small-payoff-bounds}, and
$h_2>\xi\ge r$ by~\eqref{eq:weight-derivative-signs}.

After this move $E_1$ is affine in $\gamma$, with
\begin{equation}
 E_1(A_0,0)=-A_0r<0.
 \label{eq:weight-E1-zero}
\end{equation}
Its nonnegativity at the present $\gamma>0$ therefore forces its
$\gamma$-slope to be positive. Raising $\gamma$ to $1-A_0$ preserves
$E_1\ge0$ and does not affect $U$. We have justified
$\alpha=A_0$, $\gamma=1-A_0$ by necessary comparisons, without
assuming the observed numerical equalities.

The inequalities needed next are $U(A_0)\ge r$, $A_0\le1-r$,
and $E_1(A_0,1-A_0)\ge0$. Their positive rescalings are
\begin{equation}
 \begin{aligned}
 G&=(m+n)[U(A_0)-r]
       =m(h_2-r)-n(r-s_1)\ge0,\\
 B_{\mathrm{mass}}&=(m+n)[(1-r)-A_0]
       =(1-r)m-rn\ge0,\\
 E&=\frac{E_1(A_0,1-A_0)}{(1-A_0)^2}
       =v(h_2-r)+z[c+v(s_1-2r)]-rz^2\ge0.
 \end{aligned}
 \label{eq:relaxed}
\end{equation}
Thus $B_{\mathrm{mass}}$ is exactly the upper coefficient bound
$A_0\le1-r$, while $B_{\mathrm{mix}}$ is the distinct segment
determinant in~\eqref{eq:mixing-determinants}. At the initial point,
changing only $\alpha,\gamma$ leaves all caps and segment guards
unchanged. For the subsequent paths we use the following necessary
domain, with the functions still defined by~\eqref{eq:scalar-caps}:
\begin{equation}
 \begin{gathered}
 r_0\le r\le1/3,\quad r\le\rho\le1-r,\quad
 k\le u,v\le1/k,\quad \sigma=1-\rho,\quad k=r/(1-r),\\
 m=S-r>0,\quad n=r-T>0,\quad T\ge0,\quad s_2\ge0,\quad
 (\rho\le1/2\ \Longrightarrow\ s_1\ge0),\\
 G\ge0,\quad B_{\mathrm{mass}}\ge0,\quad E\ge0,\\
 (N_R\ge0\ \land\ L\ge0)
 \quad\text{or}\quad
 (N_R<0\le N_C\ \land\ B_{\mathrm{mix}}\ge0).
 \end{gathered}
 \label{eq:reduced-domain}
\end{equation}
The original point also satisfies $s_1\ge0$ for $\rho\ge1/2$,
but the paths on that half use the weaker system without it.
The guarded alternatives specify how a segment inequality is chosen
at the start; each later path will state which guards it needs
throughout the move. We have dropped $E_2$ and the remaining
weighted comparisons. We have also dropped the lower bound
$A_0\ge r$, which has not been proved for this reduction.
The second-round weights have therefore disappeared, leaving only
$(r,\rho,u,v)$.

We next prove $u<1$, so that $h_2$ always uses its uncapped formula
in this reduced problem. For $\rho\le1/2$, assume $u\ge u_c$.
The column-numerator checks
\eqref{eq:preliminary-column-signs}, using $S>r$, force $N_C<0$.
The opposite guarded alternative in~\eqref{eq:reduced-domain}
is then impossible, and the first gives $L\ge0$. The negative
coefficients~\eqref{eq:preliminary-shifted-determinant} contradict
this inequality. Thus $u<u_c<1$ on this half.

For $\rho\ge1/2$, replace both capped expressions by
\eqref{eq:uncapped-common-bounds} and define
\begin{equation}
 \begin{aligned}
 B_{\mathrm{mass}}^{\mathrm{raw}}
    &=(1-r)\bar S+r\bar T-r
      =\beta_0^{\mathrm{mass}}+\beta_1^{\mathrm{mass}}v,\\
 L&=\ell_0+\ell_1v.
 \end{aligned}
 \label{eq:weight-u-barrier-coefficients}
\end{equation}
The four coefficients are the constant and linear coefficients in
$v$, and depend on $(r,\rho,u)$. Since $S\le\bar S$ and $T\le\bar T$,
$B_{\mathrm{mass}}\le B_{\mathrm{mass}}^{\mathrm{raw}}$.
The unit \nolinkurl{fixed-v-and-u1}\footnote{Checker:
\nolinkurl{proof/bound/global/replay_preclip_left_projection_signs.py}.\label{fn:checker-fixed-v-and-u1}} certifies, on
$619/2000\le r\le1/3$, $1/2\le\rho\le1-r$, $1\le u\le1/k$,
\begin{equation}
 \beta_1^{\mathrm{mass}}>0,\qquad \ell_1<0,\qquad
 \Omega=\beta_0^{\mathrm{mass}}\ell_1
                  -\beta_1^{\mathrm{mass}}\ell_0>0.
 \label{eq:weight-left-u-barrier}
\end{equation}
If $L\ge0$, its affine formula gives $v\le-\ell_0/\ell_1$.
Substitution into the increasing affine expression
in~\eqref{eq:weight-u-barrier-coefficients} yields
\[
 B_{\mathrm{mass}}\le B_{\mathrm{mass}}^{\mathrm{raw}}
       \le\Omega/\ell_1<0,
\]
contradicting~\eqref{eq:reduced-domain}.

The same unit supplies the explicit alternative for
$B_{\mathrm{mix}}\ge0$. Write
$B_{\mathrm{mix}}=b_0^{\mathrm{mix}}+
b_1^{\mathrm{mix}}v+b_2^{\mathrm{mix}}v^2$ by collecting powers of $v$.
On the same $(r,\rho,u)$ domain, it certifies
\begin{equation}
 \begin{aligned}
 Q_0&=(\beta_1^{\mathrm{mass}})^2b_0^{\mathrm{mix}}
   -\beta_0^{\mathrm{mass}}\beta_1^{\mathrm{mass}}b_1^{\mathrm{mix}}
   +(\beta_0^{\mathrm{mass}})^2b_2^{\mathrm{mix}}<0,\\
 Q_1&=\beta_1^{\mathrm{mass}}b_1^{\mathrm{mix}}
                    -2\beta_0^{\mathrm{mass}}b_2^{\mathrm{mix}}<0,\\
 Q_2&=b_2^{\mathrm{mix}}<0.
 \end{aligned}
 \label{eq:weight-opposite-u-barrier}
\end{equation}
The nonnegative coefficient bound in~\eqref{eq:reduced-domain}
implies $B_{\mathrm{mass}}^{\mathrm{raw}}\ge0$, hence
$v\ge v_W=-\beta_0^{\mathrm{mass}}/\beta_1^{\mathrm{mass}}$.
Putting $d=v-v_W\ge0$ gives the exact identity
\[
 B_{\mathrm{mix}}(v_W+d)
   =\frac{Q_0}{(\beta_1^{\mathrm{mass}})^2}
        +\frac{Q_1}{\beta_1^{\mathrm{mass}}}d+Q_2d^2<0.
\]
This contradicts the opposite segment inequality. Neither argument
uses a sign of $N_C$, so both are valid after the weight change.
We have proved $u<1$ throughout~\eqref{eq:reduced-domain}.

Finally, reaching the cap $h_2=1$ would require
\begin{equation}
 u\ge\frac{r(1+\rho)}{\rho(1-r)}
       \ge\frac{r(2-r)}{(1-r)^2}>1.
 \label{eq:weight-h2-cap-inactive}
\end{equation}
The middle inequality uses $\rho\le1-r$; the last is equivalent
to $4r-2r^2-1>0$, which is already positive at $r=3/10$ and
increases for $r\le1/3$. Since $u<1$, $h_2$ is uncapped.
Only the change of formula for $h_1$ at $v=v_0$ remains.
The next paths show how every remaining point reaches that boundary
or a boundary system that is excluded directly.

\subsection{Paths to the switch boundary}
\label{app:paths}

We now show that a point satisfying the reduced necessary conditions
can be moved to $v=v_0$, or to one of two boundaries excluded below.
The moves vary scalar variables and preserve the specified
inequalities, producing new points in the necessary system.
Throughout, $r$ and $\rho$ are fixed. With $m,n,z,A_0$ defined in
\eqref{eq:weight-balancing-values}, the conditions used by the paths are
\begin{equation}
 \begin{gathered}
 r_0\le r\le1/3,\qquad r\le\rho\le1-r,\qquad
 k\le u<1,\quad k\le v\le1/k,\qquad S>r>T,\\
 m=S-r>0,\quad n=r-T>0,\quad z=n/m,\quad
 A_0=z/(1+z),\qquad T,s_2\ge0,\\
 G\ge0,\quad B_{\mathrm{mass}}\ge0,\quad E\ge0,\qquad
 L\ge0\ \text{or}\
 \bigl(N_R<0\le N_C\ \text{and}\ B_{\mathrm{mix}}\ge0\bigr).
 \end{gathered}
 \label{eq:path-conditions}
\end{equation}
The three inequalities involving $G,B_{\mathrm{mass}},E$ are
\eqref{eq:relaxed}; the determinant polynomials are defined in
\eqref{eq:mixing-determinants}. When $\rho\le1/2$, we also use
$s_1\ge0$. No such lower bound is imposed for $\rho>1/2$.
For differentiating $E$, it is convenient to define
\begin{equation}
 F=E/z,\qquad R_s=r-s_1>0,\qquad A_h=h_2-r>0.
 \label{eq:path-derivative-quantities}
\end{equation}
Since $z>0$, $F$ and $E$ have the same sign. The symbol $F$ here
is distinct from the candidate regrets $F_j^i$.
The preliminary bound $s_1<r$ gives $R_s>0$, and
$G=m(A_h-zR_s)\ge0$ gives $A_h\ge zR_s>0$.

The switch value lies within the allowed range of $v$. Indeed,
$v_0/k=(1+\sigma)/\sigma>1$, and
$v_0\le1/k$ is equivalent to
$r^2(1+\sigma)\le\sigma(1-r)^2$.
The latter follows from $\sigma\ge r$ and $r\le1/3$.
Thus moving toward $v_0$ cannot leave the range in
\eqref{eq:path-conditions} through an endpoint of $v$.

First suppose $v\ge v_0$, so the definition of $h_1$ in
\eqref{eq:scalar-caps} gives $h_1=1$. We keep $u$ fixed and
decrease $v$ to $v_0$. On the domain \eqref{eq:path-conditions}
with $v\ge v_0$, the unit \nolinkurl{postclip-envelope}\footnote{Checker:
\nolinkurl{proof/bound/global/replay_postclip_crossing_certificate.py}.\label{fn:checker-postclip-envelope}} gives
\begin{equation}
 z_v>0,\qquad F=0\ \Longrightarrow\ F_v<0.
 \label{eq:capped-path-signs}
\end{equation}
A subscript denotes a partial derivative with the other variables
fixed. At a first crossing from $F\ge0$ to $F<0$ as $v$ decreases,
the directional derivative would be nonpositive, whereas
\eqref{eq:capped-path-signs} gives $-F_v>0$. Hence $E\ge0$ is
preserved, including when the initial point satisfies $E=0$.
The variables $s_1,h_2$ are constant during this move, and
\begin{equation}
 G=m(A_h-zR_s),\qquad
 B_{\mathrm{mass}}=m[(1-r)-rz].
 \label{eq:path-positive-factors}
\end{equation}
Since $z$ decreases, both inequalities in
\eqref{eq:path-positive-factors} remain valid while $m>0$.
On the same domain, using $T\ge0$ and $u<1$, the unit
\nolinkurl{postclip-knots}\footnote{Checker:
\nolinkurl{proof/bound/global/replay_postclip_knot_direction.py}.\label{fn:checker-postclip-knots}} gives the determinant signs
\begin{equation}
 L_v<0,\qquad (B_{\mathrm{mix}})_v<0,\qquad
 (N_R)_v\ge0,\qquad (N_C)_v<0.
 \label{eq:capped-determinant-signs}
\end{equation}
Decreasing $v$ therefore preserves $L\ge0$ in the first alternative
of \eqref{eq:path-conditions}. In the second alternative it preserves
$B_{\mathrm{mix}}\ge0$ and $N_R<0\le N_C$.
The first alternative uses $L\ge0$ alone and does not require the
initial signs of $N_R,N_C$ to persist.

The unit \nolinkurl{postclip-orientation}\footnote{Checker:
\nolinkurl{proof/bound/global/replay_postclip_orientation_certificate.py}.\label{fn:checker-postclip-orientation}} gives $T<r$ and
checks the endpoint identities on this domain. Since
$T_v=s_1-r<0$, decreasing $v$ increases $T$ and preserves $T\ge0$.
The inequality $s_2\ge0$ is also preserved. If $\chi_1\le0$,
decreasing $v$ increases $s_2$. If $\chi_1>0$, then $\rho>k$ and
the minimum of $s_2$ during the move is attained at $v_0$; the identity
\begin{equation}
 \rho\sigma(1-r)\left.s_2\right|_{v=v_0}
 =r(1-3r)+(1-r)(\rho-k)(1-r-\rho)\ge0
 \label{eq:capped-s2-endpoint}
\end{equation}
establishes that minimum's nonnegativity.
At a possible limiting point, $n=0$ would mean $T=r$, contrary
to the cap bound. If $m=0$ while $n>0$, then $z=n/m$ would diverge,
contrary to its decrease in \eqref{eq:capped-path-signs}.
The denominator $m=S-r$ therefore remains positive throughout
this move, including its endpoint.
The endpoint satisfies $S>r>T$, $0<A_0\le1-r$, $h_2>r$, $E\ge0$,
and one determinant alternative in \eqref{eq:path-conditions}.

Now suppose $v<v_0$, so $h_1$ equals the affine expression inside
the minimum in \eqref{eq:scalar-caps}. We keep $v$ fixed and
increase $u$ until a determinant reaches zero. The unit
\nolinkurl{fixed-v-and-u1}\textsuperscript{\ref{fn:checker-fixed-v-and-u1}} reconstructs the source and derivative
signs used in this move.
Here $S_u=s_2-r<0$,
$T_u=\rho(1-r)/\sigma+v\chi_2$, and
\begin{equation}
 z_u=\frac{-T_u+z(r-s_2)}m,\qquad
 T_uR_s-A_h(r-s_2)>0,\qquad
 \partial_u(A_h/R_s)>0.
 \label{eq:fixed-v-source-signs}
\end{equation}
The inequality $G\ge0$ in \eqref{eq:path-positive-factors} is
$z\le A_h/R_s$. Substituting it into
\eqref{eq:fixed-v-source-signs} gives $z_u<0$; the increasing upper
bound $A_h/R_s$ then preserves $G\ge0$.
For $E\ge0$, differentiation of $F=E/z$ gives
\begin{equation}
 F_u=v\bigl(\chi_2+(h_2)_u/z\bigr)
       -(r+vA_h/z^2)z_u,
 \qquad (h_2)_uR_s+\chi_2A_h>0.
 \label{eq:fixed-v-envelope-signs}
\end{equation}
When $\chi_2\ge0$, the first bracket is positive because
$(h_2)_u=\rho(1-r)/\sigma>0$. When $\chi_2<0$, the positive
quantities $A_h,R_s,z$ and $z\le A_h/R_s$ give
\[
 \chi_2+(h_2)_u/z
 \ge[\chi_2A_h+(h_2)_uR_s]/A_h>0.
\]
Both terms in \eqref{eq:fixed-v-envelope-signs} are thus positive,
so $F_u>0$ preserves $E\ge0$. The decrease of $z$ also preserves
$B_{\mathrm{mass}}\ge0$ through \eqref{eq:path-positive-factors}.
These sign checks cover the full range $r\le\rho\le1-r$ and allow
the affine expression for $s_1$ to become negative when $\rho>1/2$.

During this move, $s_2$ is constant and the checked sign $T_u>0$
preserves $T,s_2\ge0$. For $\rho\le1/2$, $\chi_2\ge0$ also
preserves $s_1\ge0$. In the alternative $N_R<0\le N_C$, the
checked sign $(N_R)_u<0$ preserves $N_R<0$. If $N_C$ reached zero,
then \eqref{eq:mixing-determinants} would give
$B_{\mathrm{mix}}=rvN_R<0$, contradicting the determinant inequality.
The previously proved implication $S\ge r,\ u\le1\Rightarrow T<r$
excludes $n=0$. A limit with $m=0$ and $n>0$ would contradict the
decrease of $z$. Finally, the bounds using $B_{\mathrm{mass}}$
and the determinant at $u=1$ exclude that endpoint, as established
in the weight reduction. Thus the move reaches $L=0$ or
$B_{\mathrm{mix}}=0$ before any other condition can fail.

For $\rho\le1/2$, the unit \nolinkurl{canonical-alignment}\footnote{Checker:
\nolinkurl{proof/bound/preclip_canonical_alignment.py}.\label{fn:checker-canonical-alignment}} gives
$S\ge r,\ u\le1,\ v<v_0\Rightarrow N_C<0$ on the range in
\eqref{eq:path-conditions}; the first segment condition therefore gives
$L\ge0$. We choose a contact approached from $L>0$ as $u$ increases,
so $L_u\le0$ there. If the initial point already has $L=0$ and
$L_u\le0$, it is used immediately. If instead $L=0$ and $L_u>0$,
the move first enters $L>0$ and continues to the next contact.
We now justify the additional bounds needed to continue along
$L=0$. They are proved before the continuation checker is applied.

First write the polynomial $L$, which is affine in $v$, as
\begin{equation}
 L=\ell_0+\ell_1v,\qquad
 \ell_0=\sigma\chi_2(\sigma+\rho u)(\chi_1+cu),\qquad \ell_1=L_v.
 \label{eq:left-affine-graph}
\end{equation}
The inequalities $\chi_2\ge0$ and $s_1\ge0$ give
$\chi_1+cu=s_1+(\rho r/\sigma)u>0$.
Consequently $\ell_0>0$, except at
$r=1/3$, $\rho=\sigma=1/2$. At that corner
$L=-u^2v/18<0$, so no contact is possible. At a contact with $v>0$,
\begin{equation}
 \ell_1=-\ell_0/v<0,\qquad
 d_L=-\rho L_v=\rho\ell_0/v>0,\qquad v=\rho\ell_0/d_L.
 \label{eq:left-graph-denominator}
\end{equation}
This proves that division in the affine graph is valid before any
additional bound on $z$ is used.

To obtain $A_0>r$, define
\begin{equation}
 W_2=rS+(1-r)T-r,\qquad A_0-r=-W_2/(S-T).
 \label{eq:opposite-weight-test}
\end{equation}
On the present domain, the checked bound $T_u\ge r$ and $s_2\ge0$
imply
\[
 (W_2)_u=r(s_2-r)+(1-r)T_u
 \ge rs_2+r(1-2r)>0.
\]
The unit \nolinkurl{canonical-A_0-stop}\footnote{Checker:
\nolinkurl{proof/bound/simplify/replay-A_0-mass-stop-exclusion.py}.\label{fn:checker-canonical-A_0-stop}} gives $W_2(9/10,k)<0$ and
$W_2(9/10,v_0)<0$. Since $W_2$ is affine in $v$, it is negative
at $u=9/10$ throughout $k\le v\le v_0$.
Thus $W_2\ge0$ requires $u>9/10$.
Writing $W_2=w_0+w_1v$, where $w_0,w_1$ are its constant and
linear coefficients, the checks on $9/10<u<1$ give
\begin{equation}
 w_0<0,\qquad \ell_1<0,\qquad
 \Omega_2=w_0\ell_1-w_1\ell_0>0.
 \label{eq:opposite-weight-coefficients}
\end{equation}
If $W_2\ge0$, then $w_1>0$ and
$Lw_1=\ell_1W_2-\Omega_2<0$, contradicting $L=0$.
By \eqref{eq:opposite-weight-test}, this proves
\begin{equation}
 A_0>r,\qquad z>k.
 \label{eq:left-lower-weight}
\end{equation}
The equivalence uses $A_0=z/(1+z)$ and $k=r/(1-r)$.

To bound $z$ from above, use the quadratic obtained from $E$ in
\eqref{eq:relaxed}:
\begin{equation}
 \mathcal E(t)=vA_h+[c-v(r+R_s)]t-rt^2\quad(t\in\mathbb R),
 \qquad E=\mathcal E(z).
 \label{eq:envelope-quadratic}
\end{equation}
On the graph in \eqref{eq:left-graph-denominator}, the unit
\nolinkurl{canonical-E1}\footnote{Checker:
\nolinkurl{proof/bound/simplify/replay-arc-e1-source-budget.py}.\label{fn:checker-canonical-E1}} proves
\begin{equation}
 \mathcal P_{\mathrm{arc}}
 =\rho^2\sigma^2d_L[-\mathcal E(1)-2G]>0.
 \label{eq:envelope-at-one}
\end{equation}
The identity defines $\mathcal P_{\mathrm{arc}}$ after graph
substitution; its clearing factor is positive.
Its 600 rational Bernstein coefficients are checked on
$r\in[619/2000,1/3]$, $\rho\in[3/10,1/2]$, $u\in[2/5,1]$,
a box containing the present domain because $r\ge r_0>619/2000$,
$\rho\ge r>3/10$, and $u\ge k>2/5$.
Of the 600 coefficients, 582 are individually positive. The
remaining terms form six quadratic groups. For each group with
coefficients $b_2,b_3,b_4$, the checker verifies $b_2,b_4>0$ and
$3b_2b_4-2b_3^2>0$. For its unit coordinate $w\in[0,1]$, the factor
\[
 6b_2(1-w)^2+4b_3w(1-w)+b_4w^2
\]
is positive because its matrix determinant is
$2(3b_2b_4-2b_3^2)>0$. The other Bernstein factors are nonnegative.
This proves \eqref{eq:envelope-at-one} without requiring each
individual coefficient to be positive.
Since $G\ge0$, it follows that $\mathcal E(1)<0$.
The identity
$\mathcal E(z)=z\mathcal E(1)+(1-z)(vA_h+rz)$
would then give $E<0$ for $z\ge1$. Hence
\begin{equation}
 z<1,\qquad B_{\mathrm{mass}}=m[(1-r)-rz]>mc>0.
 \label{eq:left-upper-weight}
\end{equation}
The second inequality shows that the mass condition is automatic
on this part of $L=0$.

One further bound excludes small $v$. On the branch $v<v_0$,
$S_u=s_2-r<0$ and $S_v>0$, so $u\ge k$ and $v\le4/5$ would imply
$S\le\left.S\right|_{u=k,v=4/5}$. The unit \nolinkurl{canonical-boundary-stops}\footnote{Checker:
\nolinkurl{proof/bound/simplify/replay-preclip-stop-events.py}.\label{fn:checker-canonical-boundary-stops}} checks the fifteen
coefficients proving $r-\left.S\right|_{u=k,v=4/5}>0$. Consequently,
\begin{equation}
 S\ge r\quad\Longrightarrow\quad v>4/5.
 \label{eq:left-v-lower}
\end{equation}

We can now continue the contact along $L=0$ toward $v_0$.
The preceding calculations justify all conditions of the
5,098-leaf continuation certificate:
\begin{equation}
 \begin{gathered}
 r_0\le r\le1/3,\quad r\le\rho\le1/2,\quad
 L=0,\quad L_u\le0,\quad d_L>0,\\
 m,n>0,\quad T,s_1,s_2\ge0,\quad G,E\ge0,\quad
 k\le u<1,\quad k<z<1,\quad 4/5<v<v_0,\quad k\le v\le1/k.
 \end{gathered}
 \label{eq:continuation-domain}
\end{equation}
On this domain, the unit \nolinkurl{canonical-upper-arc}\footnote{Checker:
\nolinkurl{proof/bound/kink/replay_preclip_bernstein_interval.py}.\label{fn:checker-canonical-upper-arc}} gives
the exact interval and Bernstein checks
\begin{equation}
 L_u<0,\quad L_v<0,\quad u>k,\quad s_1>0,\quad s_2>0,
 \qquad
 E=0\ \Longrightarrow\ E_v-E_uL_v/L_u>0.
 \label{eq:continuation-signs}
\end{equation}
The implicit-function theorem~\cite{rudin1976}
gives a curve $L=0$ with increasing
$v$ and $du/dv=-L_v/L_u<0$. At $E=0$, the last sign in
\eqref{eq:continuation-signs} is precisely $dE/dv>0$ and excludes
exit through $E=0$.

For completeness, we check the other possible exits.
The calculations establishing \eqref{eq:left-lower-weight},
\eqref{eq:left-upper-weight}, and \eqref{eq:left-v-lower} apply
at every point satisfying the same inequalities on $L=0$.
If $G=0$, then $A_h=zR_s$ and
\[
 E=z[c-r(v+z)]<0,
\]
because $v+z>4/5+k>c/r$ for $r\ge r_0$, contradicting $E\ge0$.
If $T=0$, then $z=r/(S-r)<1$ gives $S>2r$, whereas the separate
cap check in \nolinkurl{canonical-boundary-stops}\textsuperscript{\ref{fn:checker-canonical-boundary-stops}} at $T=0$ gives $S<2r$.
The strict signs in \eqref{eq:continuation-signs} exclude
$u=k$, $s_1=0$, $s_2=0$, $L_u=0$, and $L_v=0$.
The implication $S\ge r,\ u\le1\Rightarrow T<r$ excludes $n=0$,
and $z<1$ then excludes $m=0$. Equation~\eqref{eq:left-upper-weight}
excludes $B_{\mathrm{mass}}=0$.
The domain bounds and these strict exclusions prevent a limiting
stop before $v_0$; the continuation therefore reaches $v=v_0$.

For $\rho\ge1/2$, we instead exclude the two contact boundaries
directly. Their common conditions are
\begin{equation}
 \begin{gathered}
 r_0\le r\le1/3,\quad 1/2\le\rho\le1-r,\quad
 k\le u<1,\quad k\le v\le v_0,\\
 S>r>T,\quad T,s_2\ge0,\quad G,B_{\mathrm{mass}},E\ge0.
 \end{gathered}
 \label{eq:large-weight-contact-domain}
\end{equation}
The unit \nolinkurl{high-rho-left}\footnote{Checker:
\nolinkurl{proof/bound/global/replay_preclip_left_interval.py}.\label{fn:checker-high-rho-left}} adds $L=0$ to
\eqref{eq:large-weight-contact-domain} and needs no signs on
$N_R,N_C$. The unit \nolinkurl{high-rho-below}\footnote{Checker:
\nolinkurl{proof/bound/global/replay_preclip_below_interval.py}.\label{fn:checker-high-rho-below}} adds
$B_{\mathrm{mix}}=0$ and $N_R<0\le N_C$.
Both domains allow $s_1<0$ and $A_0<r$.
The two exclusion certificates have 19,771 and 4,945 leaves;
the continuation certificate has 5,098 leaves.
Each leaf is checked by exact rational interval or Bernstein bounds.
Substitution in an affine graph such as $L=0$ is made only after
its linear coefficient has a certified nonzero sign. A box containing
a possible zero coefficient is split or checked without division.
The independent replayers check all leaf inequalities and the
complete binary partitions, including these division conditions.
Together, the paths and the two exclusions cover every point
satisfying \eqref{eq:path-conditions}.

\subsection{Exclusion on the switch boundary}
\label{app:switch}

We now exclude the endpoints $v=v_0$ produced by the preceding
paths when $r>r_*$. The proof must exclude simultaneous inequalities:
the nonexistence of a common equality root alone would not suffice.
We use only the following conditions on this boundary:
\begin{equation}
 \begin{gathered}
 r_0\le r\le1/3,\quad r\le\rho\le1-r,\quad
 v=v_0,\quad 1/2<\eta\le1-r,\\
 S>r>T,\quad 0<A_0\le1-r,\quad h_2>r,\quad E\ge0,
 \end{gathered}
 \label{eq:switch-domain}
\end{equation}
together with one determinant alternative from
\eqref{eq:path-conditions}. The condition $\eta>1/2$ follows from
$u<1$. Thus $h_2$ equals its affine expression in
\eqref{eq:scalar-caps}, as in the definition of $N_2$.
Neither $E_2\ge0$ nor $s_1,s_2\ge0$ is required here.

The polynomial definitions in \eqref{eq:root-polys} and the
determinant definitions in \eqref{eq:mixing-determinants} agree
on this boundary. For the second determinant, define $N_B$ by
\begin{equation}
 [q_\dagger(1-r)-J_R](C_k-r)+q_\dagger r(R_k-r)
       =\frac{N_B(r,\rho,\eta)}{\sigma D_cJ^2}.
 \label{eq:second-boundary-numerator}
\end{equation}
The left side is the intersection determinant for the segment
toward $(0,q_\dagger)$; $D_c,J,J_R,C_k,R_k$ were defined in the
boundary polynomial calculation. Exact substitution gives
\begin{equation}
 \begin{aligned}
 N_1&=\rho\sigma\eta^2(1-r)L,\\
 N_B&=\sigma^2\eta(1-r)^2B_{\mathrm{mix}},\\
 \Psi(A_0)&=(1-A_0)^2E=N_2/D_2.
 \end{aligned}
 \label{eq:switch-polynomial-scales}
\end{equation}
Every factor multiplying a determinant, and $D_2$, is positive
under \eqref{eq:switch-domain}. Hence $L\ge0$ gives $N_1\ge0$,
$B_{\mathrm{mix}}\ge0$ gives $N_B\ge0$, and $E\ge0$ gives $N_2\ge0$.

The alternative requiring $N_B$ can occur only when $\rho>1/2$.
To see this on $r\le\rho\le1/2$, write
$T=T_0+T_1/\eta$ on $v=v_0$, where $T_0,T_1$ are its constant
and reciprocal coefficients. Define
$\eta_T=T_1/(r-T_0)$ and the polynomial
$\widehat C=\sigma D_cJ(C_k-r)$.
The unit \nolinkurl{full-rho-kink-face}\footnote{Checker:
\nolinkurl{proof/bound/critical/replay_full_rho_kink_face.py}.\label{fn:checker-full-rho-kink-face}} checks the boundary
polynomial normalizations in~\eqref{eq:root-polys} and
\eqref{eq:second-boundary-numerator}, and gives the following
signs on this smaller-weight domain:
\begin{equation}
 T_1>0,\quad r-T_0>0,\quad
 \partial_\eta\widehat C<0,\quad \widehat C(\eta_T)<0.
 \label{eq:small-switch-orientation}
\end{equation}
Since $T<r$ implies $\eta>\eta_T$, these inequalities give
$C_k<r$, or $N_C<0$ by \eqref{eq:mixing-numerators}.
The alternative $N_R<0\le N_C$ is therefore impossible there.

Both determinant alternatives will use the same sign argument
for $\Psi(A_0)$ as $\eta$ increases. With its cap coefficients
fixed, the quadratic $\Psi$ satisfies
$\Psi(0)=v(h_2-r)>0$ and $\Psi(1)=-r<0$.
It has a unique zero in $(0,1)$, and $\Psi_a<0$ at that zero.
Holding $a$ fixed while varying $\eta$ gives
\begin{equation}
 \left.\partial_\eta\Psi(a)\right|_a
 =-\frac{(1-a)v}{\eta^2}
   \left[a\chi_2+(1-a)\frac{\rho(1-r)}\sigma\right]<0
 \qquad(0\le a\le1-r).
 \label{eq:switch-fixed-argument-derivative}
\end{equation}
The bracket is affine in $a$ and has positive endpoint values
$\rho(1-r)/\sigma$ and $(1-r)(1-2r)$.
Consequently, on any interval satisfying the conditions in
\eqref{eq:switch-domain} other than $E\ge0$, and also
$(A_0)_\eta>0$, a zero of $\Psi(A_0)$ satisfies
\begin{equation}
 \frac{d}{d\eta}\Psi(A_0)
 =\left.\partial_\eta\Psi(a)\right|_{a=A_0}
      +\Psi_a(A_0)(A_0)_\eta<0.
 \label{eq:switch-zero-derivative}
\end{equation}
A negative value therefore cannot become nonnegative as $\eta$
increases on that interval. Below we check the interval conditions
separately for $N_1$ and $N_B$.

We first treat $N_1\ge0$. Write
\begin{equation}
 N_1=a_2\eta^2+b_2\eta+c_2,\qquad
 \Delta=(b_2^2-4a_2c_2)/\sigma^2,
 \label{eq:switch-discriminant}
\end{equation}
where $a_2,b_2,c_2$ are functions of $(r,\rho)$.
The quadratic is strictly concave throughout the physical domain.
Indeed, expansion gives
\[
 a_2=r\bigl[(1-r)\sigma^3(1-2\sigma)-r^2\rho^2(1+\sigma)\bigr].
\]
For $\rho\le1/2$, its first term is nonpositive and its second
is negative. For $\rho\ge1/2$, the bounds
$\sigma^3(1-2\sigma)\le1/32$ and $\rho^2(1+\sigma)\ge1/4$ give
$a_2/r\le(1-r)/32-r^2/4<0$ for $r\ge3/10$.

Before using a root of $N_1$, the unit \nolinkurl{full-rho-kink-face}\textsuperscript{\ref{fn:checker-full-rho-kink-face}}
excludes two larger parts of its domain:
\begin{equation}
 \begin{aligned}
 r_0\le r\le1/3,\quad r\le\rho\le21/50
    &\ \Longrightarrow\ b_2^2-4a_2c_2<0,\\
 8/25\le r\le1/3,\quad r\le\rho\le1/2,\quad r\le\eta\le1-r
    &\ \Longrightarrow\ N_1<0,\\
 8/25\le r\le1/3,\quad 1/2\le\rho\le1-r
    &\ \Longrightarrow\\
       (N_1)_\eta(1-r)>0,\quad N_1(1-r)<0.
 \end{aligned}
 \label{eq:switch-preliminary-signs}
\end{equation}
The first line excludes real roots. The second directly contradicts
$N_1\ge0$. For the third, concavity gives
$(N_1)_\eta(\eta)\ge(N_1)_\eta(1-r)>0$ when $\eta\le1-r$;
thus $N_1(\eta)\le N_1(1-r)<0$ there as well.

The remaining parameter pairs lie in the rational rectangles
\begin{equation}
 \mathcal C_-=[r_0,8/25]\times[21/50,1/2],\qquad
 \mathcal C_+=[r_0,8/25]\times[1/2,1-r_0],
 \label{eq:switch-rectangles}
\end{equation}
with coordinates $(r,\rho)$; $r\le\rho\le1-r$ is imposed for
a physical point. On both rectangles the checks give
$N_1(1/2)<0$ and $(N_1)_\eta(1/2)>0$.
Therefore $N_1\ge0$ requires $\Delta\ge0$ and
\begin{equation}
 \eta\ge\eta_-(r,\rho)>1/2,\qquad
 \eta_-(r,\rho)=\frac{-b_2+\sigma\sqrt\Delta}{2a_2}.
 \label{eq:switch-smaller-root}
\end{equation}
The displayed expression is the smaller root because $a_2<0$;
when $\Delta=0$, the two roots coincide.

The derivative checks of \nolinkurl{full-rho-kink-face}\textsuperscript{\ref{fn:checker-full-rho-kink-face}} on
\eqref{eq:switch-rectangles} give
\begin{equation}
 S_\eta>0,\quad T_\eta<0,\quad \left.S\right|_{\eta=1/2}>r>\left.T\right|_{\eta=1/2},\qquad
 (A_0)_\eta>0.
 \label{eq:first-switch-interval-signs}
\end{equation}
The last sign holds throughout $[\eta_-,\eta]$ for a proposed
feasible endpoint $\eta$. For the lower rectangle, positivity of $(A_0)_\eta$ uses
$\Delta\ge0$. For the upper rectangle, it uses
$N_1(1-r)\ge0$, which follows from $N_1(\eta)\ge0$ and the
checked positive derivative $(N_1)_\eta(1-r)$.
Throughout $[\eta_-,\eta]$, the first three signs give $S>r>T$.
The increasing $A_0$ stays positive and at most its final value
$1-r$. The explicit formula for $h_2$ gives $(h_2)_\eta<0$, so
$h_2$ stays above its final value $r$.
These facts justify applying \eqref{eq:switch-zero-derivative}
on the whole interval.

It remains to establish a negative starting value at $\eta_-$.
The resultant~\cite{cox2015}, which vanishes whenever $N_1,N_2$ share a root in
$\eta$, satisfies the exact identity
\begin{equation}
 \operatorname{Res}_{\eta}(N_1,N_2)
       =\sigma^6\rho^2r^2R_{\mathrm{resultant}}(r,\rho).
 \label{eq:switch-resultant}
\end{equation}
This defines the polynomial factor $R_{\mathrm{resultant}}$;
\nolinkurl{full-rho-kink-face}\textsuperscript{\ref{fn:checker-full-rho-kink-face}} reconstructs the resultant identity.
At the isolated solution $(r_*,\rho_*,\eta_*)$ of
\eqref{eq:natural-root} in \eqref{eq:app-root-box}, exact algebra gives
\begin{equation}
 R_{\mathrm{resultant}}(r_*,\rho)=(\rho-\rho_*)^2W(\rho),\qquad \deg W=18,
 \qquad
 \det\frac{\partial(N_1,N_2)}{\partial(r,\eta)}
          (r_*,\rho_*,\eta_*)\ne0.
 \label{eq:switch-resultant-at-root}
\end{equation}
The determinant permits solving locally for $(r,\eta)$ as functions
of $\rho$. The equation $H_{\mathrm{root}}=0$ then gives $r'(\rho_*)=0$.
Differentiating the resultant along this curve proves that its
$\rho$ derivative also vanishes there, yielding the double factor.
This determinant differs from the $3\times3$ Jacobian used for
root isolation.
The unit \nolinkurl{full-rho-kink-face}\textsuperscript{\ref{fn:checker-full-rho-kink-face}} reconstructs all nineteen
Bernstein coefficients of $W$ on $[r_*,1-r_*]$ and checks that
they are strictly negative. It also checks the exact signs
$\partial_r R_{\mathrm{resultant}}<0$ on both rectangles in
\eqref{eq:switch-rectangles}. Hence
\begin{equation}
 (r,\rho)\in\mathcal C_-\cup\mathcal C_+,\quad
 r>r_*,\quad r\le\rho\le1-r
 \quad\Longrightarrow\quad R_{\mathrm{resultant}}(r,\rho)<0.
 \label{eq:switch-resultant-negative}
\end{equation}
Indeed, such a $\rho$ lies in $[r_*,1-r_*]$, and increasing $r$
from $r_*$ strictly decreases the resultant factor.
Thus $N_1,N_2$ cannot have a common root in this domain.

To determine the sign, use the exact reference check in
\nolinkurl{full-rho-kink-face}\textsuperscript{\ref{fn:checker-full-rho-kink-face}}:
\begin{equation}
 0.55278962<\eta_-(r_*,1/2)<0.55278965,\qquad
 N_2(r_*,1/2,\eta_-)<0.
 \label{eq:switch-reference-sign}
\end{equation}
The signs $\Delta_\rho>0$ and $\Delta_r<0$ on
\eqref{eq:switch-rectangles} connect every real smaller-root point
to this reference. In the lower rectangle, first increase $\rho$
to $1/2$, then decrease $r$ to $r_*$. Both moves increase $\Delta$.
In the upper rectangle, first decrease $r$ to $r_*$, then decrease
$\rho$ to $1/2$. The last move stays in $\Delta>0$ because
$\Delta(r_*,\rho)\ge\Delta(r_*,1/2)>0$ for $\rho\ge1/2$.
The paths obey $r\le\rho\le1-r$. Their portions at $r>r_*$ have
nonzero resultant by \eqref{eq:switch-resultant-negative}; the
portion at $r=r_*$ has $\rho\ge1/2>\rho_*$ and hence nonzero
resultant by \eqref{eq:switch-resultant-at-root}.
Continuity of $\eta_-$ and $N_2$ therefore propagates the negative
reference sign to every such point. Points with $\Delta=0$, where
the two roots coincide, follow by continuity; $N_2=0$ there would
also contradict the nonzero resultant.
Thus \eqref{eq:switch-polynomial-scales} gives $\Psi(A_0)<0$
at $\eta_-$. Equations~\eqref{eq:first-switch-interval-signs} and
\eqref{eq:switch-zero-derivative} exclude a nonnegative value at
any larger feasible $\eta$. This excludes the simultaneous
conditions $N_1\ge0$, $N_2\ge0$.

We now treat $N_B\ge0$ on $1/2\le\rho\le1-r$.
The polynomial in \eqref{eq:second-boundary-numerator} is affine
in $\eta$. On $r_0\le r\le1/3$, $1/2\le\rho\le1-r$,
\nolinkurl{full-rho-kink-face}\textsuperscript{\ref{fn:checker-full-rho-kink-face}} gives
\begin{equation}
 (N_B)_\eta>0,\quad N_B(1/2)<0,\qquad
 \eta_B=-\frac{N_B(r,\rho,0)}{(N_B)_\eta}>1/2,
 \qquad N_B\ge0\ \Longrightarrow\ \eta\ge\eta_B.
 \label{eq:second-switch-root}
\end{equation}
The zero $\eta_B$ is uniquely determined by this affine equation.
Define $\mathcal G_B$ by the exact substitution identity
\begin{equation}
 N_2(r,\rho,\eta_B)
       =\frac{\sigma^3r\mathcal G_B(r,\rho)}{[(N_B)_\eta]^3}.
 \label{eq:second-switch-substitution}
\end{equation}
The denominator is positive. To check $\mathcal G_B$ on its full
physical triangle, substitute
$r=r_0+(1/3-r_0)s$ and $\rho=1/2+(1/2-r)t$, where $s,t\in[0,1]$.
After clearing positive rational constants,
\nolinkurl{full-rho-kink-face}\textsuperscript{\ref{fn:checker-full-rho-kink-face}} reconstructs all 612 Bernstein
coefficients and checks that they are strictly negative. Hence
\begin{equation}
 N_2(r,\rho,\eta_B)<0\qquad
 (r_0\le r\le1/3,\ 1/2\le\rho\le1-r).
 \label{eq:second-switch-negative}
\end{equation}

The interval conditions for this alternative must be checked on
the whole triangle, since \eqref{eq:switch-rectangles} covers less.
The checks of \nolinkurl{full-rho-kink-face}\textsuperscript{\ref{fn:checker-full-rho-kink-face}} on this triangle give
\begin{equation}
 \begin{gathered}
 S_\eta>0,\quad T_\eta<0,\quad \left.S\right|_{\eta=1/2}\ge r>\left.T\right|_{\eta=1/2},\\
 N_B(1-r)\ge0\quad\Longrightarrow\quad
 (A_0)_\eta>0\text{ for every }\eta\in[\eta_B,1-r].
 \end{gathered}
 \label{eq:second-switch-interval-signs}
\end{equation}
A feasible endpoint has $\eta_B\le\eta\le1-r$, so the increasing
$N_B$ gives $N_B(1-r)\ge0$.
The equality $\left.S\right|_{\eta=1/2}=r$ occurs only at $(r,\rho)=(1/3,2/3)$;
$\eta_B>1/2$ then ensures $S>r$ on the interval used.
As in the first alternative, $A_0$ increases toward a final value
at most $1-r$, and $h_2$ decreases toward a final value greater
than $r$. Thus all hypotheses of \eqref{eq:switch-zero-derivative}
hold from $\eta_B$ to the proposed endpoint. The starting sign
\eqref{eq:second-switch-negative} therefore excludes $N_2\ge0$
at that endpoint. No additional candidate algebraic root is needed.

The two alternatives arising from the paths in~\eqref{eq:path-conditions}
are now both excluded. The alternative $L\ge0$ gives $N_1\ge0$
and $N_2\ge0$, which the first switch argument excludes. The other
alternative has $B_{\mathrm{mix}}\ge0$ and $N_R<0\le N_C$;
\eqref{eq:small-switch-orientation} places it in the larger-weight
domain, where the second switch argument excludes $N_B,N_2\ge0$.
These conclusions use exactly the conditions preserved by the paths.
Combining them with the two direct contact exclusions proves
\begin{equation}
 r_*<r\le1/3\quad\Longrightarrow\quad
 \nexists\boldsymbol\theta,\ \Sigma_r(\boldsymbol\theta).
 \label{eq:global-exclusion}
\end{equation}

The proof uses computation for three stated purposes. Numerical
optimization suggests the boundary and the active equations.
Symbolic calculation constructs the specified nonnegative combinations,
cut points, determinant numerators, and resultant identities.
Exact checks establish the signs, root enclosure, and regional
coverage required by the mathematical argument.
The repository\codefootref records the inputs and dependencies of
29 verification units. Some units recompute fixed algebraic checks;
dedicated independent replayers reconstruct certificate identities,
coefficient signs, or all leaves and partitions of boundary trees.
Their mathematical premises are the building-block properties,
LP duality, payoff linearity, maximum attainment, and universal
instantiation with the specified strategy types. These checks certify
the arithmetic of the stated proof; they are not a completed
formalization in a theorem-prover kernel. None is a step of
Algorithm~\ref{alg:main}.

\section{Transfer of the approximation errors}
\label{app:errors}

We now prove that the actual errors $\delta_f$ and $\tau$ enter
the regret bound once each. This establishes the implication in
Lemma~\ref{lem:error-transfer} using the same necessary system
\eqref{eq:scalar-system} as in the ideal analysis. The error contracts
for the stationary-point and mixing operations come
from~\cite{ts2008,dfm2023,automating2025,mixing2025}.
Suppose an actual execution satisfies
\begin{equation}
 f(\boldsymbol x_{\mathrm{out}})>r+\delta_f+\tau,
 \qquad r_*<r<1/3.
 \label{eq:finite-error-failure}
\end{equation}
Output selection makes every candidate exceed this threshold.
Substituting any $p,q\in[0,1]$ into \eqref{eq:om} subtracts $\tau$
once. The error-free endpoint comparison at $p=q=0$ also gives
$g\ge f(\boldsymbol x_o)$. Hence
\begin{equation}
 \forall p,q\in[0,1],\quad f(\boldsymbol x(p,q))>r+\delta_f,
 \qquad g>r+\delta_f+\tau.
 \label{eq:finite-error-rectangle-failure}
\end{equation}
Primal optimality of $\sm(y_s^1)$, $\sm(y_s^2)$,
$\sm(y_1^1)$, and $\sm(y_1^2)$ gives the same strict
$r+\delta_f$ lower bound throughout each call's domain with its input
strategy fixed. All four LPs are solved exactly, so their
duality and complementary-slackness properties add no error.

To use stationarity, substitute $(x_s^1,y_s^2)$ and
$(y_s^1,x_s^2)$ into \eqref{eq:properties-stationary}.
With $\lambda_{\mathrm{payoff}},\mu_{\mathrm{payoff}}$ the payoff differences in~\eqref{eq:stationary-payoff-differences}, this yields
\begin{equation}
 G_s=g-\delta_f>r,\qquad
 \rho\lambda_{\mathrm{payoff}}\ge G_s,\qquad \sigma\mu_{\mathrm{payoff}}\ge G_s,
 \qquad r<\rho<1-r.
 \label{eq:finite-error-stationary-value}
\end{equation}
The last interval follows because $\lambda_{\mathrm{payoff}},\mu_{\mathrm{payoff}}\le1$ and
$\rho+\sigma=1$. In particular, both stationary weights are positive.
The notation $G_s$ denotes this single error-adjusted stationary
value; it differs from the polynomial $G$ in \eqref{eq:relaxed}.
The error-adjusted lower-bound values are
\begin{equation}
 \lambda_0=G_s/\rho,\qquad \mu_0=G_s/\sigma.
 \label{eq:finite-error-gap-lower-values}
\end{equation}
They reduce to~\eqref{eq:stationary-gap-lower-values} when
$\delta_f=0$.
Equation~\eqref{eq:ts-best-response} gives the actual errors
\begin{equation}
 \begin{aligned}
 \tau_1&=B_1(x_s^2)-u_1(y_s^1,x_s^2),\\
 \tau_2&=B_2(x_s^1)-u_2(x_s^1,y_s^2),
 \end{aligned}
 \qquad 0\le\tau_1,\tau_2\le\delta_f.
 \label{eq:finite-error-best-responses}
\end{equation}

The dual inequality~\eqref{eq:first-row-instance} for
$\sm(y_s^1)$ is exact. Instantiating it at a column best response
to $y_s^1$ and at $x_s^2$, respectively, gives
\begin{equation}
 F_1^1\le\eta,\qquad
 F_1^1\le1-\eta+\eta\tau_1\le1-\eta+\eta\delta_f.
 \label{eq:finite-error-first-weight}
\end{equation}
Together with $F_1^1>r+\delta_f$, these inequalities give
$\eta>r$ and
$r<1-\eta-\delta_f+\eta\tau_1\le1-\eta$.
Thus $r<\eta<1-r$. The corresponding instantiations of the
dual inequality for $\sm(y_s^2)$ give $r<\xi<1-r$.
Positive first-round weights permit the exact best-response
conclusion \eqref{eq:alignment} at $x_1^2,x_1^1$.
Instantiating the dual inequality for $\sm(y_1^1)$ at $x_1^2$
and at a column best response to $y_1^1$ gives
$F_2^1\le1-\alpha$ and $F_2^1\le\alpha$; the corresponding
instantiations for $\sm(y_1^2)$ are symmetric. We therefore have
\begin{equation}
 r<\rho,\eta,\xi,\alpha,\gamma<1-r.
 \label{eq:finite-error-weight-ranges}
\end{equation}

The row regret of $y_1^1$, the dual strategy returned by
$\sm(y_s^1)$, against $x_s^2$ must be bounded before taking the
minimum with one. Set
$d=B_1(x_s^2)-u_1(y_1^1,x_s^2)\ge0$.
Instantiate the dual inequality~\eqref{eq:first-row-instance}
for $\sm(y_s^1)$ at $x_s^2$. This gives
\begin{equation}
 \begin{aligned}
 r+\delta_f<F_1^1&\le\eta(\tau_1-d)+(1-\eta),\\
 \eta d&<1-\eta-r-\delta_f+\eta\tau_1\le1-\eta-r.
 \end{aligned}
 \label{eq:finite-error-old-payoff-loss}
\end{equation}
Since normalized payoffs also give $d\le1$, we obtain the same
$d\le d_1=\min\{1,(1-\eta-r)/\eta\}$ as in the ideal argument.
The symmetric instantiation of the dual inequality for
$\sm(y_s^2)$ at $x_s^1$ gives the ideal bound $d_2$ for the column
regret. In each case the error is canceled before the minimum
is taken; an additional subtraction from the capped quantity is
unnecessary. The unit \nolinkurl{near-BR-delta-ledger}\footnote{Checker:
\nolinkurl{proof/bound/critical/replay_delta_f_lift_ledger.py}.\label{fn:checker-near-BR-delta-ledger}}
reconstructs this cancellation identity and both cases of the
minimum with one.

The payoff bounds can now be evaluated at
\begin{equation}
 \boldsymbol\lambda_s=(r,G_s,\rho,\eta,\xi).
 \label{eq:finite-error-payoff-parameters}
\end{equation}
For example, instantiate the dual inequality~\eqref{eq:first-row-instance}
for $\sm(y_s^1)$ at $y_s^2$ and use the stationary payoff lower bounds in
\eqref{eq:finite-error-stationary-value}:
\begin{equation}
 \begin{aligned}
 r+\delta_f<F_1^1
 &\le\eta[u_1(y_1^1,y_s^2)-G_s/\rho]
                   +(1-\eta)(1-G_s/\sigma),\\
 1-u_1(y_1^1,y_s^2)
 &<s_1(\boldsymbol\lambda_s)-\delta_f/\eta\le s_1(\boldsymbol\lambda_s).
 \end{aligned}
 \label{eq:finite-error-first-payoff}
\end{equation}
Here $s_1(\boldsymbol\lambda_s)$ is the function in
\eqref{eq:payoff-bound-s} evaluated at the tuple
\eqref{eq:finite-error-payoff-parameters}. Instantiating the dual inequality for
$\sm(y_s^2)$ at $y_s^1$ gives $s_2(\boldsymbol\lambda_s)$.
For the cross payoff, substitute $(y_1^1,x_s^2)$ into
\eqref{eq:properties-stationary}. Using
\eqref{eq:finite-error-old-payoff-loss} gives
\[
 G_s\le\rho(d-\tau_1)
       +\sigma[u_2(y_1^1,y_s^2)-u_2(y_1^1,x_s^2)],
 \qquad
 u_2(y_1^1,y_s^2)\ge G_s/\sigma-(\rho/\sigma)d_1.
\]
This is the lower bound giving $h_2(\boldsymbol\lambda_s)$ in
\eqref{eq:payoff-bound-h}; the symmetric substitution gives $h_1(\boldsymbol\lambda_s)$.
Instantiating the dual inequalities for $\sm(y_s^1)$ at $y_1^2$
and for $\sm(y_s^2)$ at $y_1^1$ gives $S(\boldsymbol\lambda_s),T(\boldsymbol\lambda_s)$,
respectively, by~\eqref{eq:payoff-bound-st}, using $F_1^i>r$ and
these payoff bounds. There is no new stationary error in either
instantiation.

The joint regret comparisons also have favorable error signs.
For arbitrary $x^{1\prime}\in\Delta_{n_1}$ and
$x^{2\prime}\in\Delta_{n_2}$, define
\begin{equation}
 \begin{aligned}
 \ell_1^s&=B_1(x_s^2)-u_1(x^{1\prime},x_s^2),&
 \ell_1^y&=B_1(y_s^2)-u_1(x^{1\prime},y_s^2),\\
 \ell_2^s&=B_2(x_s^1)-u_2(x_s^1,x^{2\prime}),&
 \ell_2^y&=B_2(y_s^1)-u_2(y_s^1,x^{2\prime}).
 \end{aligned}
 \label{eq:finite-error-payoff-losses}
\end{equation}
All four quantities are nonnegative. The column comparison follows
by adding $\eta/\rho$ times stationarity instantiated at
$(x_s^1,x^{2\prime})$ to the dual inequality for $\sm(y_s^1)$
instantiated at the same $x^{2\prime}$, and bounding
$u_1(y_1^1,x^{2\prime})-u_1(x_s^1,x^{2\prime})\le1$.
The corresponding calculation using the dual inequality for
$\sm(y_s^2)$ is symmetric. The resulting inequalities are
\begin{equation}
 \begin{aligned}
 \frac{\eta\sigma}{\rho}\ell_2^s+(1-\eta)\ell_2^y
 &\ge r-\eta(1-G_s/\rho)+\frac{\eta\sigma}{\rho}\tau_2,\\
 \frac{\xi\rho}{\sigma}\ell_1^s+(1-\xi)\ell_1^y
 &\ge r-\xi(1-G_s/\sigma)+\frac{\xi\rho}{\sigma}\tau_1.
 \end{aligned}
 \label{eq:finite-error-joint-losses}
\end{equation}
The last term on each right side is nonnegative, so omitting it
weakens the lower bound and gives exactly the ideal joint
regret comparison evaluated at $g=G_s$.
Applying the same minimum argument as in
\eqref{eq:gap-lower-bounds} therefore bounds the actual Jensen
gaps from below by $j_r(\boldsymbol\lambda_s,q),j_c(\boldsymbol\lambda_s,p)$.

For the rectangle regrets, use \eqref{eq:mixing-regrets} with
$g=G_s+\delta_f$. The endpoint regrets against the stationary
opponent are $\tau_1,\tau_2$, while
\eqref{eq:finite-error-stationary-value} gives the opposite endpoint
bounds $1-G_s/\rho,1-G_s/\sigma$.
The error contributions satisfy, for every $p,q\in[0,1]$,
\begin{equation}
 \begin{aligned}
 \delta_f(1-p)(1-q)+\tau_1p(1-q)
     &\le\delta_f(1-q)\le\delta_f,\\
 \delta_f(1-p)(1-q)+\tau_2(1-p)q
     &\le\delta_f(1-p)\le\delta_f.
 \end{aligned}
 \label{eq:finite-error-rectangle-budget}
\end{equation}
Together with \eqref{eq:finite-error-joint-losses} and the
functions in \eqref{eq:rectangle-upper-bounds}, this gives
\begin{equation}
 \begin{aligned}
 \forall p,q\in[0,1],\quad f_1(\boldsymbol x(p,q))
     &\le R_{G_s}(p,q)+\delta_f(1-q),\\
 \forall p,q\in[0,1],\quad f_2(\boldsymbol x(p,q))
     &\le C_{G_s}(p,q)+\delta_f(1-p).
 \end{aligned}
 \label{eq:finite-error-rectangle-bounds}
\end{equation}
Here $R_{G_s},C_{G_s}$ are the functions $R_g,C_g$ evaluated at
$g=G_s$. If their maximum were at most $r$ at any $(p,q)$,
\eqref{eq:finite-error-rectangle-budget}--\eqref{eq:finite-error-rectangle-bounds}
would give $f(\boldsymbol x(p,q))\le r+\delta_f$, contradicting
\eqref{eq:finite-error-rectangle-failure}.
Thus all four segment predicates \eqref{eq:segment-test} hold
with the common parameter $G_s$.

The remaining ideal necessary inequalities follow from the same
payoff bounds and the exact $\sm$ dual comparisons. In particular,
all six bounds $s_i(\boldsymbol\lambda_s),h_i(\boldsymbol\lambda_s),S(\boldsymbol\lambda_s),T(\boldsymbol\lambda_s)$ are nonnegative,
$U_{G_s},V_{G_s},H_{G_s},K_{G_s}\ge r$, and
$E_{1,G_s},E_{2,G_s}\ge0$.
The shared-envelope calculation \eqref{eq:common-payoff}--\eqref{eq:common-envelope}
uses only these bounds and the same actual $\alpha,\gamma$;
it introduces no further stationary error.
Instantiating the dual inequalities for $\sm(y_s^1)$ at $y_s^2$
and for $\sm(y_s^2)$ at $y_s^1$ also gives the parameter range
\begin{equation}
 r<G_s\le\min\left\{
    \frac{\rho\sigma(1-r)}{D_c},
    \frac{\rho\sigma(1-r)}{D_r}\right\}.
 \label{eq:finite-error-projection-domain}
\end{equation}
Equations~\eqref{eq:finite-error-weight-ranges} and
\eqref{eq:finite-error-projection-domain} meet the hypotheses
\eqref{eq:g-projection-domain}. Since
$(r_*,1/3)\subset[3/10,1/3]$, the proved projection of the common
parameter from $G_s$ to $r$ preserves the four segment predicates
and the other necessary inequalities. At this projected point,
replacing $d_i$ by the uncapped bounds used in~\eqref{eq:scalar-caps}
increases the corresponding upper bounds. This gives the same bare
functions and necessary system $\Sigma_r$ as in the ideal derivation.

The independent unit \nolinkurl{near-BR-delta-ledger}\textsuperscript{\ref{fn:checker-near-BR-delta-ledger}} checks
the specified 34 nonnegative
combinations, seven identities, the two minimum cases in
\eqref{eq:finite-error-old-payoff-loss}, and the source substitutions
with common weights. The mathematical substitutions above and
those exact checks establish
\begin{equation}
 f(\boldsymbol x_{\mathrm{out}})>r+\delta_f+\tau
 \quad\Longrightarrow\quad
 \exists\boldsymbol\theta,\ \Sigma_r(\boldsymbol\theta),
 \qquad r_*<r<1/3.
 \label{eq:finite-error-transfer}
\end{equation}
Equation~\eqref{eq:global-exclusion} contradicts its conclusion.
If $f(\boldsymbol x_{\mathrm{out}})>r_*+\delta_f+\tau$, choose
$r$ strictly between $r_*$ and
$\min\{f(\boldsymbol x_{\mathrm{out}})-\delta_f-\tau,1/3\}$.
This would satisfy \eqref{eq:finite-error-failure}, so
\begin{equation}
 f(\boldsymbol x_{\mathrm{out}})\le r_*+\delta_f+\tau.
 \label{eq:finite-error-bound}
\end{equation}
Taking $\delta_f=\tau=\delta/2$ proves Theorem~\ref{thm:main};
the polynomial running time follows from the fixed composition
in Section~\ref{sec:complete-algorithm}.
If $\delta>1$, normalized regrets are at most one and any profile
suffices. A scalar point at $r_*$ has been checked, but no game
realizing it has been established. The conclusion is an upper
bound on the algorithm's regret, without a tightness claim.


\begin{thebibliography}{99}

\bibitem{nash1951}
John Nash.
\newblock Non-cooperative games.
\newblock \emph{Annals of Mathematics}, 54(2):286--295, 1951.
\newblock \href{https://doi.org/10.2307/1969529}{doi:10.2307/1969529}.

\bibitem{daskalakis2009}
Constantinos Daskalakis, Paul W. Goldberg, and Christos H. Papadimitriou.
\newblock The complexity of computing a Nash equilibrium.
\newblock \emph{SIAM Journal on Computing}, 39(1):195--259, 2009.
\newblock \href{https://doi.org/10.1137/070699652}{doi:10.1137/070699652}.

\bibitem{chen2009}
Xi Chen, Xiaotie Deng, and Shang-Hua Teng.
\newblock Settling the complexity of computing two-player Nash equilibria.
\newblock \emph{Journal of the ACM}, 56(3):14:1--14:57, 2009.
\newblock \href{https://doi.org/10.1145/1516512.1516516}{doi:10.1145/1516512.1516516}.

\bibitem{lipton2003}
Richard J. Lipton, Evangelos Markakis, and Aranyak Mehta.
\newblock Playing large games using simple strategies.
\newblock In \emph{Proceedings of the 4th ACM Conference on Electronic
Commerce}, pages 36--41, 2003.
\newblock \href{https://doi.org/10.1145/779928.779933}{doi:10.1145/779928.779933}.

\bibitem{babichenko2017}
Yakov Babichenko, Siddharth Barman, and Ron Peretz.
\newblock Empirical distribution of equilibrium play and its testing application.
\newblock \emph{Mathematics of Operations Research}, 42(1):15--29, 2017.
\newblock \href{https://doi.org/10.1287/moor.2016.0794}{doi:10.1287/moor.2016.0794}.

\bibitem{rubinstein2016}
Aviad Rubinstein.
\newblock Settling the complexity of computing approximate two-player Nash
equilibria.
\newblock In \emph{Proceedings of the 57th IEEE Symposium on Foundations of
Computer Science}, pages 258--265, 2016.
\newblock \href{https://doi.org/10.1109/FOCS.2016.35}{doi:10.1109/FOCS.2016.35}.

\bibitem{survey2024}
Hanyu Li, Wenhan Huang, Zhijian Duan, David Henry Mguni, Kun Shao, Jun Wang,
and Xiaotie Deng.
\newblock A survey on algorithms for Nash equilibria in finite normal-form games.
\newblock \emph{Computer Science Review}, 51:100613, 2024.
\newblock \href{https://doi.org/10.1016/j.cosrev.2023.100613}{doi:10.1016/j.cosrev.2023.100613}.

\bibitem{kontogiannis2009}
Spyros C. Kontogiannis, Panagiota N. Panagopoulou, and Paul G. Spirakis.
\newblock Polynomial algorithms for approximating Nash equilibria of
bimatrix games.
\newblock \emph{Theoretical Computer Science}, 410(17):1599--1606, 2009.
\newblock \href{https://doi.org/10.1016/j.tcs.2008.12.033}{doi:10.1016/j.tcs.2008.12.033}.

\bibitem{daskalakisnote2009}
Constantinos Daskalakis, Aranyak Mehta, and Christos H. Papadimitriou.
\newblock A note on approximate Nash equilibria.
\newblock \emph{Theoretical Computer Science}, 410(17):1581--1588, 2009.
\newblock \href{https://doi.org/10.1016/j.tcs.2008.12.031}{doi:10.1016/j.tcs.2008.12.031}.

\bibitem{daskalakisprogress2007}
Constantinos Daskalakis, Aranyak Mehta, and Christos H. Papadimitriou.
\newblock Progress in approximate Nash equilibria.
\newblock In \emph{Proceedings of the 8th ACM Conference on Electronic
Commerce}, pages 355--358, 2007.
\newblock \href{https://doi.org/10.1145/1250910.1250962}{doi:10.1145/1250910.1250962}.

\bibitem{bbm2010}
Hartwig Bosse, Jaroslaw Byrka, and Evangelos Markakis.
\newblock New algorithms for approximate Nash equilibria in bimatrix games.
\newblock \emph{Theoretical Computer Science}, 411(1):164--173, 2010.
\newblock \href{https://doi.org/10.1016/j.tcs.2009.09.023}{doi:10.1016/j.tcs.2009.09.023}.

\bibitem{ts2008}
Haralampos Tsaknakis and Paul G. Spirakis.
\newblock An optimization approach for approximate Nash equilibria.
\newblock \emph{Internet Mathematics}, 5(4):365--382, 2008.
\newblock \href{https://doi.org/10.1080/15427951.2008.10129172}{doi:10.1080/15427951.2008.10129172}.

\bibitem{dfm2023}
Argyrios Deligkas, Michail Fasoulakis, and Evangelos Markakis.
\newblock A polynomial-time algorithm for 1/3-approximate Nash equilibria in
bimatrix games.
\newblock \emph{ACM Transactions on Algorithms}, 19(4):31:1--31:17, 2023.
\newblock \href{https://doi.org/10.1145/3606697}{doi:10.1145/3606697}.

\bibitem{tightness2023}
Zhaohua Chen, Xiaotie Deng, Wenhan Huang, Hanyu Li, and Yuhao Li.
\newblock On tightness of Tsaknakis--Spirakis descent methods for approximate
Nash equilibria.
\newblock \emph{Information and Computation}, 293:105046, 2023.
\newblock \href{https://doi.org/10.1016/j.ic.2023.105046}{doi:10.1016/j.ic.2023.105046}.

\bibitem{computeraided2026}
Xiaotie Deng, Dongchen Li, and Hanyu Li.
\newblock A computer-aided approach for approximate Nash equilibria.
\newblock In \emph{Proceedings of the 20th International Conference on Web
and Internet Economics (WINE 2024)}, volume 15534 of \emph{Lecture Notes in
Computer Science}, pages 314--331, 2026.
\newblock \href{https://doi.org/10.1007/978-3-032-08560-3_18}{doi:10.1007/978-3-032-08560-3\_18}.

\bibitem{legone2026}
Hanyu Li, Dongchen Li, and Xiaotie Deng.
\newblock Discovering expert-level Nash equilibrium algorithms with large
language models.
\newblock \emph{Nature Communications}, 17:7168, 2026.
\newblock \href{https://doi.org/10.1038/s41467-026-74003-1}{doi:10.1038/s41467-026-74003-1}.

\bibitem{hemon2008}
S{\'e}bastien H{\'e}mon, Michel de Rougemont, and Miklos Santha.
\newblock Approximate Nash equilibria for multi-player games.
\newblock In \emph{Proceedings of the 1st International Symposium on
Algorithmic Game Theory}, volume 4997 of \emph{Lecture Notes in Computer
Science}, pages 267--278, 2008.
\newblock \href{https://doi.org/10.1007/978-3-540-79309-0_24}{doi:10.1007/978-3-540-79309-0\_24}.

\bibitem{automating2025}
Xiaotie Deng, Dongchen Li, and Hanyu Li.
\newblock Automating approximation analysis for Nash equilibria algorithms in
two-player games.
\newblock \emph{Information and Computation}, 307:105362, 2025.
\newblock \href{https://doi.org/10.1016/j.ic.2025.105362}{doi:10.1016/j.ic.2025.105362}.

\bibitem{mixing2025}
Xiaotie Deng, Dongchen Li, and Hanyu Li.
\newblock On the optimal mixing problem of approximate Nash equilibria in
bimatrix games.
\newblock \emph{Theoretical Computer Science}, 1031:115072, 2025.
\newblock \href{https://doi.org/10.1016/j.tcs.2025.115072}{doi:10.1016/j.tcs.2025.115072}.

\bibitem{gondauri2026}
Davit Gondauri.
\newblock A deterministic polynomial-time approximation below one third for
bimatrix games.
\newblock Preprint, 2026.
\newblock \href{https://zenodo.org/records/23077766}{Zenodo record 23077766}.

\bibitem{boyd2004}
Stephen Boyd and Lieven Vandenberghe.
\newblock \emph{Convex Optimization}.
\newblock Cambridge University Press, 2004.
\newblock \href{https://web.stanford.edu/~boyd/cvxbook/}{Author-hosted book and supplementary materials}.

\bibitem{basu2006}
Saugata Basu, Richard Pollack, and Marie-Fran\c{c}oise Roy.
\newblock \emph{Algorithms in Real Algebraic Geometry}, second edition.
\newblock Algorithms and Computation in Mathematics, volume 10.
Springer, 2006.
\newblock \href{https://doi.org/10.1007/3-540-33099-2}{doi:10.1007/3-540-33099-2}.

\bibitem{scipy2020}
Pauli Virtanen, Ralf Gommers, Travis E. Oliphant, et al.
\newblock SciPy 1.0: fundamental algorithms for scientific computing in Python.
\newblock \emph{Nature Methods}, 17:261--272, 2020.
\newblock \href{https://doi.org/10.1038/s41592-019-0686-2}{doi:10.1038/s41592-019-0686-2}.

\bibitem{kraft1988}
Dieter Kraft.
\newblock A software package for sequential quadratic programming.
\newblock Technical Report DFVLR-FB 88-28, Institut f\"ur Dynamik der
Flugsysteme, DFVLR, Oberpfaffenhofen, 1988.
\newblock \href{https://degenerateconic.com/uploads/2018/03/DFVLR_FB_88_28.pdf}{Technical report}.

\bibitem{nocedal2006}
Jorge Nocedal and Stephen J. Wright.
\newblock \emph{Numerical Optimization}, second edition.
\newblock Springer, 2006.
\newblock \href{https://doi.org/10.1007/978-0-387-40065-5}{doi:10.1007/978-0-387-40065-5}.

\bibitem{moore2009}
Ramon E. Moore, R. Baker Kearfott, and Michael J. Cloud.
\newblock \emph{Introduction to Interval Analysis}.
\newblock Society for Industrial and Applied Mathematics, 2009.
\newblock \href{https://doi.org/10.1137/1.9780898717716}{doi:10.1137/1.9780898717716}.

\bibitem{krawczyk1969}
R. Krawczyk.
\newblock Newton-Algorithmen zur Bestimmung von Nullstellen mit Fehlerschranken.
\newblock \emph{Computing}, 4:187--201, 1969.
\newblock \href{https://doi.org/10.1007/BF02234767}{doi:10.1007/BF02234767}.

\bibitem{cox2015}
David A. Cox, John Little, and Donal O'Shea.
\newblock \emph{Ideals, Varieties, and Algorithms: An Introduction to
Computational Algebraic Geometry and Commutative Algebra}, fourth edition.
\newblock Springer, 2015.
\newblock \href{https://doi.org/10.1007/978-3-319-16721-3}{doi:10.1007/978-3-319-16721-3}.

\bibitem{sympy2017}
Aaron Meurer, Christopher P. Smith, Mateusz Paprocki, et al.
\newblock SymPy: symbolic computing in Python.
\newblock \emph{PeerJ Computer Science}, 3:e103, 2017.
\newblock \href{https://doi.org/10.7717/peerj-cs.103}{doi:10.7717/peerj-cs.103}.

\bibitem{python3}
Python Software Foundation.
\newblock \emph{Python 3 Documentation}.
\newblock \href{https://docs.python.org/3/}{Official documentation};
\href{https://docs.python.org/3/library/fractions.html}{\nolinkurl{fractions}---rational numbers}.
\newblock Accessed October 4, 2026.

\bibitem{garloff1986}
J\"urgen Garloff.
\newblock Convergent bounds for the range of multivariate polynomials.
\newblock In Karl Nickel, editor, \emph{Interval Mathematics 1985},
volume 212 of \emph{Lecture Notes in Computer Science}, pages 37--56.
Springer, 1986.
\newblock \href{https://doi.org/10.1007/3-540-16437-5_5}{doi:10.1007/3-540-16437-5\_5}.

\bibitem{numpy2020}
Charles R. Harris, K. Jarrod Millman, St{\'e}fan J. van der Walt, et al.
\newblock Array programming with NumPy.
\newblock \emph{Nature}, 585:357--362, 2020.
\newblock \href{https://doi.org/10.1038/s41586-020-2649-2}{doi:10.1038/s41586-020-2649-2}.

\bibitem{rudin1976}
Walter Rudin.
\newblock \emph{Principles of Mathematical Analysis}, third edition.
\newblock McGraw-Hill, 1976.
\newblock \href{https://www.mheducation.co.uk/principles-of-mathematical-analysis-int-l-ed-9780070856134-emea}{Publisher's book page}.

\end{thebibliography}
\end{document}